\numberwithin{equation}{section}
\newtheorem{theorem}[equation]{Theorem}
\newtheorem{corollary}[equation]{Corollary}
\newtheorem{lemma}[equation]{Lemma}
\newtheorem{remark}[equation]{Remark}
\newtheorem{proposition}[equation]{Proposition}
\newtheorem{definition}[equation]{Definition}
\Crefname{defn}{Defn.}{Defns.}
\Crefname{lemma}{Lem.}{Lems.}
\Crefname{alg}{Alg.}{Algs.}
\definecolor{ltgray}{gray}{0.9}
\definecolor{gray}{rgb}{0.95,0.95,0.96}
\definecolor{dkgray}{rgb}{0.7,0.7, 0.735}
\definecolor{ltblue}{rgb}{0.55,0.55, 0.95}
\definecolor{ltGreen}{rgb}{0.25,0.65, 0.25}
\definecolor{dkgreen}{RGB}{0, 100, 0}
\definecolor{dkred}{rgb}{0.75,0.0, 0.0}
\definecolor{dkorange}{rgb}{.82,.45, 0.0}
\definecolor{ltred}{rgb}{0.95,0.95, 0.85}
\definecolor{utahRed}{rgb}{.8, 0, 0}
\definecolor{oregonGreen}{rgb}{0, .41, .163}
\definecolor{albanyPurple}{rgb}{0.4, 0, .55}
 \newcommand{\fullVerRef}[1]{\cref{#1}}
\newcommand{\R}{\mathbb{R}}
\newcommand{\X}{{\mathbb X}}
\newcommand{\Y}{{\mathbb Y}}
\newcommand{\CC}{\mathcal{C}}
\newcommand{\DD}{\mathcal{D}}
\newcommand{\End}{\mathbf{End}}
\newcommand{\e}{\varepsilon}
\renewcommand{\phi}{\varphi}
\newcommand{\inv}{^{-1}}
\newcommand{\1}{\mathbbm{1}}
\renewcommand{\Im}{\mathrm{Im}}
\newcommand{\id}{\mathrm{Id}}
\newcommand{\Id}{\id}
\newcommand{\from}{\colon}
\newcommand{\after}{\circ}
\newcommand{\Reeb}{\ensuremath{\mathbf{Reeb}}}
\begin{document}
\title{A family of metrics from the truncated smoothing of Reeb graphs
 \thanks{EC was supported in part by NSF grants CCF-1614562, CCF-1907612, and DBI-1759807.
 EM and TO were supported in part by NSF grant CCF-1907591.
 EM was additionally supported in part by DEB-1904267.
 }
}

\author[1]{Erin Wolf Chambers}
\author[2,3]{Elizabeth Munch}
\author[2]{Tim Ophelders}
\affil[1]{Dept.~of Computer Science, St.~Louis University}
\affil[2]{Dept.~of Computational Mathematics, Science and Engineering, Michigan State University}
\affil[3]{Dept.~of Mathematics, Michigan State University}

\date{}
\maketitle              %

\begin{abstract}
In this paper, we introduce an extension of smoothing on Reeb graphs,
which we call truncated smoothing; this in turn allows us to define a new family of metrics which generalize the interleaving distance for Reeb graphs.
Intuitively, we ``chop off'' parts near local minima and maxima during the course of smoothing, where the amount cut is controlled by a parameter $\tau$.
After formalizing truncation as a functor, we show that when applied after the smoothing functor, this prevents extensive expansion of the range of the function, and yields particularly nice properties (such as maintaining connectivity) when combined with smoothing for $0 \leq \tau \leq 2\e$, where $\e$ is the smoothing parameter.
Then, for the restriction of $\tau \in [0,\e]$, we have additional structure which we can take advantage of to construct a categorical flow for any choice of slope $m \in [0,1]$.
Using the infrastructure built for a category with a flow, this then gives an interleaving distance for every $m \in [0,1]$, which is a generalization of the original interleaving distance, which is the case $m=0$.
While the resulting metrics are not stable, we show that any pair of these for $m,m' \in [0,1)$ are strongly equivalent metrics, which in turn gives stability of each metric up to a multiplicative constant.
We conclude by discussing implications of this metric within the broader family of metrics for Reeb graphs.

\end{abstract}

\section{Introduction}

The Reeb graph, originally defined in the context of Morse theory \cite{Reeb1946}, represents a portion of the underlying structure of a topological space $\X$ through the lens of a real valued function $h: \X \to \R$; the pair of data $(\X,h)$ is known as an $\R$-space.
Specifically, points in the Reeb graph correspond to connected components in the levelsets of the function; as such, the Reeb graph inherits a real valued function from the original input data.
For nice enough inputs, the resulting object is a finite graph.
So, at its core, we focus our  study on objects of the form $(G,f)$ where $G$ is a graph and $f: G \to \R$ is a  function given on vertices and interpolated linearly on the edges.
See \cref{fig:smoothingOperation} for an example.

Reeb graphs have become increasingly useful in a wide range of applications, including settings such as
shape comparison \cite{Hilaga2001, Escolano2013},
denoising \cite{Wood2004},
shape understanding \cite{Dey2013,Hetroy2003},
reconstructing non-linear 1-dimensional structure in data \cite{Natali2011,Ge2011,Chazal2014,Szymczak2011},
summarizing collections of trajectory data \cite{Buchin2013},
and allowing for informed exploration of otherwise hard-to-visualize high-dimensional data~\cite{Weber2007, Harvey2010a}; see~\cite{Biasotti2008} for a survey of these and more topics.  
As a result, there is interest in defining metrics on these objects, to evaluate their quality in the face of noisy input data as well as to allow for more accurate shape comparison and analysis.
In this setting, we are focused on metrics that incorporate both the graph and function information: so $d((G,f_1),(G,f_2))$ should be non-zero if $f_1 \neq f_2$ even though they are defined on the same underlying graphs.

\begin{figure}
    \centering
    \includegraphics[page=1]{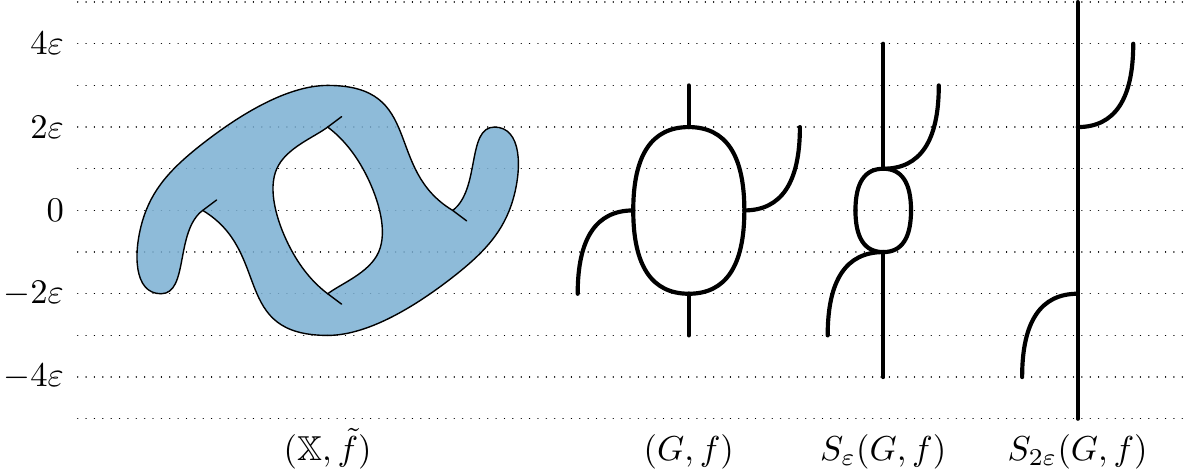}
    \caption{From left to right: an $\R$-space $(\X,\tilde f)$, its Reeb graph $(G,f)$, smoothings are shown for two parameters, $\e$ and $2\e$. Function values are shown by height.}
    \label{fig:smoothingOperation}
\end{figure}

Several metrics have arisen recently to do this, taking inspiration from different mathematical backgrounds \cite{deSilva2016,Bauer2015b,Bauer2020a,Carriere2017,DiFabio2012,DiFabio2016,Bauer2016c,Sridharamurthy2018,Bauer2014}.
In this paper, we focus on the Reeb graph interleaving distance \cite{deSilva2016}.
The basic idea is to work with a notion of smoothing, which returns a parameterized family of Reeb graphs, $S_\e(G,f)$ for every $\e \geq 0$, starting with $\e=0$ which leaves the input unchanged.
This procedure simplifies the loop structures and stretches tails \cite{Yuda2019}; see \cref{fig:smoothingOperation} for an example.
Then the goal is to find an $\e$-interleaving, which is a pair of families of maps making a particular diagram commute.
If $\e=0$, this diagram simplifies down to finding an isomorphism between the two Reeb graphs; increasing $\e$ provides more flexibility to find such pairs of maps.
Then we have a metric by defining $d_I((G,f),(H_h))$ to be the infimum over the set of $\e$ for which such a diagram exists.

This metric takes root in the interleaving distance defined for persistence modules \cite{Chazal2009b}, and is largely inspired by the subsequent category theoretic treatment \cite{Bubenik2014,Bubenik2014a}.
This viewpoint comes from encoding the data of a Reeb graph in a constructible set-valued cosheaf \cite{Curry2014,Curry2016b}.
It was later shown that these metrics are special cases of a more general theory of interleaving distances given on a \textit{category with a flow} \cite{deSilva2018,Stefanou2018, Cruz2019}.
This framework encompases common metrics including $\ell_\infty$ distance on points or functions, regular Hausdorff distance, and the Gromov-Hausdorff distance \cite{Stefanou2018,Bubenik2017a}.
Using this framework, interleaving metrics have been studied in the context of
$\R$-spaces \cite{Blumberg2017},
multiparameter persistence modules \cite{Lesnick2015},
merge trees \cite{Morozov2013}, and
formigrams \cite{Kim2017,Kim2019a},
and on more general category theoretic constructions \cite{Botnan2020,Scoccola2020a}.
There are also interesting restrictions to labeled merge trees, where one can pass to a matrix representation and show that the interleaving distance is equivalent to the point-wise $\ell_\infty$ distance
\cite{Munch2019,Gasparovic2019,Yan2019a,Stefanou2020}.

On the negative side, it has been shown that Reeb graph interleaving is graph isomorphism complete \cite{deSilva2016,Bjerkevik2017}, and that many other variants are also NP-hard \cite{Bjerkevik2017,Bjerkevik2019}.  All of this means that these metrics, while mathematically interesting, may not lead to feasible algorithms for comparison and analysis.
However, a glimmer of hope arises with work investigating fixed parameter tractable algorithms \cite{Touli2018,Stefanou2020}.
Despite the issues of computational complexity, notions of similarity for graphs in general, and Reeb graphs in particular, are of pressing interest due to their extensive use in data analysis; in many such settings, we are concerned with questions of quality in the face of noise, and understanding convergence of approximations to a true underlying structure.
For example, the interleaving distance has been used in evaluating the quality of the mapper graph \cite{Singh2007}, which can be proven to be a approximation of the Reeb graph using this metric \cite{Munch2016,Brown2019}.
Furthermore, there is considerable interest in unifying the interleaving distance with the emerging collection of other Reeb graph metrics.

In this paper, we introduce a truncation operation, which intuitively cuts off portions of the Reeb graph near local extrema with respect to $f$; this operation is easy to compute for any Reeb graph and tends to result in a simplified Reeb graph.
We show that truncation is a functor, and when combined with the smoothing functor, defines a flow on the the category of Reeb graphs.
We investigate and prove particularly desirable geometric and topological properties of \emph{truncated smoothing} for certain ranges of the two parameters controlling the functors.
We then introduce a new family of metrics for Reeb graphs, called truncated interleaving distances.
They are parameterized by $m \in [0,1]$, and generalize the interleaving distance, with the setting $m=0$ being the original interleaving distance.
We show that the metrics arising from $m \in [0,1)$ are strongly equivalent.
Although the metrics are not stable in the sense of \cite{Bauer2020a}, strong equivalence implies that they are at least stable up to a constant.

When combined with preliminary work on geometric implications of smoothing~\cite{Yuda2019}, truncated smoothing  is interesting in its own right, as it provides a collection of paths for Reeb graph space to be studied in terms of the resulting persistence diagrams. It also is useful when considering algorithms to test planarity for Reeb graphs, or find planar representations of them.
The new family of metrics also provide the possibility for new approaches for approximation algorithms for the interleaving distance, as well as
new avenues for further unification of the broader family of Reeb graph metrics.

\subparagraph{Outline}
We give the basic background on Reeb graphs, smoothing, and the Reeb graph interleaving distance in \cref{ssec:Background}, with more complete details on categories and interleavings given in \fullVerRef{sec:CategoriesBackground}. 
Next, we introduce our definition of truncated smoothing in \cref{sec:TruncatedSmoothing}; again,  some alternative formulations and full justification are in \fullVerRef{sec:EquivalentDefinitions}. 
In \cref{ssec:PropertiesOfTruncationSurvey} we check properties of the truncated smoothing operation, with complete proofs given in \cref{sec:properties_of_truncation}. 
Then we take a categorical view of truncated smoothing to develop a family of metrics and investigate their properties in \cref{ssec:truncInterleavingSurvey,ss:propertiesOfMetric}, including the relevant technicalities in \cref{sec:mapsAndProperties,sec:IsomorphismIssues,sec:categorical_flow,sec:appendix_strongequivmetrics}. 
Finally, some implications of our work as well as possible future directions are discussed in  \cref{sec:Conclusions}.

\section{Overview of results}

In this section, we give an  overview of the main definitions (both new and old), as well as stating the properties and implications of these results.
We reserve several alternate constructions, more technical background material, and many of the proofs for later sections.

\subsection{Background: Reeb graphs, smoothing, and  interleaving}
\label{ssec:Background}
Given a topological space $\X$ along with a continuous $\R$-valued function $f\from\X\to\R$, we call the pair $(\X,f)$ an \emph{$\R$-space}.
For two $\R$-spaces $(\X,f)$ and $(\X',g)$, we call a continuous map $\phi\from\X\to\X'$ \emph{function-preserving} if $f=g\circ\phi$, and write $\phi\from(\X,f)\to(\X',g)$ in that case.

For an $\R$-space $(\X,f)$, we define an equivalence relation $\sim_f$ on the points of $\X$, such that $x\sim_f x'$ if and only if $x$ and $x'$ lie in the same path-connected component of $f\inv(y)$ for some $y\in\R$.
For sufficiently nice functions\footnote{e.g.~A Morse function on a manifold, or a constructible space and function \cite{deSilva2016}, or a space with a levelset-tame function \cite{Dey2013a}.}, the quotient space $\X/\!\!\sim_f$ is a graph, called a \emph{Reeb graph}, and we denote the quotient map by $q_f\from(\X,f)\to(\X/\!\!\sim_f,g)$.
Since $f(x)=f(x')$ whenever $x\sim_f x'$, we can treat the Reeb graph as an \emph{$\R$}-space $(X/\!\!\sim_f,g)$ by defining $g(q_f(x))=f(x)$, so that $q_f$ is function-preserving.
Most but not all functions in this paper are function preserving.
\cref{fig:smoothingOperation} illustrates the construction of a Reeb graph of an $\R$-space.

For the purposes of this work, we will largely divorce the idea of the Reeb graph from the need for a starting space that was used to construct it.
Thus for our purposes, a \emph{Reeb graph} is a pair $(G,f)$ where $G=(V_G,E_G)$ is a finite multigraph and $f\from G\to\R$, referred to as the \emph{height function}, is a continuous map that is linearly interpolated along edges of $G$, and for which no two neighboring vertices have the same function value.
We write $\Im(G,f) := f(G) \subset \R$ for the image of the graph in $\R$. 
The function can equivalently be stored by defining $f\from V_G \to \R$ as a function on the vertices, and extending it to the edges implicitly.
We treat $G$ as a topological space, so that a point $x\in G$ lies either on a vertex of $G$, or interior to an edge of $G$.
For succinctness, we also write $x \in (G,f)$ to mean $x\in G$.
Since no two adjacent vertices have the same function value, a level set $f\inv(y) $ for  $y\in\R$ is a finite set of points in $G$ which could be vertices and/or points in the interior edges.

Together, the collection of Reeb graphs (treated as $\R$-spaces) with function-preserving maps as morphisms forms a category, $\Reeb$.
For the reader without a background in category theory, the basic idea is that that this collection of objects and morphisms satisfy some basic axiomatic structures that make their analysis easier to view as a collection.
It also makes available the viewpoint of \emph{functors} between categories, which are essentially structure preserving maps.
For now, we will largely hand-wave past the categorical constructions, and defer the technicalities to \cref{sec:CategoriesBackground}.

\begin{figure}
    \centering
    \includegraphics[page=2]{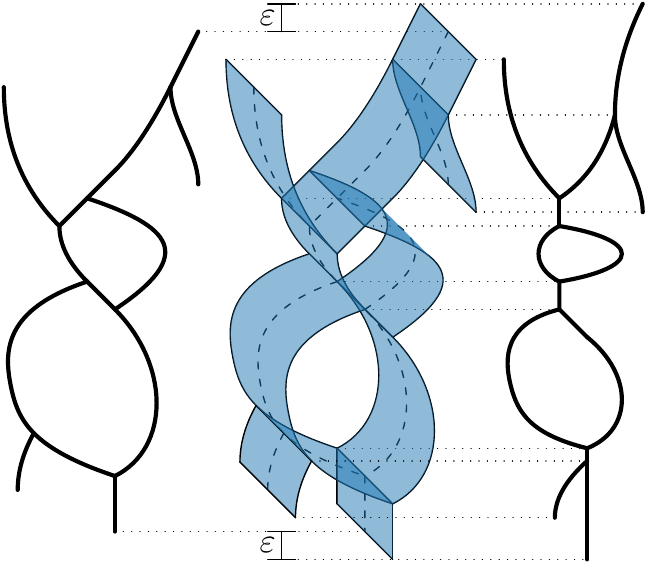}
    \caption{
    Left: the up-set (red) and down-set (blue) of a point.
    Although the up-set is a tree, it is not an up-tree as it contains down-forks of the ambient graph.
    Right: the sets $U_\delta$ and~$D_\delta$ of points with no length $\delta$ up-path or down-path, respectively.
    The leftmost component of $D_\delta$ does not contain the down-fork.
    }
    \label{fig:UpAndDownForests}
\end{figure}
Define a \emph{path} from $x$ to $x'$ in $(G,f)$ to be a continuous map $\pi\from[0,1]\to G$ such that $\pi(0)=x$ and $\pi(1)=x'$.
A path is called an \textit{up-path} if it is monotone-increasing with respect to the function, i.e.~$f(\pi(t)) \leq f(\pi(t'))$ for $t \leq t'$.
Symmetrically, a path is a \textit{down-path} if it is monotone-decreasing.
In the case of an up- or down-path $\pi$, we call $|f(\pi(0)) - f(\pi(1))|$ the \textit{height} of the path.

In a Reeb graph $(G,f)$, let the \emph{up-paths} of a point $x$ be the set of $f$-monotone paths that have $x$ as minimum.
The \emph{up-set} of a point $x$ is the set of points reachable from $x$ by an up-path, including $x$ itself.
Define an \emph{up-fork} to be a vertex $x$ whose up-set contains at least two edges adjacent to $x$.
We define \emph{down-paths}, \emph{down-sets}, and \emph{down-forks} symmetrically.
Call the up-set of a point $x$ an \emph{up-tree} if it contains no down-forks of $(G,f)$, and say that $x$ \emph{roots} an up-tree in such case.
The concept of rooting a \emph{down-tree} is defined symmetrically.
See \cref{fig:UpAndDownForests}.
    \begin{definition}\label{defn:originalSmoothing}
        Fix a  Reeb graph $(G,f)$ and $\e\geq 0$.
        Define the \emph{$\e$-thickening} of $G$ to be the space
        $G\times[-\e,\e]$
        with the product topology,
        and define
        $(f+\Id)\from G \times [-\e,\e]\to\R$ by
        $(f+\Id)(x,t)=f(x)+t$.
        We define the \emph{$\e$-smoothing} $S_\e(G,f)$ to be the Reeb graph of $(f+\Id) $, and denote the corresponding quotient map by $q \from G \times [-\e,\e] \to S_\e(G,f)$.
        The composition of $q$ with the the inclusion $G \hookrightarrow G \times [-\e,\e]; x \mapsto (x,0)$ is denoted 
        $\eta = q \circ  (\Id, 0) $.
    \end{definition}

    \begin{figure}
    \centering
        \includegraphics[page=1,width = 2.1in]{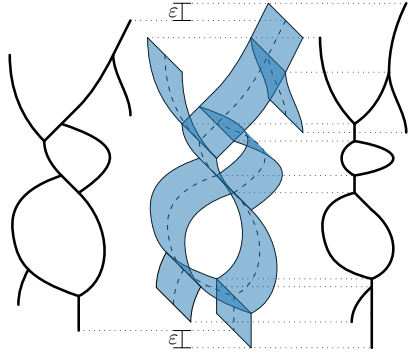}
        \caption{From left to right: a Reeb graph $(G,f)$, its $\e$-thickening $(G \times [-\e,\e],f+\id)$, and the Reeb graph $S_\e(G,f)$ of the $\e$-thickening.
        The product of an edge with an interval is drawn to reflect the function value at a given height. }%
        \label{fig:thickening}
    \end{figure}
See \cref{fig:thickening} for an example. In essence, smoothing eliminates small cycles whose height is $\le 2\e$, and shrinks all other cycles; it also moves every up-fork and local maximum up and every down-fork and local minimum down.  Under the lens of studying the topology of the graph (and in turn the original space), this serves as a functor that can be used to remove noise and simplify topology in a parameterized fashion.

The smoothing construction, $S_\e$, holds quite a bit more useful structure as not only is it a functor, it is an example of a flow \cite{deSilva2018}.
While we do not provide the full definition here, the specifics are given in \fullVerRef{ssec:categories}.
In particular, this comes from using the additional structure afforded by the function preserving map $\eta\from(G,f) \to S_\e(G,f)$.
We will reserve the full investigation of $\eta$ until \fullVerRef{ssec:Reebinterleaving}, but will use the following property of categories with a flow. 

\begin{theorem}
[{\cite[Thm.~2.7]{deSilva2018}}]
\label{thm:flowGivesMetric}
A category with a flow gives rise to an interleaving distance on the objects of the category;  specifically, this construction is an extended pseudometric.
\end{theorem}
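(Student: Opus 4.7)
The plan is to unpack the definition of a category with a flow and verify each extended pseudometric axiom directly from the coherence data. A flow on $\CC$ supplies a one-parameter family of endofunctors $\{T_\e\}_{\e\geq 0}$ together with natural transformations $\eta_\e\from\Id\Rightarrow T_\e$, satisfying $T_0 = \Id$, $T_\e \after T_{\e'} \cong T_{\e+\e'}$, and a compatibility of the form $\eta_{\e+\e'} = T_\e\eta_{\e'} \after \eta_\e$. An $\e$-interleaving between objects $X,Y$ is then a pair of morphisms $\phi\from X \to T_\e Y$ and $\psi\from Y \to T_\e X$ such that $T_\e\psi \after \phi = \eta_{2\e}^X$ and $T_\e\phi \after \psi = \eta_{2\e}^Y$, and I would set
\[
d_I(X,Y) \;:=\; \inf\{\e \geq 0 : X, Y \text{ are $\e$-interleaved}\},
\]
with the convention $\inf\emptyset = +\infty$. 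Non-negativity and the possibility of the value $+\infty$ are then immediate.

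Next I would dispatch reflexivity and symmetry. For $d_I(X,X)=0$, the pair $\phi = \psi = \id_X$ is a $0$-interleaving because $T_0 = \Id$ and $\eta_0^X = \id_X$ reduce the interleaving equations to tautologies. Symmetry holds because swapping $\phi$ and $\psi$ in an $\e$-interleaving of $X$ with $Y$ yields an $\e$-interleaving of $Y$ with $X$, and hence the sets over which we take infima coincide.

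The main work will be the triangle inequality. Given an $\e$-interleaving $(\phi_1,\psi_1)$ between $X$ and $Y$ and a $\delta$-interleaving $(\phi_2,\psi_2)$ between $Y$ and $Z$, I plan to construct an $(\e+\delta)$-interleaving between $X$ and $Z$ by composing through $Y$:
\[
\Phi \;=\; T_\e \phi_2 \after \phi_1 \from X \to T_{\e+\delta}Z,\qquad \Psi \;=\; T_\delta \psi_1 \after \psi_2 \from Z \to T_{\e+\delta}X.
\]
To verify $T_{\e+\delta}\Psi \after \Phi = \eta_{2(\e+\delta)}^X$, I would apply $T_\e$ to the second interleaving identity, splice it in between $\phi_1$ and $\psi_1$ via the first interleaving identity, and re-assemble the resulting composite $\eta$'s into a single $\eta_{2(\e+\delta)}^X$ using naturality of $\eta$ and the coherence rule $\eta_{a+b} = T_a\eta_b \after \eta_a$. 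The symmetric equation is analogous. Taking infima over the interleaving parameters—using arbitrary $\e > d_I(X,Y)$ and $\delta > d_I(Y,Z)$ to avoid any issue with non-attainment—then gives $d_I(X,Z)\leq d_I(X,Y) + d_I(Y,Z)$.

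The main obstacle will be organizing the diagram chase so that the flow axioms are invoked in the correct order; this step is where the difference between a mere one-parameter family of functors and an honest flow matters, since the additivity of $\eta$ and its compatibility with $T$ are exactly what makes the composite $(\Phi,\Psi)$ satisfy the interleaving identities. Once those squares are labelled by the corresponding axiom, the remainder is bookkeeping, and the statement that $d_I$ is an extended pseudometric follows.
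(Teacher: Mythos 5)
The paper does not actually prove \cref{thm:flowGivesMetric}; it states it as an imported result and cites \cite[Thm.~2.7]{deSilva2018}. So there is no ``paper's own proof'' to compare against. Your sketch is the standard argument that the cited source uses: reflexivity via the zero-interleaving, symmetry by swapping the pair $(\phi,\psi)$, and the triangle inequality by composing an $\e$-interleaving of $X,Y$ with a $\delta$-interleaving of $Y,Z$ to produce an $(\e+\delta)$-interleaving of $X,Z$, then passing to the infimum over $\e>d_I(X,Y)$ and $\delta>d_I(Y,Z)$. That is the right skeleton, and I don't see a structural gap.

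Two small cautions on the bookkeeping you defer. First, the flow axiom in \cref{defn:Flow} is $F_0\cong\1_\CC$, not $F_0=\1_\CC$, so the zero-interleaving is not literally $\phi=\psi=\id_X$: you must thread the isomorphism $\iota_X\from X\to F_0 X$ through each equation, and likewise every use of $T_aT_b\cong T_{a+b}$ in the triangle-inequality chase is an isomorphism that has to be carried along coherently. Second, in verifying $T_{\e+\delta}\Psi\after\Phi=\eta_{2(\e+\delta)}^X$, you will find yourself needing to move a term of the form $T_\e\eta_{2\delta}^{Y}$ past $\phi_1$; naturality of $\eta_{2\delta}$ only directly gives you how to move $\eta_{2\delta}^{T_\e Y}$, and the identification $T_\e\eta_{2\delta}^{Y}=\eta_{2\delta}^{T_\e Y}$ (up to the structure isomorphisms) is precisely one of the coherence conditions on a flow in \cite{deSilva2018} that the present paper deliberately suppresses (\cref{remark:abusingNotation}). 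You correctly flagged that ``the additivity of $\eta$ and its compatibility with $T$'' are where the real content lies; just be aware that this compatibility is not a single equation but a small family of commuting pentagons/triangles, and the published proof invokes them explicitly rather than as an afterthought. With those coherence squares labelled, your composition $(\Phi,\Psi)$ does close up, and the infimum argument finishes the extended pseudometric claim.
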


This construction is quite useful since simply by finding some relatively easy to check structure on a category, we immediately get  a distance measure on the objects.
Depending on the category and flow, this construction encompasses many standard metrics such as the Hausdorff distance; and with a choice of other categories and flows we can construct new metrics.
We are particularly interested in the special case of the interleaving distance for Reeb graphs as studied in \cite{deSilva2016}. 

\begin{definition}
An \emph{$\e$-interleaving} with respect to $S_\e$ is a pair of maps, $\phi\from (G,f) \to S_\e(H,h)$ and $\psi\from (H,h) \to S_\e(G,f)$ such that the diagram 
\begin{equation*}
\begin{tikzcd}[column sep=5em, row sep = huge]
    (G,f) \ar[r, "\eta"] \ar[dr,OliveGreen, "\phi"', very near start]
    & S_\e(G,f) \ar[r, "{S_\e[\eta]}"] \ar[dr,OliveGreen, "{S_\e[\phi]}"',very near start, dashed]
    & S_{2\e}(G,f)
    \\
    (H,h) 
    \ar[r, "\eta"'] 
    \ar[ur,violet, "\psi", very near start]
    & S_\e(H,h) 
    \ar[r,"{S_\e[\eta]}"'] 
    \ar[ur,violet,dashed, "{S_\e[\psi]}", very near start]
    & S_{2\e}(H,h)
\end{tikzcd}
\end{equation*}
commutes.
The \emph{interleaving distance} is defined to be 
\begin{equation*}
    d_I((G,f), (H,h)) = \inf_{\e}\{\text{there exists an $\e$-interleaving of $(G,f)$ and $(H,h)$} \}.
\end{equation*}
\end{definition}
In the construction on this category, $d_I$ is an extended metric since the interleaving distance between Reeb graphs with different numbers of connected components is $\infty$ as there is no interleaving available for any $\e$ \cite{deSilva2016}.
One particularly useful property we will make use of is understanding how the image of the smoothed Reeb graph, $\Im(S_\e(G,f)) := f(G) \subseteq \R$, changes under smoothing.
Note that if $G$ is connected, $\Im(G,f)$ is connected so it is an interval.

\begin{proposition}
\label{prop:ImageOfSmoothedGraph}
For a connected Reeb graph $(G,f)$ with 
$\Im(G,f) = [a,b]$, 
\[\Im(S_\e(G,f)) = [a-\e,b+\e].\]
\end{proposition}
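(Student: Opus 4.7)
The plan is to trace the image through the definition of smoothing. By \cref{defn:originalSmoothing}, $S_\e(G,f)$ is the Reeb graph of the thickened $\R$-space $(G \times [-\e,\e],\, f+\Id)$, and the quotient map $q \from G \times [-\e,\e] \to S_\e(G,f)$ is function-preserving. Therefore the image of the height function on $S_\e(G,f)$ coincides with the image of $f+\Id$ on the thickening, i.e.
\[
\Im(S_\e(G,f)) \;=\; (f+\Id)\bigl(G \times [-\e,\e]\bigr) \;=\; \{\,f(x)+t \;:\; x \in G,\; t \in [-\e,\e]\,\}.
\]

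Next, I would substitute the hypothesis $f(G) = [a,b]$. Since $t$ ranges over $[-\e,\e]$ independently of $x$, the Minkowski sum gives
\[
\{\,f(x) + t \;:\; x \in G,\; t \in [-\e,\e]\,\} \;=\; [a,b] + [-\e,\e] \;=\; [a-\e,\, b+\e],
\]
where containment in $[a-\e,b+\e]$ is immediate from the bounds, and surjectivity onto this interval follows because for any $y \in [a-\e,b+\e]$ one can choose $x \in G$ with $f(x) \in [a,b]$ and $t \in [-\e,\e]$ satisfying $f(x) + t = y$ (e.g.~take $f(x) = \min(b,\max(a,y))$ and then $t = y - f(x)$).

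There is essentially no obstacle here; the only subtlety worth mentioning explicitly is that the hypothesis ``$\Im(G,f) = [a,b]$'' implicitly uses connectedness of $G$ (so that $f(G)$ is an interval), and connectedness also guarantees $S_\e(G,f)$ is connected, since $G \times [-\e,\e]$ is connected and the quotient map is continuous and surjective. Thus the computed image $[a-\e,b+\e]$ is genuinely a single interval, matching the form of the claim.
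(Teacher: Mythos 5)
Your proof is correct and follows essentially the same approach as the paper: both identify $\Im(S_\e(G,f))$ with the image of $f+\Id$ on the thickening $G\times[-\e,\e]$ via the surjective, function-preserving quotient map, and then observe that this image is $[a,b]+[-\e,\e]=[a-\e,b+\e]$. The paper phrases this as two explicit containment arguments rather than invoking the Minkowski sum, but the underlying reasoning is identical.
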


\begin{proof}
For any $c\in \Im(S_\e(G,f))$, we show that $c\in[a-\e,b+\e]$.
There is some $x \in S_\e(G,f)$ with $f_\e(x) = c$, where $f_\e$ is the induced function on $S_\e(G,f)$.
Then there is a $(y,t) \in G \times [-\e,\e]$ with $f(y) + t = c$.
Combining $a \leq f(y) \leq b$ and $-\e \leq t \leq \e$ gives that $a-\e \leq c \leq b+\e$.

For the other direction, let $c \in [a-\e, b+\e]$.
There exists some $d\in[a,b]$ with $c-d\in[-\e,\e]$.
Because $\Im(G,f)=[a,b]$, there exists some $x\in f\inv(d)$ and $(x,c-d)\in G\times[-\e,\e]$ quotients to some $y\in S_\e(G,f)$ with $f_\e(y)=c$, so $\Im(S_\e(G,f))=[a-\e,b+\e]$.
\end{proof}

\subsection{Truncated smoothing}
\label{sec:TruncatedSmoothing}
We can now introduce our new, modified smoothing of Reeb graphs.
Notice  from \cref{prop:ImageOfSmoothedGraph} that as the Reeb graph is smoothed, the image becomes larger.
The basic idea of truncated smoothing is to cut off some of those expanding tails in a well-defined way.

Let $U_\tau(G,f)$ be the set of points of $G$ that do not have a length $\tau$ up-path, and define $D_\tau(G,f)$ symmetrically for down-paths.
Note that for any point $x\in U_\tau(G,f)$, all up-paths from $x$ also lie in $U_\tau(G,f)$; the symmetric property is true for $D_\tau(G,f)$.
Both $U_\tau(G,f)$ and $D_\tau(G,f)$ are open subsets of $(G,f)$.
See \cref{fig:UpAndDownForests} for an example.
With this, we can define truncation as follows.

\begin{definition}
\label{defn:truncation}
    The \emph{$\tau$-truncation} of $(G,f)$,
    is the subgraph of $(G,f)$ consisting of the points that have both an up-path and a down-path of height $\tau$; specifically
    \[T^\tau(G,f):=(G,f)\setminus(U_\tau(G,f)\cup D_\tau(G,f)).\]
\end{definition}
This operation can be seen in the second and third graphs of \cref{fig:smoothingOperation_withTrunc}.
Notice that $T^0(G,f) = (G,f)$, and that for large enough $\tau$, it is entirely possible to disconnect the graph, or even to be left with an empty graph.
Utilizing the truncation operation in conjunction with the Reeb graph smoothing operation is what we call truncated smoothing.

\begin{definition}
\label{defn:TruncatedSmoothing}
Let $(G,f)$, $\e \geq 0$ and $\tau \geq 0$ be given.
Then the \emph{truncated smoothing} of $(G,f)$ is defined by
$S_\e^\tau(G,f) =  T^\tau S_\e(G,f)$.
\end{definition}

If $\tau = 0$, $S_\e^0(G,f) = T^0 (S_\e (G,f)) = S_\e(G,f)$.
So $S_\e^0$ is the same as $S_\e$,  and thus the truncated smoothing can be thought of as a generalization of the smoothing definition.

    \begin{figure}
        \centering
        \includegraphics[page=2, width = \textwidth]{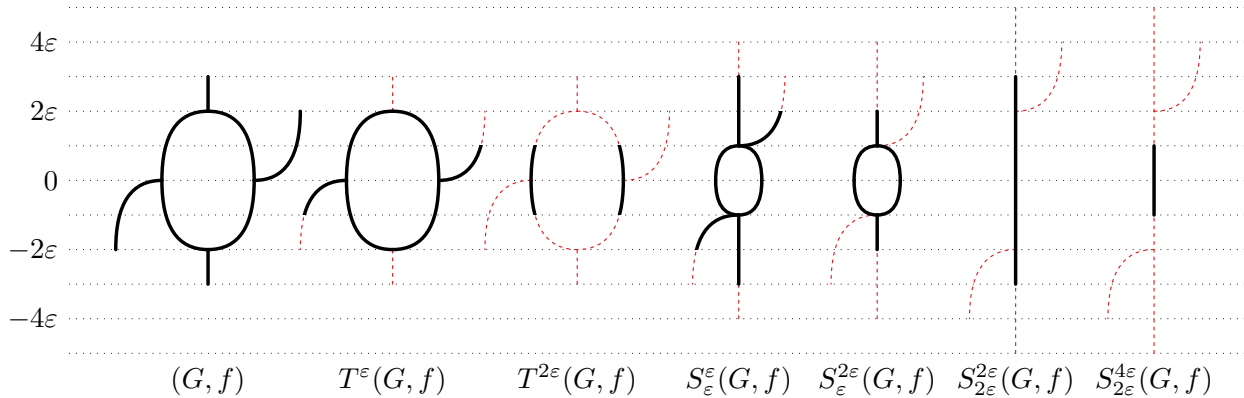}
        \caption{Example of smoothing and truncating for a range of values, on the graph from \cref{fig:smoothingOperation}.}
        \label{fig:smoothingOperation_withTrunc}
    \end{figure}

Consider~\cref{fig:smoothingOperation_withTrunc}, which shows why we smooth before truncating and more generally, why we will soon want to place restrictions on the relationship between $\tau$ and $\e$.
    Namely, for this example, we have drawn $T^\e(G,f)$ and $T^{2\e}(G,f)$.
    In the second case in particular, it is clear that truncation has massive detrimental effects on the topology as evidenced by the fact that $T^{2\e}(G,f)$ has two connected components.
    However, we can avoid these issues when we smooth first.
    In the last four examples, smoothing serves to move cycles away from the extrema, so that for a limited amount of truncation, no cycles are broken.
    We will quantify this `safe' amount of truncation in \cref{ssec:PropertiesOfTruncationSurvey}.
    So, while the smoothing parameter still gets rid of the center circle, the truncation only gets rid of expanding tails.

\subparagraph{Algorithm}
The $\tau$-truncation of a Reeb graph $(G,f)$ can be computed by first storing the length of the longest up-path and down-path of each vertex.
This can be done in linear time using a topological sort of the graph based on directing all edges upward.
We can for each local maximum store that it has a $0$-length up-path, and for the remaining vertices,  processes  in the order given by the topological sort, storing the length of their up-path based on the stored length of all previously processed neighbors.
We store the length of the longest down-path for each vertex symmetrically.
Now, we can compute for each edge how much of it remains in the truncation, and subdivide the edges if necessary.
Finally, remove all vertices and edges that do not have a sufficiently long up-path or down-path.
This procedure takes $O(n+m)$ time on a graph with $n$ vertices at $m$ edges.
The truncated smoothing can be computed by first computing the smoothing~\cite{deSilva2016} in $O(m\log(m+n))$ time, giving a total running time of $O(m\log(m+n))$.

\subsubsection{Properties of truncated smoothing}
\label{ssec:PropertiesOfTruncationSurvey}

\begin{figure}
        \centering
        \includegraphics[page=3, width = 2in]{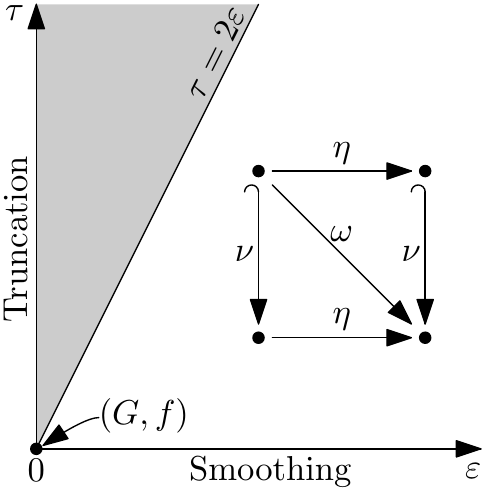}
        \caption{
        Visualization of \cref{prop:ImageOfSmoothingAndTruncating}.
        Given a connected $G$ where $\Im(G,f) = [a,b] \subset \R$,  $S_\e^\tau(G,f)$ is empty if it is in the red region and non-empty if it is in the white region. Parameters in the grey region can be either empty or not.
        }
        \label{fig:EmptyParameters}
\end{figure}
We can visualize the relationship between $\tau$ and $\e$ as drawn in \cref{fig:EmptyParameters}.
For this figure, we assume we start with a connected Reeb graph $(G,f)$ and study properties of $S_\e^\tau(G,f)$ which is represented by the point $(\e,\tau)$ in the plane.
In the remainder of this section, we state the properties of $S_\e^\tau$ in different regions of the $\e$-$\tau$-plane, culminating in the parameter space labeling of \cref{fig:epstauproperties}.
We will focus in this section on the case where $G$ is a connected graph, although some results can be modified to incorporate disconnected inputs.
These results on disconnected graphs, as well as many of the more technical proofs, are presented in  \fullVerRef{sec:properties_of_truncation}.

\subsubsection{When is \texorpdfstring{$S_\e^\tau(G,f)$}{Set(G,f)} empty? }

We first study the values of $\e$ and $\tau$ for which the truncated smoothing is empty.
For the purposes of notation, define $\Im(G,f) = f(G) \subset \R$.
Consider the following simple example:
Let $L_{[a,b]}$ be a Reeb graph consisting of a single edge with image $[a,b] \subseteq \R$, and for an interval $I\subseteq [a,b]$, let $L_I\subseteq L_{[a,b]}$ be the unique subgraph with image $I$.
Then $T^\tau(L_{[a,b]})=L_{[a+\tau,b-\tau]}$ if $2\tau\leq b-a$, and is the empty Reeb graph for $2\tau>b-a$.
On the other hand, $S_\e(L_{[a,b]})$ is isomorphic to $L_{[a-\e,b+\e]}$.

In particular, $T^\tau$ and $S_\e$ transform any monotone path with image $[a,b]$ into a monotone path with image $[a+\tau,b-\tau]$, $[a-\e,b+\e]$, respectively.
In addition, smoothing or truncating the empty Reeb graph again yields the empty Reeb graph.
We can build this intuition into the following proposition; details are in \fullVerRef{ssec:apdx:Empty}.
Note that in the case of a connected graph $G$, $\Im(G,f)$ is connected and thus is an interval.

\begin{restatable}{proposition}{ImageOfSmoothingAndTruncating}
\label{prop:ImageOfSmoothingAndTruncating}
Let $(G,f)$ be connected with $\Im(G,f) = [a,b]$. 
\begin{itemize}
    \item If $b-a < 2(\tau-\e) $, then $\Im(S_\e^\tau(G,f)) =\emptyset$.
    \item If $b-a \geq 2(\tau-\e) $ and $\tau\leq 2\e$, then 
    $\Im(S_\e^\tau(G,f)) = [a- (\e - \tau), b+ (\e-\tau)]$.
\end{itemize}
\end{restatable}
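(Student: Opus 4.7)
The plan is to split the proof into an upper-bound argument (handling the emptiness case and one inclusion of the second case) and a matching lower-bound construction (giving the reverse inclusion).

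For the upper bound, I would invoke \cref{prop:ImageOfSmoothedGraph} to get $\Im(S_\e(G,f)) = [a-\e, b+\e]$. By the definition of $T^\tau$, every $y \in T^\tau S_\e(G,f)$ admits both an up-path and a down-path of height $\tau$ in $S_\e(G,f)$, so the endpoints of these paths constrain the height of $y$ to $a-\e+\tau \le f_\e(y) \le b+\e-\tau$. Hence $\Im(S_\e^\tau(G,f)) \subseteq [a-\e+\tau, b+\e-\tau]$. The first bullet is immediate: this interval is empty precisely when $b+\e-\tau < a-\e+\tau$, that is $b-a < 2(\tau-\e)$.

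For the reverse inclusion in the second case, I fix $c \in [a-\e+\tau, b+\e-\tau]$ and produce an explicit $y \in T^\tau S_\e(G,f)$ at height $c$. The basic building block is the cylinder $\{x\} \times [-\e, \e] \subset G \times [-\e,\e]$: its image in $S_\e(G,f)$ under $q$ is an $f_\e$-monotone arc of length $2\e$ passing through $q(x, c-f(x))$ at height $c$. When $\tau \le \e$, I pick $x \in G$ with $f(x) \in [c-(\e-\tau), c+(\e-\tau)] \cap [a,b]$, which is nonempty for $c$ in the target interval (since $c \pm (\e-\tau)$ lie within $[a-\e+\tau \pm (\e-\tau),\, b+\e-\tau \pm (\e-\tau)]$, which overlaps $[a,b]$). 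The upper and lower halves of this cylinder then furnish an up-path of height $f(x)+\e-c \ge \tau$ and a down-path of height $c-f(x)+\e \ge \tau$, showing $q(x, c-f(x)) \in T^\tau S_\e(G,f)$.

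The main obstacle is the subcase $\e < \tau \le 2\e$, where no single cylinder is long enough to supply both required paths. In this regime the target range is contained in $[a,b]$, so I choose $x \in G$ with $f(x) = c$ and must extend the cylinder's up- and down-paths into $(f+\Id)$-monotone walks through additional edge-rectangles of $G \times [-\e, \e]$, aiming for vertices $x_\pm \in G$ with $f(x_\pm) = c \pm (\tau-\e) \in [a,b]$. The extension is feasible because on each edge-rectangle $e \times [-\e,\e]$ the level sets of $f+\Id$ are straight, slanted lines, so a diagonal monotone crossing can gain up to $2\e$ in height; the connectivity of $G$ then lets me chain such crossings, and the $2\e$ slack in the $t$-coordinate absorbs whatever $f$-non-monotonicity occurs along the chosen $G$-walk from $x$ to $x_\pm$. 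Making this extension rigorous, and verifying that the slack available is exactly what is needed, is where the hypothesis $\tau \le 2\e$ enters essentially, and is the most delicate step of the argument.
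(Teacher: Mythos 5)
Your handling of the first bullet and the subcase $\tau\le\e$ of the second bullet is correct, and for $\tau\le\e$ the cylinder argument is essentially complete. The genuine gap is in the subcase $\e<\tau\le 2\e$, and it is not merely a matter of tightening the hand-waving: the specific claim you rely on --- that for any $x\in G$ with $f(x)=c$, the up- and down-paths of $q(x,0)$ in $S_\e(G,f)$ can be extended to height $\tau$ because ``the $2\e$ slack in the $t$-coordinate absorbs whatever $f$-non-monotonicity occurs'' --- is false, and there are valid choices of $x$ for which $q(x,0)\notin T^\tau S_\e(G,f)$.

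To see this concretely, observe that any $(f+\Id)$-monotone up-path in $G\times[-\e,\e]$ starting at $(x,0)$ (value $c$) must satisfy $f(\gamma(s))+h(s)\geq c$ with $h(s)\leq\e$, hence $f(\gamma(s))\geq c-\e$; so $\gamma$ is trapped in the connected component $C$ of $x$ in $f^{-1}([c-\e,\infty))$, and the attainable height of the up-path is exactly $\max_{C}f+\e-c$. This can be smaller than $\tau$. Take $G$ a tripod with a central vertex $u$ at height $0$ and three pendant vertices at heights $-10$, $10$, and $1.7$; take $\e=1$, $\tau=1.5$, so $\tau-\e=0.5$ and the target interval is $[-9.5,9.5]$. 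With $c=1.5$, the choice $x=x_2$ on the short leg (heights $[0,1.7]$) is legal, but its component in $f^{-1}([0.5,\infty))$ is the subarc of heights $[0.5,1.7]$, so $q(x_2,0)$ has maximal up-path height $1.7+\e-c=1.2<\tau$. Thus the proposed construction can output a point not in $T^\tau S_\e(G,f)$; there is of course a valid witness at height $1.5$ (on the long upward leg), but your argument gives no mechanism to select it, and in general there may be several $x$ at height $c$ and no local criterion for the right one.

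The paper's proof avoids this point-by-point difficulty entirely. It fixes a (not necessarily monotone) simple path $\pi$ in $S_\e(G,f)$ between a global minimum $s$ (height $a-\e$) and a global maximum $t$ (height $b+\e$), and shows that the initial and terminal monotone subpaths of $\pi$, after extension using the $2\e$-tailed property of $S_\e(G,f)$ (\cref{prop:longTailsCombined}), have height at least $4\e\ge 2\tau$, so they survive truncation down to heights $a-(\e-\tau)$ and $b+(\e-\tau)$. It then invokes \cref{cor:ConnectedSmoothingAndTruncating} (which also needs $\tau\le 2\e$) to conclude $T^\tau S_\e(G,f)$ is connected, and the intermediate value theorem fills in all heights between. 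If you want to salvage your cylinder-based approach you would have to prove a selection lemma --- that there \emph{exists} $x$ at height $c$ whose component in $f^{-1}([c-\e,\infty))$ reaches $c+\tau-\e$ and symmetrically for the sublevel set --- which is roughly the same global connectivity information the paper extracts via the $2\e$-tailed property, so there is no real shortcut here.
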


\begin{proof}
[Sketch proof.]
We first show that $b-a<2(\tau-\e)$ implies the image is empty.
We show in \cref{prop:ImageOfTruncatedGraph} that for a connected graph $(H,h)$ with image $[a',b']$ and  $b'-a' < 2\tau$, $T^\tau(H,h)$ is empty. 
    By \cref{prop:ImageOfSmoothedGraph}, $\Im(S_\e(G,f)) = [a-\e, b+\e]$.
    Then setting $S_\e(G,f) = (H,h)$, we have for $b-a < 2(\tau-\e)$, that $(b+\e) - (a-\e) \leq 2\tau$, so
    $\Im(S_\e^\tau(G,f)) = \Im(T^\tau(S_\e(G,f))) = \emptyset$.

    Now, we can assume $b-a \geq 2(\tau-\e)$.
    One direction of containment is easy since by \cref{prop:ImageOfSmoothedGraph},
    $
        \Im(S_\e^\tau(G,f)) = \Im(T^\tau(S_\e(G,f))) \subseteq [a- (\e - \tau), b+ (\e-\tau)].
        $
    Thus, it remains to show that $[a- (\e - \tau), b+ (\e-\tau)] \subseteq \Im(S_\e^\tau(G,f))$.
    The basic idea is to take two points $s,t\in S_\e(G,f)$ with $f(s)=a-\e$ and $f(t)=b+\e$, and show that they are connected by a path $\pi$ in $S_\e(G,f)$ for which the only portions that get truncated are the endpoints.
    This is simple if $\pi$ is itself a monotone path; otherwise we use the fact that $G$ has already been smoothed and that we do not truncate too much ($\tau \leq 2\e$) to show that the parts of the path which are not monotone still have long enough up- and down-paths to not be removed.
\end{proof}

This proposition gives us that $S_\e^\tau(G,f)$ is an empty graph if $(\e,\tau)$ is interior to the red region of \cref{fig:EmptyParameters}, and is empty in the white region.
We cannot expand this proposition to the grey region of \cref{fig:EmptyParameters} 
as there are examples for which $S_\e^\tau(G,f)$ can be either empty or not.  
For instance, in the example of \cref{fig:EmptyTauTruncation},  $|\Im(G,f)| \geq  2(\tau-\e)$, but each position in the graph
is either missing a long enough up- or down-path,
and hence the truncated graph is empty.
On the other hand, for the graph with a single edge $L_{[a,b]}$, any truncation $\tau < \tfrac{b-a}{2}$ is non empty. 

    \begin{figure}
        \centering
        \includegraphics{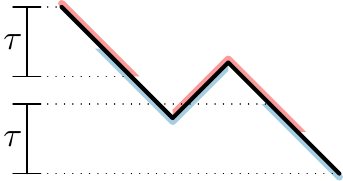}
        \caption{A Reeb graph $(G,f)$ for which $T^\tau(G,f) = S_0^\tau(G,f)$ is empty. This choice of $\tau$ is such that $\Im(G,f)$ has diameter greater than $2\tau$, thus $S_0^\tau(G,f)$ is in the grey region of \cref{fig:EmptyParameters}.} 
        \label{fig:EmptyTauTruncation}
    \end{figure}

\subsubsection{When does \texorpdfstring{$S_\e^\tau(G,f)$}{Set(G,f)} maintain connectivity?}
Our next goal is to understand when truncation preserves the connectivity of the input.
As seen in \cref{fig:smoothingOperation_withTrunc}, clearly just truncating the graph can disconnect an originally connected graph.
However, what is interesting is that smoothing first and not truncating too much relative to the smoothing will maintain the connectivity; this will be made precise in  \cref{cor:ConnectedSmoothingAndTruncating}.
For this, we introduce two properties, \emph{$t$-tailed} and \emph{$s$-safe}, and study how they are affected by smoothing and truncation.

 \begin{definition}
\label{defn:tauTailed_and_tauSafe}
    A Reeb graph is \emph{$t$-tailed} if it has a height $t$ up-path at every down-fork and a length $t$ down-path at every up-fork.
    A Reeb graph is \emph{weakly $s$-safe} if each component has a point with both an up-path and a down-path of height at least~$s$.
    A Reeb graph is \emph{$s$-safe} if it is both $s$-tailed and weakly $s$-safe.
\end{definition}

Note that every non-empty Reeb graph is  0-safe.
For example, the graph drawn in \cref{fig:UpAndDownForests} is not $\delta$-tailed because the bottommost up-fork has no down-path of height $\delta$; in addition, the topmost down-fork has no up-path of height $\delta$.

We next have two results, proved in \fullVerRef{sec:properties_of_truncation}, which show how the $\bullet$-tailed and $\bullet$-safe properties are maintained under smoothing and truncating, albeit with modified parameters.

\begin{restatable}{proposition}{longTailsCombined}
\label{prop:longTailsCombined}
        If $(G,f)$ is $t$-tailed, then $S_\e(G,f)$ is $(t+2\e)$-tailed.
        If $(G,f)$ is $s$-safe, then $S_\e(G,f)$ is $(s+\e)$-safe.
        In particular, $S_\e(G,f)$ is always $2\e$-tailed and $\e$-safe.
\end{restatable}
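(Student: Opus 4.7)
The plan is to establish the tailed and safe statements separately, with the final sentence recovered as the case $t = s = 0$. The tailed statement is where essentially all the work is, and the safe statement then falls out.

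For the tailed statement, let $v$ be a down-fork of $S_\e(G,f)$ at some height $c$. I would first recall that $v$ corresponds to a connected component $C$ of $f\inv([c-\e, c+\e])$ in $G$, and that $v$ being a down-fork means $C \cap f\inv([c-\e-\delta, c+\e-\delta])$ has at least two components for all sufficiently small $\delta > 0$; equivalently, removing the ``top cap'' $C \cap f\inv((c+\e-\delta, c+\e])$ disconnects $C$. Because $C$ is a finite graph, a short argument (chasing a connecting path through the cap as $\delta \to 0$) pins down a vertex $y$ of $G$ with $f(y) = c+\e$ that has at least two down-edges in $G$; in other words, $y$ is a down-fork of $(G,f)$. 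The $t$-tailed hypothesis then provides an up-path $\gamma$ in $G$ from $y$ of height $t$. Inside $S_\e(G,f)$ I would concatenate two pieces: first the thickening segment $r \mapsto [y, r]$ for $r \in [-\e, \e]$, which is an up-path of height $2\e$ from $v = [y, -\e]$ to $[y, \e]$, and then the lift $s' \mapsto [\gamma(s'), \e]$, an up-path of height $t$. This produces an up-path of height $t + 2\e$ from $v$, and the symmetric argument at up-forks completes this half.

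For the safe statement, the tailed half is immediate since $(s+2\e)$-tailed implies $(s+\e)$-tailed. For weak safety, I would use that smoothing commutes with taking connected components, reducing to a single component $C$ of $(G,f)$. By weak $s$-safety, $C$ contains a point $x$ with up- and down-paths of height $s$ in $G$, and I claim $[x,0] \in S_\e(C, f|_C)$ is a suitable witness: concatenating the thickening segment $r \mapsto [x, r]$ for $r \in [0, \e]$ with the lifted up-path from $x$ yields an up-path of height $s + \e$ from $[x,0]$, and the symmetric construction gives a down-path of height $s + \e$. Hence $S_\e(G,f)$ is $(s+\e)$-safe. The ``in particular'' clause follows by applying both statements with $t = s = 0$, since every Reeb graph is vacuously $0$-tailed and (each non-empty component vacuously) $0$-safe.

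The main obstacle I anticipate is pinning down the vertex $y$ of $G$ corresponding to a down-fork of $S_\e(G,f)$: one must verify that the merging inside the top cap occurs at an honest vertex of $G$ (an interior point of an edge has only one down-direction and so cannot locally disconnect anything), and that this vertex has at least two down-edges in $G$ rather than merely in the subgraph $C$. Once the finite graph structure of $G$ is used to extract such a $y$ as a genuine down-fork of $(G,f)$, the remainder is a routine construction of monotone paths in $S_\e(G,f)$ built from the thickening.
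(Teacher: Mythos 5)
Your proposal is correct and takes essentially the same route as the paper: both identify that a fork in $S_\e(G,f)$ forces a fork of $(G,f)$ at the boundary of the thickening band (the paper works with up-forks and a point $(x,\e)$, you with down-forks and $(y,-\e)$; these are mirror images), then build the long tail by concatenating the height-$2\e$ thickening segment with the lifted tail guaranteed by the $t$-tailed hypothesis, and handle weak safety by lifting a witness point and extending its paths by $\e$ through the thickening. The fork-extraction step you flag as a potential gap is treated with no more detail in the paper, which simply asserts that interior-disjoint up-paths force such a point; your worry about down-edges ``in $G$ rather than merely in $C$'' is moot since every down-edge of a vertex at height $c+\e$ lies in the band $f^{-1}([c-\e,c+\e])$ anyway.
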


    \begin{restatable}{lemma}{singleEdge}
        \label{obs:singleEdge}
        Fix $0 \leq \tau \leq \e$.
        If $(G,f)$ is $\e$-tailed or safe, then $T^\tau(G,f)$ is $(\e-\tau)$-tailed or safe, respectively.
    \end{restatable}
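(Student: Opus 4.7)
The plan is to prove the two implications separately, with the tailed case driving the core argument.

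For $\e$-tailed implies $(\e-\tau)$-tailed, the key observation is that a down-fork of $T^\tau(G,f)$ is necessarily a down-fork of $(G,f)$: truncation only removes points, so edges of $T^\tau(G,f)$ incident to a retained vertex $x$ are a subset of the edges of $(G,f)$ at $x$. Let $x$ be a down-fork of $T^\tau(G,f)$; by $\e$-tailedness of $(G,f)$ there is an up-path $\pi \colon [0,\e] \to (G,f)$ at $x$ of height $\e$, parameterized by height. I claim $\pi|_{[0,\e-\tau]}$ lies in $T^\tau(G,f)$, which supplies the desired $(\e-\tau)$-height up-path. To verify this, I check for each $t \in [0,\e-\tau]$ that $\pi(t) \notin U_\tau \cup D_\tau$. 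The tail $\pi|_{[t,\e]}$ has height $\e - t \geq \tau$, so $\pi(t) \notin U_\tau$. Since $x \in T^\tau(G,f)$, the point $x$ has a down-path $\beta$ of height $\tau$ in $(G,f)$, so concatenating the reverse of $\pi|_{[0,t]}$ with $\beta$ gives a down-path from $\pi(t)$ of height $t + \tau \geq \tau$, showing $\pi(t) \notin D_\tau$. A symmetric argument on up-forks yields the required $(\e-\tau)$-height down-paths.

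For $\e$-safe implies $(\e-\tau)$-safe, the tailed part follows immediately from the above, so only weak $(\e-\tau)$-safety remains. The same restriction trick handles any weakly $\e$-safe point $p$ of $(G,f)$ with up-path $\alpha$ and down-path $\beta$ of height $\e$: here $p \in T^\tau(G,f)$ since $\e \geq \tau$, and both $\alpha|_{[0,\e-\tau]}$ and $\beta|_{[0,\e-\tau]}$ lie in $T^\tau(G,f)$ by the identical membership check (the down-path extension at a point on $\alpha$ now runs via $p$ and $\beta$, giving height $t + \e \geq \tau$, and symmetrically for $\beta$). Thus $p$ witnesses weak $(\e-\tau)$-safety for its own component of $T^\tau(G,f)$.

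The main obstacle is that weak safety demands \emph{every} component of $T^\tau(G,f)$ carry such a witness, not merely those containing a pre-existing weakly $\e$-safe point $p$ of $(G,f)$. My plan is to prove that every component $C$ of $T^\tau(G,f)$ meets the set $W_\e$ of weakly $\e$-safe points of $(G,f)$; any point of $C \cap W_\e$ then supplies the needed witness via the restriction trick of the second paragraph. To establish this intersection property, I would pick any $y \in C$ and move it within $C$ in the direction that increases up- and down-reach, using the fact that a maximal up-path of $(G,f)$ from $y \in T^\tau$ stays in $T^\tau(G,f)$ for all but its last $\tau$ units of height (the same membership argument as in the first paragraph), and symmetrically for down-paths. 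Iterating this, while invoking $\e$-tailedness to route around any fork where the reach would otherwise stall, must terminate at a point of $W_\e$ inside $C$.
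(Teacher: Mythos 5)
Your first two paragraphs follow the paper's route exactly, and are in fact more careful than the paper's own proof: for each point $\pi(t)$ on the restricted path you verify \emph{both} membership conditions (height-$\tau$ up-path via the tail of $\pi$, height-$\tau$ down-path via going back to $x$ and then down), whereas the paper's proof of \cref{obs:singleEdge} mentions only the up-path half. That extra check is genuinely needed, so this is a small but real improvement.

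Your third paragraph raises a legitimate concern that the paper's proof also passes over: weak $(\e-\tau)$-safety is a per-component condition, and a restricted witness only handles the component of $T^\tau(G,f)$ in which it lands. However, your proposed remedy is not a proof. ``Move $y$ within $C$ in the direction that increases up- and down-reach'' is not a well-defined operation (moving along an up-path increases down-reach but typically decreases up-reach, and vice versa, so there is no single direction that improves both), and ``iterating this $\ldots$ must terminate at a point of $W_\e$'' asserts a termination/convergence claim without argument. Likewise ``invoking $\e$-tailedness to route around any fork'' is not a maneuver that $\e$-tailedness actually supplies; tailedness bounds the heights of certain paths at forks, it does not provide detours. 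As written the third paragraph is a plan, not a step in a proof.

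The concern you raised has a much more direct resolution, using machinery the paper already develops (independently of \cref{obs:singleEdge}, despite being stated afterward). Since $(G,f)$ is $\e$-tailed and $\tau\le\e$, each component $C'$ of $(G,f)$ is $\tau$-tailed, so \cref{lem:staysConnected} (equivalently \cref{lem:safeSimplePath}) gives that $T^\tau(C')$ is connected. Hence truncation never splits a component: the components of $T^\tau(G,f)$ are exactly the nonempty $T^\tau(C')$ for components $C'$ of $(G,f)$. The weakly $\e$-safe witness $p$ of $C'$ lies in $T^\tau(C')$ (since $\tau\le\e$), and $T^\tau(C')$ being a single component of $T^\tau(G,f)$ means $p$ is in the right place. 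Your restriction trick from the second paragraph then finishes the job. If you replace the third paragraph with this connectivity observation, the proof is complete and strictly tighter than the paper's.
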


    \begin{figure}
        \centering
        \includegraphics{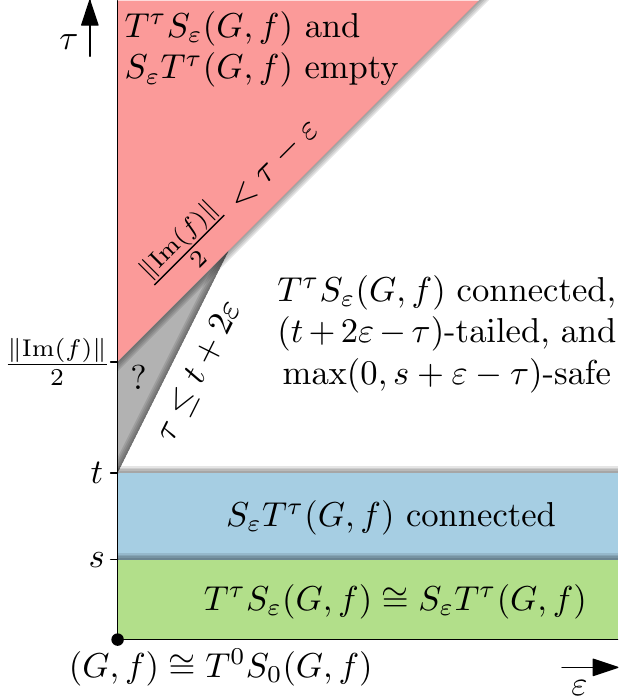}
        \caption{For connected, $t$-tailed, and $s$-safe $(G,f)$, properties of~$S_\e^\tau(G,f) = T^\tau S_\e(G,f)$ and $S_\e T^\tau(G,f)$, parameterized by $\tau$ and $\e$.}
        \label{fig:epstauproperties}
    \end{figure}
Combining \cref{prop:longTailsCombined} and \cref{obs:singleEdge}, we can see that outside the pink and grey regions of \cref{fig:epstauproperties}, we know that $S_\e^\tau(G,f)$ is $(t+2\e-\tau)$-tailed and $(s+\e-\tau)$-safe.

\begin{proposition}
Fix $0 \leq \e$ and $0 \leq \tau$, and assume $(G,f)$ is $t$-tailed and $s$-safe.
If $\tau \leq t + 2\e$ and $\tau \leq \e + \|\Im(G,f)\|/2 $,
then $S_\e^\tau(G,f)$ is $(t+2\e-\tau)$-tailed and $(s+\e-\tau)$-safe.
\end{proposition}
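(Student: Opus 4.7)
The plan is to combine the two preceding results in a direct fashion. First, apply \cref{prop:longTailsCombined} to $(G,f)$ with smoothing parameter $\e$: since $(G,f)$ is $t$-tailed, $S_\e(G,f)$ is $(t+2\e)$-tailed, and since $(G,f)$ is $s$-safe, $S_\e(G,f)$ is $(s+\e)$-safe. Then apply \cref{obs:singleEdge} to $S_\e(G,f)$ with truncation parameter $\tau$, matching its internal ``$\e$'' to the tail/safe parameter of the smoothed graph.

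For the tailed conclusion, the hypothesis $\tau \leq t+2\e$ is exactly the range requirement of \cref{obs:singleEdge} for the $(t+2\e)$-tailed graph $S_\e(G,f)$. Therefore $T^\tau S_\e(G,f) = S_\e^\tau(G,f)$ is $(t+2\e-\tau)$-tailed, as desired.

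For the safe conclusion, I case-split on the size of $\tau$ relative to $s+\e$. If $\tau \leq s+\e$, then \cref{obs:singleEdge} applies directly to the $(s+\e)$-safe graph $S_\e(G,f)$, yielding that $S_\e^\tau(G,f)$ is $(s+\e-\tau)$-safe. If instead $s+\e < \tau$, then the target parameter $s+\e-\tau$ is negative, and being $r$-safe for $r<0$ is vacuously satisfied by any Reeb graph (every point trivially has up- and down-paths of non-positive height, and every fork trivially has tails of non-positive height). So the claim is free in this regime. The role of the second hypothesis $\tau \leq \e + \|\Im(G,f)\|/2$ is to guarantee, via \cref{prop:ImageOfSmoothingAndTruncating}, that $S_\e^\tau(G,f)$ is non-empty so the conclusion is non-vacuous; indeed, the condition is precisely the inequality $b-a \geq 2(\tau-\e)$ needed to avoid the ``empty'' branch of that proposition.

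The main subtlety, and the reason the second hypothesis is stated with $\|\Im(G,f)\|/2$ rather than the naively tighter $s$, is the gap between $s+\e$ and $\e+\|\Im(G,f)\|/2$: although \cref{obs:singleEdge} stops applying once $\tau$ exceeds $s+\e$, the safe conclusion still holds trivially on this extended range, and non-emptiness is controlled by the larger image-based bound. Beyond the case analysis for the safe part, the argument is a mechanical chaining of the two lemmas already established.
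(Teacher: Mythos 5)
Your core approach is exactly the paper's: apply \cref{prop:longTailsCombined} to get that $S_\e(G,f)$ is $(t+2\e)$-tailed and $(s+\e)$-safe, then apply \cref{obs:singleEdge} to the truncation. Where you go further is the case split on $\tau$ versus $s+\e$, and this is actually more careful than the paper's own proof: the paper applies \cref{obs:singleEdge} to the safe parameter $s+\e$ while only citing $\tau \le t+2\e$, silently ignoring that the lemma's hypothesis $0\le\tau\le\e$ must be matched to $\e = s+\e$ (which is not implied by $\tau\le t+2\e$). Your observation that the remaining regime $s+\e<\tau$ makes the target safe-parameter negative, hence trivially satisfied for non-empty graphs, is the right patch, and your reading of the second hypothesis $\tau\le\e+\|\Im(G,f)\|/2$ as a non-emptiness guard is sensible (the paper's proof does not use that hypothesis at all). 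The one place you should be more careful is the non-emptiness appeal: \cref{prop:ImageOfSmoothingAndTruncating} is stated for connected $(G,f)$ and its non-empty branch also assumes $\tau\le 2\e$, neither of which is a hypothesis here. For a disconnected $G$, the hypothesis controls the image of the whole graph, not of each component, so a small component could a priori be truncated away; and with $t>0$ one can have $2\e<\tau\le t+2\e$ outside the cited proposition's range. These are not issues your argument introduces — the paper's proof shares or exceeds them — but if you want a self-contained justification you should say explicitly what non-emptiness assumption you are invoking rather than citing a proposition whose hypotheses don't match.
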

\begin{proof}
Because $(G,f)$ is $t$-tailed, $S_\e(G,f)$ is $(t+2\e)$-tailed by the first statement of \cref{prop:longTailsCombined}.
Since $\tau \leq t+2\e$, $S_\e^\tau(G,f) = T^\tau S_\e(G,f)$ is $(t+2\e-\tau)$-tailed by \cref{obs:singleEdge}.
Similarly, since $(G,f)$ is $s$-safe, $S_\e(G,f)$ is $(t+\e)$-safe by the second statement of \cref{prop:longTailsCombined}.
Then since $\tau \leq t+2\e$, $S_\e^\tau(G,f) = T^\tau S_\e(G,f)$ is $(s+\e-\tau)$-safe by \cref{obs:singleEdge}.
\end{proof}

This brings us to our conclusion of parameters for which the connectivity is maintained, with full details provided in \fullVerRef{ssec:apdx:Connectedness}.

\begin{restatable}{proposition}{ConnectedSmoothingAndTruncating}
\label{cor:ConnectedSmoothingAndTruncating}
If $(G,f)$ is connected and $\tau\in[0,2\e]$, then $S_\e^\tau(G,f)$ is also connected.
\end{restatable}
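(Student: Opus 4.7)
The plan is to reduce the proposition to a single structural lemma: if $H$ is a connected Reeb graph that is $\tau$-tailed, then $T^\tau(H)$ is connected whenever it is non-empty. Given this, the proposition is immediate: $H := S_\e(G,f)$ is connected, being the Reeb graph of the connected product $G \times [-\e,\e]$, and by \cref{prop:longTailsCombined} it is $2\e$-tailed and hence $\tau$-tailed for every $\tau \in [0, 2\e]$; the empty case is vacuous.

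To establish the structural lemma, I would analyze $U := U_\tau(H)$ (the symmetric statements for $D := D_\tau(H)$ are handled analogously). Two observations set up the argument: (i) $U$ is open and closed under going up, because a hypothetical $\tau$-up-path from an upper point would concatenate with the up-path connecting a lower point to yield a $\tau$-up-path at the lower point; and (ii) by the $\tau$-tailed hypothesis, no down-fork of $H$ lies in $U$. Since every cycle of $H$ contains a down-fork at its topmost point, each connected component $C$ of $U$ is acyclic.

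The core of the lemma is to show that each component $C$ of $U$ with $C \neq H$ has exactly one boundary point in $H$. I would prove this by contradiction: suppose there exist distinct $r_1, r_2 \in \partial C$, and take a simple path $\pi \subset \overline{C}$ from $r_1$ to $r_2$. Because $U$ is closed under going up, $H \setminus U$ is closed under going down, so descending from any point of $\partial C$ immediately leaves $\overline{C}$. This forces $\pi$ to begin by ascending from $r_1$ and to approach $r_2$ by descending. Any interior local maximum of $f \circ \pi$ occurs at a vertex $v$ with two path-edges leaving it downward, so $v$ is a down-fork of $H$; such $v$ cannot lie in $C$ (no down-forks there), and cannot lie in $\partial C$ either, since continuing downward from $v$ would exit $\overline{C}$. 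An analogous argument rules out interior local minima (each would require a preceding descent, hence a preceding local maximum). Thus $\pi$ is monotone, contradicting either the ascent forced at $r_1$ or the descent forced at $r_2$. I expect the main obstacle to be executing this simple-path analysis cleanly; in particular, the boundary-orientation bookkeeping is delicate and needs to simultaneously rule out the local-maximum-at-$\partial C$ case and the monotone cases.

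Granted the single-boundary claim (and its symmetric version for components of $D$), the proposition follows by path excision. For any $p, q \in T^\tau(H)$ and any path $\pi$ from $p$ to $q$ in $H$, each maximal subinterval on which $\pi$ lies in a component $C$ of $U$ has both endpoints mapping to the unique $r_C$, so replacing the excursion by the constant map at $r_C$ yields a continuous path avoiding $C$. Iterating over the (finitely many) components of $U$ gives a path in $H \setminus U$, and iterating again for the components of $D$, whose unique boundary points, if encountered by the already-restricted path, necessarily lie in $H \setminus U$, produces a path from $p$ to $q$ inside $T^\tau(H)$.
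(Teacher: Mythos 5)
Your proof is correct and follows the same high-level strategy as the paper's: establish that each component of the removed set ($U$ or $D$) has a unique root (boundary point) in the truncated graph, then excise excursions from a connecting path using that root. The paper reaches the single-root fact via a chain of lemmas (\cref{lem:forestUD,lem:disconnectedUD,lem:atMostOneRoot,lem:atMostOneRootPath}): it first shows components of $U$ are up-trees, then that $U$ and $D$ are mutually disconnected inside $U\cup D$ (using $\tau$-safeness), and then that a simple path in a component between two roots must have height zero. You package the single-boundary argument a little differently: rather than appealing to the up-tree lemma, you argue directly that a simple path $\pi\subset\overline C$ joining two boundary points must be monotone, by ruling out interior local extrema via the $\tau$-tailed hypothesis together with the fact that descending from $\partial C$ immediately exits $\overline C$ (because $H\setminus U$ is down-closed); the forced ascent at one end and descent at the other then give the contradiction. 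You also excise $U$ and $D$ in two stages rather than treating $U\cup D$ at once, which lets you sidestep an explicit disjointness lemma --- the needed $\tau$-safeness is automatic for a connected $\tau$-tailed graph with nonempty truncation, as you implicitly use. Both arguments buy the same result; the paper's exposes more intermediate structure (the up-tree decomposition), whereas yours is more self-contained. One caution worth noting in a polished write-up: the maximal excursion intervals should be taken as the components of the open set $\pi^{-1}(C)$, so that their endpoints land in $\overline C\setminus C=\partial C$ rather than in $C$ itself.
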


\begin{proof}
[Sketch proof.]
We show in \cref{lem:staysConnected} that for a connected, $t$-tailed graph, $T^t(G,f)$ is connected by ensuring disjointness of the portion of the graph $G$ removed because it is lacking an up-path, and that which is removed because it is lacking a down-path.
The result is then a corollary of \cref{prop:longTailsCombined}.
\end{proof}

\subsubsection{When do \texorpdfstring{$S^\e$}{Se} and \texorpdfstring{$T^\tau$}{Tt} commute?}

We finally investigate the commutativity of smoothing and truncating.
The example of \cref{fig:smoothingOperation_withTrunc} shows why we must be careful with order of operations since $T^\tau S_\e(G,f)$ is not necessarily the same as $S_\e T^\tau(G,f)$.
Specifically, $S_{2\e}^{2\e}(G,f) = T^{2\e}S_{2\e}(G,f)$ has one connected component, but any smoothing of $T^{2\e}(G,f)$ has two connected components.
However, the next two results imply that this issue does not arise if we smooth sufficiently before truncating.%
\begin{restatable}{proposition}{conflow}\label{prop:conflow01}
    If $(G,f)$ is $\tau$-safe, then $S_\e T^\tau(G,f)\cong T^\tau S_\e(G,f)$.
\end{restatable}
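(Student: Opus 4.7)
My plan is to construct an explicit function-preserving isomorphism $\phi:S_\e T^\tau(G,f)\to T^\tau S_\e(G,f)$ using the naturality of smoothing. Applying the smoothing functor $S_\e$ to the function-preserving inclusion $\iota:T^\tau(G,f)\hookrightarrow(G,f)$ yields a morphism $S_\e[\iota]:S_\e T^\tau(G,f)\to S_\e(G,f)$. The first step is to check that the image of $S_\e[\iota]$ lands inside $T^\tau S_\e(G,f)$: a class in $S_\e T^\tau(G,f)$ has a representative $(x,s)\in T^\tau(G,f)\times[-\e,\e]$, and since $x$ admits up- and down-paths of height $\tau$ in $(G,f)$, pairing those paths with the constant second coordinate $s$ and pushing through the smoothing quotient produces up- and down-paths of height $\tau$ through $S_\e[\iota]([x,s])$ in $S_\e(G,f)$. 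Restricting the codomain defines $\phi$.

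The heart of the argument is that $\phi$ is a bijection. Using the description that a point of $S_\e(G,f)$ at height $c$ is a connected component of $f\inv([c-\e,c+\e])\subseteq G$, both injectivity and surjectivity reduce to a single claim: for every $c$, the inclusion $T^\tau(G,f)\cap f\inv([c-\e,c+\e])\hookrightarrow f\inv([c-\e,c+\e])$ induces a bijection on those connected components that support points of $T^\tau S_\e(G,f)$. Injectivity then follows because two representatives $(x_1,s_1),(x_2,s_2)\in T^\tau(G,f)\times[-\e,\e]$ that map to the same point of $S_\e(G,f)$ are connected by a path in $f\inv([c-\e,c+\e])$, which must be rerouted to a path inside $T^\tau(G,f)$. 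Surjectivity follows because a component supporting some $y\in T^\tau S_\e(G,f)$ must actually meet $T^\tau(G,f)$, a fact witnessed by lifting the up- and down-paths of height $\tau$ from $y$ in $S_\e(G,f)$ to the thickening.

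The main obstacle is the structural lemma that underlies the rerouting. I expect the cleanest route is to exploit $\tau$-safety in two ways. First, $\tau$-tailedness prevents any fork of $(G,f)$ from lying in $U_\tau\cup D_\tau$, so each connected component of $U_\tau(G,f)$ is a genuine up-tree (no down-forks), and symmetrically each component of $D_\tau(G,f)$ is a down-tree; these tree-like pieces can be retracted away within any slab $f\inv([c-\e,c+\e])$ without disconnecting the ambient connected component, which yields the injectivity half of the bijection on components. Second, weak $\tau$-safety ensures each component of $(G,f)$ contains a point with up- and down-paths of height $\tau$, so the truncated slab is nonempty in every component that supports a point of $T^\tau S_\e(G,f)$, giving the surjectivity half. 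Function-preservation of $\phi$ is immediate from its construction, so together these give the required isomorphism $S_\e T^\tau(G,f)\cong T^\tau S_\e(G,f)$.
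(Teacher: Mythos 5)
Your construction of $\phi$ is exactly the paper's map $\psi$: applying $S_\e$ to the inclusion $\iota\from T^\tau(G,f)\hookrightarrow(G,f)$ and restricting the codomain gives precisely $\psi(q_T(x,\lambda))=q(x,\lambda)$, and your verification that the image lands in $T^\tau S_\e(G,f)$ via the product up/down-paths matches the paper's argument in \cref{lem:conflow0}. Your injectivity sketch is morally the paper's argument as well: the structural fact underneath is that, under $\tau$-tailedness, each component of $U_\tau$ (resp.\ $D_\tau$) is an up-tree (resp.\ down-tree) with at most one root in $T^\tau(G,f)$ (\cref{lem:atMostOneRoot}). The paper packages this as \cref{lem:safeSimplePath} --- a \emph{simple} path starting and ending in $T^\tau(G,f)$ already lies in $T^\tau(G,f)$, so no rerouting is needed, only a choice of simple representative --- which is cleaner than ``retract the trees away,'' but the idea is the same. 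If you do go the retraction route, be explicit that a tree component whose root falls outside the slab $f\inv([c-\e,c+\e])$ is then an \emph{entire} connected component of that slab, so a path from $T^\tau(G,f)$ cannot enter it at all.

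The surjectivity half has a genuine gap. You assert that ``a component supporting some $y\in T^\tau S_\e(G,f)$ must actually meet $T^\tau(G,f)$, a fact witnessed by lifting the up- and down-paths of height $\tau$ from $y$,'' and attribute this to weak $\tau$-safety. But weak $\tau$-safety only gives that $T^\tau(G,f)$ is nonempty in each \emph{global} component of $G$; it says nothing about the particular slab component $X_y\subseteq f\inv([c-\e,c+\e])$ corresponding to $y$, which could a priori lie entirely inside one up-tree of $U_\tau$ and miss $T^\tau(G,f)$ altogether. The paper's proof (\cref{lem:psiSurjective}) has to rule that out, and the work is nontrivial: assuming $X_y$ sits inside a component $U$ of $U_\tau$ with unique root $r$, it first derives the quantitative bound $f(r)<f(y)-\e$ by descending from $X_y$ toward $r$ inside $X_y$ and observing $r\notin X_y$; then, since $\Im(U)\subset(f(r),f(r)+\tau)$, the up-path of height $\tau$ carries $y$ to a $y'$ with $f(y')>\sup\Im(U)+\e$, so $X_{y'}$ is disjoint from $U$; and since $r$ is the only way out of $U$, the continuous family of slab components along the up-path must at some intermediate height contain $r$, forcing $f(r)\geq f(y)-\e$ and giving the contradiction. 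None of this drops out of ``lifting paths'' --- the lifted up-path projected to $G$ is not itself monotone, and the argument hinges on the root-height estimate combined with the up-path escaping above the tree. You also omit the last step (a continuous, function-preserving bijection between compact Hausdorff Reeb graphs is an isomorphism), but that part is routine.
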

    The proof is provided in \fullVerRef{ssec:apdx:commutes}.
    Combining the proposition with \cref{obs:singleEdge} and \cref{prop:longTailsCombined} gives the surprising result that the functors $T$ and $S$ do commute in the green region of \cref{fig:epstauproperties}.
    We can next use this result to show that for certain choices of $\e$ and $\tau$, we can additively combine the parameters for truncated smoothing.

    \begin{theorem}
    \label{thm:additive}
        If (1) $(G,f)$ is empty or (2) $\tau_1\leq 2\e_1$ and $(G,f)$ is weakly $(\tau_1-\e_1)$-safe,
        then
        $
        S_{\e_2}^{\tau_2} S_{\e_1}^{\tau_1}(G,f)
        \cong
        S_{\e_1+\e_2}^{\tau_1+\tau_2}(G,f)$.
        
    \end{theorem}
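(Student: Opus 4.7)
The plan is to expand both sides using $S_\e^\tau = T^\tau\circ S_\e$ and then reduce the claim to one application of \cref{prop:conflow01} together with the additivity of $T$ and the flow property of $S$. Writing
\[
S_{\e_2}^{\tau_2}\,S_{\e_1}^{\tau_1}(G,f) \;=\; T^{\tau_2}\,S_{\e_2}\,T^{\tau_1}\,S_{\e_1}(G,f)
\quad\text{and}\quad
S_{\e_1+\e_2}^{\tau_1+\tau_2}(G,f) \;=\; T^{\tau_1+\tau_2}\,S_{\e_1+\e_2}(G,f),
\]
the only nontrivial move is to commute $S_{\e_2}$ past $T^{\tau_1}$ in the middle of the left composition. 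Given that commutation, the additivity of $T$ on the nose ($T^{\tau_2}T^{\tau_1}=T^{\tau_1+\tau_2}$, a direct check by splitting and concatenating monotone paths at a midheight) and the flow property of $S$ ($S_{\e_2}S_{\e_1}\cong S_{\e_1+\e_2}$) finish the argument. The empty case is trivial since $S_\e$ and $T^\tau$ both preserve the empty Reeb graph.

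For the nonempty case, set $(H,h):=S_{\e_1}(G,f)$; to invoke \cref{prop:conflow01} at parameter $\tau_1$ I need $(H,h)$ to be $\tau_1$-safe, and this is exactly what the hypotheses deliver. The tailed half is free: by \cref{prop:longTailsCombined}, $(H,h)$ is $2\e_1$-tailed, hence $\tau_1$-tailed since $\tau_1\leq 2\e_1$. For the weak half, if $x_0\in(G,f)$ has up- and down-paths of height $\tau_1-\e_1$, then inside the thickening $G\times[-\e_1,\e_1]$ one concatenates a vertical segment of length $\e_1$ in the $[-\e_1,\e_1]$-factor with the lift of the $G$-path to obtain an up-path (resp.\ down-path) of height $\tau_1$ from $(x_0,0)$; quotienting, $\eta(x_0)\in(H,h)$ has up- and down-paths of height $\tau_1$. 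Because $\eta(G)$ hits every component of $S_{\e_1}(G,f)$ (every point of the thickening is vertically connected to its height-$0$ slice), weak $(\tau_1-\e_1)$-safety of $(G,f)$ lifts to weak $\tau_1$-safety of $(H,h)$. Thus $(H,h)$ is $\tau_1$-safe, and \cref{prop:conflow01} yields
\[
S_{\e_2}\,T^{\tau_1}\,S_{\e_1}(G,f) \;\cong\; T^{\tau_1}\,S_{\e_2}\,S_{\e_1}(G,f),
\]
after which inserting into $T^{\tau_2}(\,\cdot\,)$ and applying the two additivities completes the chain of isomorphisms.

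The main obstacle is precisely this safety verification: the two clauses of the hypothesis are tuned so that $S_{\e_1}(G,f)$ is $\tau_1$-safe, which is the minimal amount of safety needed to legally swap $S_{\e_2}$ and $T^{\tau_1}$ via \cref{prop:conflow01}. Once $\tau_1\leq \e_1$ the weak-safety clause is automatic (nonpositive weak safety is vacuous for any nonempty graph), matching the intuition that the additive law should hold unconditionally in the safe region of \cref{fig:epstauproperties}; for $\e_1<\tau_1\leq 2\e_1$ the weak $(\tau_1-\e_1)$-safety assumption is exactly the slack needed so that the pre-smoothed graph still admits enough of a ``tall spine'' for the commutation to go through.
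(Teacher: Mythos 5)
Your proposal is correct and follows essentially the same route as the paper: reduce to showing $S_{\e_1}(G,f)$ is $\tau_1$-safe, invoke \cref{prop:conflow01} to commute $S_{\e_2}$ past $T^{\tau_1}$, then finish with on-the-nose additivity of $T$ and the flow property of $S$. The one place you add genuine value is the weak-safety step: the paper simply asserts that $S_{\e_1}(G,f)$ is weakly $(\tau_1-\e_1+\e_1)$-safe, tacitly relying on the weak-safety half of the argument inside the proof of \cref{prop:longTailsCombined} (whose stated hypothesis is full $s$-safety, not just weak $s$-safety), whereas you construct the height-$\tau_1$ monotone paths explicitly in the thickening and note that $\eta$ hits every component, making the lift self-contained. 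That is a minor but welcome tightening; the skeleton of the argument is identical.
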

    \begin{proof}
        Both smoothing and truncating the empty Reeb graph yields the empty Reeb graph.
        So we are done if $(G,f)$ is the empty Reeb graph, and we obtain not only an isomorphism but an equality.
        Now suppose that $(G,f)$ is not empty.
        Then $S_{\e_1}(G,f)$ is $2\e_1$-tailed and weakly $(\tau_1-\e_1+\e_1)$-safe, and by definition $\min(2\e_1,\tau_1)\geq\tau_1$-safe.
        Therefore $S_{\e_2}T^{\tau_1}S_{\e_1}(G,f)\cong T^{\tau_1}S_{\e_2}S_{\e_1}(G,f)$, and hence using \cref{prop:conflow01},
        \[
        S_{\e_2}^{\tau_2} S_{\e_1}^{\tau_1}(G,f)
        =     T^{\tau_2} S_{\e_2}T^{\tau_1}S_{\e_1}(G,f)
        \cong T^{\tau_2}T^{\tau_1}S_{\e_2}S_{\e_1}(G,f)
        \cong S_{\e_1+\e_2}^{\tau_1+\tau_2}(G,f).
        \qedhere
        \]
    \end{proof}

    In particular, the assumptions of the theorem are satisfied if $\tau_1 \leq \e_1$ since every non-empty graph is $0$-safe.

\subsection{Truncated interleaving distance}
\label{ssec:truncInterleavingSurvey}

In this section, we survey the results related to defining the family of truncated  interleaving distances, proving that certain linear subspaces of our two parameter functor space (shown in \cref{fig:epstauproperties})  form a categorical flow.
Since any category with a flow gives an interleaving distance, we then use truncated smoothing to build a new family of metrics for Reeb graphs.

The whole idea behind building a category with a flow is that the flow itself must be functorial, which means we must have knowledge of how it acts both on objects and morphisms.
So far, the results discussed in \cref{ssec:PropertiesOfTruncationSurvey} only correspond to the object information.
In \cref{sec:mapsAndProperties}, we will describe how to explicitly build the morphisms $S_\e^\tau(G,f) \to S_{\e'}^{\tau'}(G,f)$  (i.e., function preserving maps).
However, these morphisms are only available for certain choices of parameters. 
Restricting our view only to $(\e,\tau)$ pairs for which these morphisms exist gives us that for any choice of $m \in [0,1]$ we can set $\tau = m\e$ to get a flow.

\begin{restatable}{theorem}{CategoricalFlow}
\label{thm:CategoricalFlow}
    For any $m \in [0,1]$,  the map
    $S^m\from  ([0,\infty),\le) \to \mathbf{End}(\Reeb); \e \mapsto S_\e^{m\e}$ is a functor and defines a categorical flow on $\Reeb$.

\end{restatable}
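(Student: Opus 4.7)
The plan is to verify two things: (i) that $S^m$ is a functor sending $\e$ to the endofunctor $S^m_\e$ and each inequality $\e \leq \e'$ to a natural transformation $S^m_\e \Rightarrow S^m_{\e'}$, and (ii) that $S^m$ additionally carries the lax monoidal structure of a categorical flow, namely the identification at $\e = 0$ together with natural transformations $S^m_{\e_1} \circ S^m_{\e_2} \Rightarrow S^m_{\e_1+\e_2}$ satisfying the coherence diagrams.

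For (i), each $S^m_\e = T^{m\e} \circ S_\e$ is a composition of functors---using that $S_\e$ is a functor from \cite{deSilva2016} and that $T^\tau$ is a functor established in \cref{sec:mapsAndProperties}---so is itself an endofunctor of $\Reeb$, and at $\e=0$ we have $S^m_0 = T^0 \circ S_0 = \Id_\Reeb$. For $\e \leq \e'$, I would construct the required natural transformation on an object $(G,f)$ by post-composing the canonical smoothing morphism $S_\e(G,f) \to S_{\e'}(G,f)$ with the appropriate map of truncations. The key point to check is that the image of any $x \in T^{m\e} S_\e(G,f)$ under this smoothing morphism lands inside $T^{m\e'} S_{\e'}(G,f)$: the point $x$ has up- and down-paths of height $m\e$ in $S_\e(G,f)$, and a direct calculation using the model $G \times [-\e,\e] \hookrightarrow G \times [-\e', \e']$ shows that every point acquires an additional $\e'-\e$ worth of up- and down-path length under passage to $S_{\e'}$. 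The resulting total height is $m\e + (\e'-\e) \geq m\e'$ precisely when $m \leq 1$, so this is where the parameter restriction is essential. Naturality in $(G,f)$ is inherited from naturality of the smoothing flow and functoriality of $T^\tau$, and compatibility with composition in the poset follows from the analogous compatibility for ordinary smoothing.

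For (ii), I would exhibit the multiplication natural transformations $S^m_{\e_1} \circ S^m_{\e_2} \Rightarrow S^m_{\e_1+\e_2}$ by invoking \cref{thm:additive} directly with $\tau_i = m\e_i$. The hypothesis $\tau_1 \leq 2\e_1$ holds since $m \leq 1$, and the weak $(\tau_1 - \e_1)$-safeness required in the non-empty case reduces to weak $(m-1)\e_1$-safeness, which is trivially satisfied because $(m-1)\e_1 \leq 0$ and every non-empty Reeb graph is $0$-safe; the empty case is covered by the first clause of \cref{thm:additive}. This supplies the isomorphism $S^m_{\e_1} \circ S^m_{\e_2}(G,f) \cong S^m_{\e_1+\e_2}(G,f)$ needed for the flow structure.

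The hard part will be checking that all these data assemble coherently---that the constructed natural transformations in the poset direction, the multiplication isomorphisms, and the identification at $\e = 0$ satisfy the associativity, unit, and naturality diagrams of a categorical flow. Most of this is bookkeeping leveraging the flow structure of ordinary smoothing \cite{deSilva2018}, since the additive isomorphisms of \cref{thm:additive} are natural in $(G,f)$ and built on top of the corresponding smoothing isomorphisms. The genuinely delicate step is checking that the truncation natural transformations for $\e \leq \e'$ are compatible with the multiplication isomorphisms under simultaneous variation of both $\e_1$ and $\e_2$, which I would handle by exhibiting both sides as truncations of a common intermediate smoothing and applying the commutativity result \cref{prop:conflow01} in the appropriate parameter regime, where the $\tau$-safeness hypothesis is automatic under $m \leq 1$.
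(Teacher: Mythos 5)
Your proposal is correct and follows essentially the same route as the paper: functoriality is the composite $T^{m\e}\after S_\e$ together with the restriction of $\eta$ giving the natural transformation $S_\e^{m\e}\Rightarrow S_{\e'}^{m\e'}$ (this is the paper's map $\rho$ from \cref{defn:rho} and \cref{lem:build_nat_trans}; your up/down-path height computation $m\e+(\e'-\e)\ge m\e'$ is exactly the slope-in-$[0,1]$ existence criterion for $\rho$), while the multiplication isomorphism is \cref{thm:additive} with $\tau_i=m\e_i$, which the paper packages through \cref{prop:conflow01}. The only cosmetic difference is that the paper derives $\rho$ from the quotient model $q(G\times[\tau-\e,\e-\tau])$ rather than arguing directly with up-/down-path lengths, but these are equivalent descriptions of the same map.
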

Essentially, this $m$ can be thought of as defining the slope of a line based at the origin in the parameter space of \cref{fig:epstauproperties}, and thus using \cref{thm:flowGivesMetric}, we have an interleaving distance for any line with slope less than 1.

\begin{restatable}{corollary}{truncatedInterleaving}
\label{cor:truncatedInterleaving}
For any $m \in [0,1]$, $S^m$ gives rise to an interleaving-type distance
\[
d_I^m((G,f),(H,h)) := \inf \{ \e \geq 0 \mid \text{there exists a $\e$-interleaving with respect to }S^m \}.
\]
Specifically, $d_I^m$ is an extended pseudo-metric.
\end{restatable}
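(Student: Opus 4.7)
The plan is to derive this corollary directly from the general machinery already in place: by \cref{thm:CategoricalFlow}, $S^m$ is a categorical flow on $\Reeb$ for every $m\in[0,1]$, and by \cref{thm:flowGivesMetric}, any category equipped with a flow inherits an extended pseudometric via the interleaving construction. Specializing the general construction to the flow $S^m$ produces exactly the infimum appearing in the statement, so the corollary is essentially immediate from these two results and there is almost nothing left to do beyond unpacking definitions.

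To make the plan concrete, I would first record what an $\e$-interleaving with respect to $S^m$ actually is: a pair of function-preserving maps $\phi\from(G,f)\to S_\e^{m\e}(H,h)$ and $\psi\from(H,h)\to S_\e^{m\e}(G,f)$ making the analogue of the diagram in \cref{ssec:Background} commute, with $S_\e$ replaced by $S_\e^{m\e}$ throughout, and with the rightmost column of that diagram using the additivity $S_\e^{m\e}\after S_\e^{m\e}\cong S_{2\e}^{2m\e}$ that the flow axioms package for us. The three pseudometric axioms then fall out in the standard way: reflexivity is witnessed by the $0$-interleaving consisting of two copies of $\Id$ (noting $S_0^{0}=\Id$); symmetry follows by swapping the roles of $\phi$ and $\psi$; and the triangle inequality comes from gluing an $\e$-interleaving of $(G,f)$ and $(H,h)$ to an $\e'$-interleaving of $(H,h)$ and $(K,k)$, using functoriality of $S_{\e'}^{m\e'}$ applied to the first interleaving together with the composition law $S_{\e'}^{m\e'}\after S_\e^{m\e}\cong S_{\e+\e'}^{m(\e+\e')}$ encoded in the flow.

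The distance is only \emph{extended} because, as for the original interleaving distance of~\cite{deSilva2016}, certain pairs of Reeb graphs admit no interleaving at any finite $\e$ (for example, graphs whose numbers of connected components differ), so the infimum ranges over the empty set and equals $\infty$. I do not foresee any genuine technical obstacle in the corollary itself; all the substantive work has been off-loaded into \cref{thm:CategoricalFlow}, whose proof must check the flow axioms and, crucially, exhibit natural transformations $S_{\e_1}^{m\e_1}\Rightarrow S_{\e_2}^{m\e_2}$ for $\e_1\le\e_2$ that make $S^m$ a functor on $([0,\infty),\le)$. The restriction $m\le 1$ is exactly what keeps the pair $(\e,m\e)$ inside the parameter region of \cref{fig:epstauproperties} where \cref{thm:additive} applies cleanly, so the composition law of the flow holds without pathology and the resulting infimum is a bona fide extended pseudometric.
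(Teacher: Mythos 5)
Your proposal takes exactly the paper's route: the corollary is, in the paper, an immediate consequence of \cref{thm:CategoricalFlow} combined with \cref{thm:flowGivesMetric} (the de Silva--Munch--Stefanou result that any category with a flow carries an extended pseudometric via interleavings), and your unpacking of the interleaving diagram and the three pseudometric axioms is just a spelled-out version of what \cref{thm:flowGivesMetric} packages. Everything you say is correct, and there is no substantive deviation from the paper's argument.
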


In the next theorem, we show that with the exception of $m=1$, all the metrics created are closely related in the following sense.
Two metrics $d_A$ and $d_B$ are said to be \emph{strongly equivalent} if there are positive constants $\alpha_1$ and $\alpha_2$ such that $\alpha _1 d_A \leq d_B \leq \alpha_2 d_A$.
In the following theorem, we show that $d_I^m$ and $d_I^{m'}$ are strongly equivalent if  $(m,m')$ is contained in the white region of \cref{fig:m_vs_m-prime}.

\begin{restatable}{theorem}{EquivalentMetrics}
\label{thm:EquivalentMetrics}
    For any pair 
    $m,m' \in [0,1)$ with $0 \leq m'-m < 1-m'$
    the metrics $d_I^m$ and $d_I^{m'}$ are strongly equivalent.
    Specifically, given Reeb graphs $(G,f)$ and $(H,h)$,
    \begin{equation*}
    d_I^m((G,f), (H,h))\leq d_I^{m'}((G,f), (H,h))\leq \frac{1-m}{1-m'}d_I^m((G,f), (H,h))
    \end{equation*}
\end{restatable}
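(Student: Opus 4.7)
The two inequalities are proved separately.

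For the left inequality, observe that truncation is monotone in its parameter: since $m\e \le m'\e$, every point removed by $T^{m\e}$ is also removed by $T^{m'\e}$, so $S_\e^{m'\e}(X) \subseteq S_\e^{m\e}(X)$ and the inclusion $\iota_X$ is a function-preserving morphism, natural in $X$. Postcomposing an $\e$-interleaving $(\phi,\psi)$ at slope $m'$ with $\iota$ produces candidate maps at slope $m$; the interleaving square at slope $m$ commutes by pasting the naturality square of $\iota$ against the slope-$m'$ interleaving diagram, using that the units of the slope-$m$ and slope-$m'$ flows at the same parameter are compatible through $\iota$. Taking infima over $\e$ gives $d_I^m \le d_I^{m'}$.

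For the right inequality, set $c := (1-m)/(1-m')$; the hypothesis $m'-m<1-m'$ is exactly $c<2$, and $c\ge1$ since $m\le m'$. A direct calculation yields
\[
    m\delta + (c-1)\delta = m'c\delta,\qquad \delta + (c-1)\delta = c\delta,
\]
so applying \cref{thm:additive} with $\e_1=\delta$, $\tau_1=m\delta$, $\e_2=\tau_2=(c-1)\delta$ (the safety hypothesis is vacuous since $\tau_1-\e_1\le 0$) yields a natural isomorphism $S^1_{(c-1)\delta}\bigl(S^m_\delta(X)\bigr) \cong S^{m'}_{c\delta}(X)$. This identifies the slope-$m'$ smoothing at parameter $c\delta$ as the slope-$m$ smoothing at $\delta$ followed by an extra slope-$1$ truncated smoothing at $(c-1)\delta$.

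Given a $\delta$-interleaving $(\phi,\psi)$ at slope $m$, define maps $\phi',\psi'$ at slope $m'$ and parameter $c\delta$ by postcomposing each of $\phi,\psi$ with the unit of the slope-$1$ flow at parameter $(c-1)\delta$ and then identifying the codomain via the above isomorphism. Commutation of the slope-$m'$ interleaving diagram for $(\phi',\psi')$ reduces, via functoriality of $S^1_{(c-1)\delta}$, naturality of the slope-$1$ unit, and a second application of \cref{thm:additive} at parameter $2\delta$ (identifying $\eta^{m'}_{2c\delta}$ with the composite of the slope-$m$ unit at $2\delta$ followed by the slope-$1$ unit at $2(c-1)\delta$), to the commutation of the original slope-$m$ interleaving diagram. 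This shows $d_I^{m'}((G,f),(H,h)) \le c\delta$ whenever $d_I^m((G,f),(H,h)) < \delta$, and taking infima gives $d_I^{m'}\le \tfrac{1-m}{1-m'}\,d_I^m$.

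The main obstacle is the commutation verification in the last step: it is a diagram chase that must juggle the functor $S^1_{(c-1)\delta}$, the naturality square of the slope-$1$ unit, and two instances of the additivity isomorphism at parameters $\delta$ and $2\delta$. The condition $c<2$ keeps the auxiliary parameter $(c-1)\delta$ below the base parameter $\delta$, which is what ensures all the required applications of \cref{thm:additive} succeed with the vacuous safety hypothesis inherited from $\tau_1-\e_1\le 0$, avoiding the need for additional safety assumptions on $(G,f)$ and $(H,h)$.
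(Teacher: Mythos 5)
Your proof is correct and follows essentially the same route as the paper. The left inequality is proved in the paper exactly as you propose: post-compose the slope-$m'$ interleaving with the natural inclusion (the paper's $\nu$ map, your $\iota$), and verify commutation via its naturality squares. For the right inequality, the paper directly uses the map $\rho\from S_\delta^{m\delta}(X)\to S_{c\delta}^{m'c\delta}(X)$ (defined in \cref{defn:rho} as a restriction of $\eta$), post-composes the slope-$m$ interleaving with it, and runs the diagram chase in \cref{eq:SecondInequalityDiagram} and the subsequent diagrams; your decomposition $S_{(c-1)\delta}^{(c-1)\delta}\circ S_\delta^{m\delta}\cong S_{c\delta}^{m'c\delta}$ via \cref{thm:additive} produces exactly this same morphism as the slope-$1$ unit composed with the additivity isomorphism, so the two constructions are equivalent in substance, and the commutation verification you sketch is the same diagram chase the paper spells out. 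One small caveat: your explanation that $c<2$ is what keeps the applications of \cref{thm:additive} valid does not appear to be the operative constraint --- as you yourself note, the weak-safety hypothesis is vacuous whenever $\tau_1\le\e_1$, independently of $c$, and the paper's proof does not invoke $c<2$ in any step I can identify; the hypothesis $m'-m<1-m'$ seems to play its role only in delineating the direct-comparison region of \cref{fig:m_vs_m-prime} before \cref{cor:equivalenceAll_m} zigzags to general pairs. This does not affect the validity of your argument, only the stated justification of the hypothesis.
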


The proof of this theorem is contained in \fullVerRef{sec:appendix_strongequivmetrics}.
Of course, as long as we are willing to loosen the bounds, this result extends to any pair of $m,m' \in [0,1)$.

\begin{figure}
    \centering
    \includegraphics{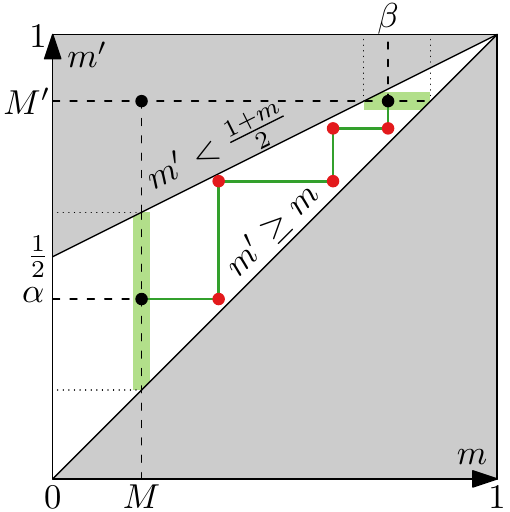}
    \caption{Parameter space for comparing metrics $d_I^{m}$ and $d_I^{m'}$.
    The white region is allowable pairs for \cref{thm:EquivalentMetrics}.
    The vertices of the zigzag (shown as red points) give pairs of strongly equivalent metrics which, when combined, show that $M$ and $M'$ are strongly equivalent in \cref{cor:equivalenceAll_m}.}
    \label{fig:m_vs_m-prime}
\end{figure}
\begin{corollary}
\label{cor:equivalenceAll_m}
For all pairs $0 \leq M \leq M' <1$,
there exist positive constants $C_1$ and $C_2$ dependent on $M$ and $M'$ such that
\begin{equation*}
    C_1 d_I^M((G,f),(H,h))
    \leq d_I^{M'}((G,f),(H,h))
    \leq C_2 d_I^M((G,f),(H,h)),
\end{equation*}
and thus $d_I^M$ and $d_I^{M'}$ are strongly equivalent metrics.
\end{corollary}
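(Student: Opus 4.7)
The plan is to chain together finitely many applications of \cref{thm:EquivalentMetrics}. That theorem delivers strong equivalence only when $M' - M < 1 - M'$, i.e.~when the pair $(M,M')$ lies in the ``white region'' of \cref{fig:m_vs_m-prime}. For an arbitrary pair $0 \leq M \leq M' < 1$ this hypothesis may fail, but one can interpolate via a finite sequence $M = m_0 < m_1 < \cdots < m_k = M'$ with each consecutive pair inside the white region, and then telescope the inequalities from \cref{thm:EquivalentMetrics} along this chain.

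To construct such a sequence, I would fix any $\lambda \in (0, \tfrac{1}{2})$ and define
\[
m_{i+1} := \min\bigl\{M',\; m_i + \lambda(1 - m_i)\bigr\},
\]
iterating until the first index $k$ at which $m_k = M'$. For any intermediate step where the minimum is not yet attained by $M'$, we have $m_{i+1} - m_i = \lambda(1-m_i)$ and $1 - m_{i+1} = (1-\lambda)(1-m_i)$, whence $m_{i+1} - m_i < 1 - m_{i+1}$ since $\lambda < \tfrac{1}{2}$; a short verification shows that the clamped terminal step $m_k = M'$ also satisfies the white-region condition, because at that step $M' - m_{k-1} < \lambda(1-m_{k-1}) < (1-\lambda)(1-m_{k-1}) \leq 1 - M'$. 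The sequence terminates after finitely many steps because $1 - m_i$ decays by a factor of at least $1-\lambda$ per unclamped step, so it eventually drops below $1 - M'$ and forces the clamp to activate.

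Finally, I would apply \cref{thm:EquivalentMetrics} to each consecutive pair $(m_i, m_{i+1})$ to obtain
\[
d_I^{m_i}((G,f),(H,h)) \;\leq\; d_I^{m_{i+1}}((G,f),(H,h)) \;\leq\; \frac{1-m_i}{1-m_{i+1}}\, d_I^{m_i}((G,f),(H,h)),
\]
and telescope. Composing the lower bounds immediately yields $d_I^M \leq d_I^{M'}$, while composing the upper bounds produces a product that collapses to
\[
d_I^{M'}((G,f),(H,h)) \;\leq\; \prod_{i=0}^{k-1} \frac{1-m_i}{1-m_{i+1}}\, d_I^{M}((G,f),(H,h)) \;=\; \frac{1-M}{1-M'}\, d_I^{M}((G,f),(H,h)),
\]
so the constants $C_1 = 1$ and $C_2 = (1-M)/(1-M')$ suffice. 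The only substantive point in the argument is verifying that a finite zigzag of white-region pairs connects $M$ to $M'$; the geometric decay of $1 - m_i$ takes care of this, and the telescoping is then routine.
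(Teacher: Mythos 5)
Your proposal is correct, and it follows the same underlying strategy as the paper's own proof: chain \cref{thm:EquivalentMetrics} through a finite sequence of intermediate slopes lying pairwise in the white region, then invoke transitivity of strong equivalence. Where it goes beyond the paper's argument is in being fully constructive and quantitatively sharper. The paper's proof gestures at an existence argument---pick $\alpha$ near $M$ and $\beta$ near $M'$ and assert a zigzag in parameter space joining them---without constructing it or tracking the resulting constants; indeed the paper follows the corollary with a remark that ``further exploration is needed to determine which, if any, provide optimal constants.'' Your iterated update $m_{i+1} = \min\{M',\,m_i + \lambda(1-m_i)\}$ with $\lambda \in (0,\tfrac12)$ makes the chain explicit, and the geometric decay of $1-m_i$ cleanly handles termination. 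More importantly, you observe that the per-step upper bound $\tfrac{1-m_i}{1-m_{i+1}}$ telescopes to $\tfrac{1-M}{1-M'}$, so that any monotone chain through the white region yields the same constant $C_2 = \tfrac{1-M}{1-M'}$ (and $C_1=1$). This is a genuine improvement on the paper's argument: it shows that the bound from \cref{thm:EquivalentMetrics} extends verbatim to all pairs $0 \le M \le M' < 1$, with no degradation from chaining, which the paper itself leaves open. One small stylistic note: at the clamped terminal step you should phrase the antecedent as $M' \le m_{k-1} + \lambda(1-m_{k-1})$ rather than a strict inequality (equality can occur), but both subcases satisfy the white-region condition by the same computation, so the argument stands.
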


\noindent\textit{Proof.}
Consider $M$, $M'$ given with  $M \leq M'$.
If $M' \leq \frac{1+M}{2}$, then \cref{thm:EquivalentMetrics} applies directly.  Otherwise, we assume  that $M' \geq \frac{1+M}{2}$.
Then $d_I^M$ is equivalent to $d_I^{\alpha}$ for any $\alpha$ in the interval 
$(M,\frac{1+M}{2})$ and $d_I^{M'}$ is equivalent to $d_I^{\beta}$ for any $\beta$ in the interval $(2M'-1,M')$.
Then there is a zigzag like the example in \cref{fig:m_vs_m-prime} between $\alpha$ and $\beta$ which remains in the white region and for which each adjacent pair are strongly  equivalent metrics.
Equivalence of metrics is transitive, so this implies $d_I^M$ and $d_I^{M'}$ are equivalent.
\qed

In particular, this corollary gives that the original Reeb graph interleaving distance (where $m=0$) is strongly equivalent to $d_I^m$ for all $m \in[0,1)$.
We note that there are many possible zigzag paths which can be used to obtain this bound, but further exploration is needed to determine which, if any, provide optimal constants.

\subsection{Properties of the metrics}
\label{ss:propertiesOfMetric}

As noted, $d_I^m$ is an extended pseudometric, which means that it is possible for $d_I^m((G,f), (H,h))$ to be infinite.
However, it turns out this is not the case for broad classes of graphs.  In fact, in order to take infinite value, there must be no $\e$-interleaving with respect to $S^m$ between the two Reeb graphs.
That being said, there are very specific instances where this metric takes on infinite value.

The easiest case to handle is when $m \in [0,1)$, since we can use the characterization given in \cite{deSilva2016} in conjunction with the equivalence of metrics \cref{cor:equivalenceAll_m}.

\begin{proposition}
Let $m \in [0,1)$.
Then $d_I^m((G,f),(H,h)) < \infty $ iff $G$ and $H$ have the same number of path-connected components.
\end{proposition}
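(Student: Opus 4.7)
The plan is to obtain this result as a direct corollary of the strong equivalence of metrics established in \cref{cor:equivalenceAll_m}, bootstrapping off the known characterization of the classical Reeb graph interleaving distance ($m=0$) from \cite{deSilva2016}.

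First, I would invoke the result of de~Silva, Munch, and Patel, which establishes that $d_I^0((G,f),(H,h)) < \infty$ if and only if $G$ and $H$ have the same number of path-connected components. This handles the base case of the claim at $m=0$.

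Next, for arbitrary $m \in [0,1)$, I would apply \cref{cor:equivalenceAll_m} with $M = 0$ and $M' = m$. This yields positive constants $C_1, C_2 > 0$ (depending on $m$) such that
\begin{equation*}
    C_1\, d_I^0((G,f),(H,h)) \;\leq\; d_I^m((G,f),(H,h)) \;\leq\; C_2\, d_I^0((G,f),(H,h)).
\end{equation*}
From the upper bound, if $d_I^0$ is finite then $d_I^m$ is finite. From the lower bound, together with $C_1 > 0$, if $d_I^0 = \infty$ then $d_I^m = \infty$. Hence $d_I^m$ is finite if and only if $d_I^0$ is, which combined with the first step establishes the proposition.

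The proof is essentially immediate once \cref{cor:equivalenceAll_m} is in hand, so there is no real obstacle. The only point requiring minor care is the handling of the extended real value $\infty$ in the strong equivalence inequality: one must interpret $C \cdot \infty = \infty$ for $C > 0$ so that the equivalence relation genuinely transfers finiteness in both directions. This is the standard convention for extended pseudometrics, and no additional argument is needed beyond noting it.
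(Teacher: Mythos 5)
Your argument matches the paper's own proof exactly: cite \cite[Prop.~4.5]{deSilva2016} for the $m=0$ case and then transfer finiteness in both directions via the strong equivalence of \cref{cor:equivalenceAll_m}. You simply spell out the two-sided inequality and the $C\cdot\infty=\infty$ convention more explicitly than the paper does, but the substance is identical.
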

\begin{proof}
Note that $d_I^0 = d_I$. 
By \cite[Prop.~4.5]{deSilva2016}, $d_I((G,f), (H,h))$ is finite if and only if $G$ and $H$ have the same number of path connected components.
This combined with \cref{cor:equivalenceAll_m} gives the proposition.
\end{proof}

The characterization of when $d_I^m$ is infinite for $m=1$ is more complicated.
Consider a connected graph $(G,f)$ with $\Im(G,f) = [a,b]$.
When $m=1$, we are interested in understanding the behavior of $S_\e^\e(G,f)$.
By \cref{prop:ImageOfSmoothingAndTruncating}, we see that $b-a \geq 2(\tau-\e) = 0$, so  $S_\e^\e(G,f) = [a,b]$.
That is to say that the image of $(G,f)$ is unchanged by $S_\e^\e$.
Now, if we wanted to determine the interleaving distance $d_I^1$ for a given $(G,f)$ and $(H,h)$, one requirement is always that we must smooth the given graphs enough for there to be a morphism $(G,f) \to S_\e^\e(H,h)$.
However, because $S_\e^\e$ does not change the image, the function preserving requirement of morphisms mean that if the graphs did not start with the same image no choice of $\e$ will make this possible.
With this example in mind, we can characterize when $d_I^m$ takes on infinite values for $m=1$.

\begin{proposition}
\label{prop:m1_infinite}
Let $m =1$ and assume $G$ and $H$ are connected.
Then $d_I^m((G,f),(H,h)) < \infty$ if and only if $\Im(G,f) = \Im(H,h)$.

Further, if $\Im(G,f) = \Im(H,h)$, then  $d_I^m((G,f),(H,h)) \leq |\Im(G,f)|$. 

\end{proposition}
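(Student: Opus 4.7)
The proof splits into two directions; the forward direction uses \cref{prop:ImageOfSmoothingAndTruncating} directly, while the reverse direction (and the bound) rests on a structural lemma.

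\textbf{Forward direction.} If $d_I^1((G,f),(H,h))<\infty$, then for some $\e$ there is a function-preserving map $\phi\from(G,f)\to S_\e^\e(H,h)$. With $\tau=\e$, both $\tau\leq 2\e$ and $b-a\geq 0=2(\tau-\e)$ hold, so \cref{prop:ImageOfSmoothingAndTruncating} gives $\Im(S_\e^\e(H,h))=\Im(H,h)$. Function-preservation then forces $\Im(G,f)\subseteq\Im(H,h)$, and the symmetric argument using $\psi$ yields the reverse inclusion.

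\textbf{Reverse direction.} Assume $\Im(G,f)=\Im(H,h)=[a,b]$ and set $\e=b-a$. The crux is the lemma:
\emph{for any connected $(K,k)$ with $\Im(K,k)=[a,b]$ and any $\e'\geq b-a$, $S_{\e'}^{\e'}(K,k)\cong L_{[a,b]}$.}
I would prove this by direct level-set analysis of the thickening $K\times[-\e',\e']$. At any height $c\in[a,b]$, the level set is homeomorphic to $k\inv([c-\e',c+\e']\cap[a,b])$, and the conditions $c-\e'\leq a$ and $c+\e'\geq b$ collapse this to $k\inv([a,b])=K$, which is connected. So $S_{\e'}(K,k)$ has a unique ``trunk'' point at every height in $[a,b]$, forming a line segment from height $a$ to height $b$. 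The truncation $T^{\e'}$ removes every point at height $c>b$ (no up-path of height $\e'$, since $c+\e'$ exceeds the maximum $b+\e'$ of $\Im(S_{\e'}(K,k))$ given by \cref{prop:ImageOfSmoothedGraph}) and every point at height $c<a$ symmetrically, while every trunk point retains both required paths: follow the trunk to height $b$ and then continue monotonically into the upper region of $S_{\e'}(K,k)$, which contains a point at height $b+\e'$; symmetrically down. What remains is exactly the trunk, so $S_{\e'}^{\e'}(K,k)\cong L_{[a,b]}$.

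\textbf{Bound via the lemma.} Applying the lemma to $G$ and $H$ at parameters $\e$ and $2\e$ (both $\geq b-a$), all of $S_\e^\e(G,f)$, $S_\e^\e(H,h)$, $S_{2\e}^{2\e}(G,f)$, $S_{2\e}^{2\e}(H,h)$ are isomorphic to $L_{[a,b]}$. Because $L_{[a,b]}$ has exactly one point at each height in $[a,b]$, every function-preserving map into $L_{[a,b]}$ is uniquely determined: it must send $x$ to the unique point at height $f(x)$. Define $\phi\from(G,f)\to S_\e^\e(H,h)$ and $\psi\from(H,h)\to S_\e^\e(G,f)$ as this projection-to-height; the flow's structure map $\eta$ and the functorially induced maps $S_\e^\e[\phi]$, $S_\e^\e[\psi]$, $S_\e^\e[\eta]$ are equally forced, since each targets an $L_{[a,b]}$-like object. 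Every directed composition in the interleaving diagram is therefore the projection-to-height from its source, so the diagram commutes trivially. This yields a $(b-a)$-interleaving with respect to $S^1$, hence $d_I^1((G,f),(H,h))\leq b-a=|\Im(G,f)|$. The main technical obstacle is the trunk lemma---specifically, verifying that the trunk endpoints admit monotone paths reaching the global extrema $b+\e'$ and $a-\e'$ through the possibly non-trivial upper and lower regions of $S_{\e'}(K,k)$, since connectedness alone only guarantees arbitrary (not monotone) paths.
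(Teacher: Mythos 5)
Your proof follows essentially the same route as the paper's: the forward direction is identical, and the reverse direction rests on the same key observation that $(f+\Id)^{-1}(c) \cong G$ is connected for $c \in [a,b]$ once $\e \ge b-a$, so $S_\e(G,f)$ has a unique point at each height in $[a,b]$. The ``technical obstacle'' you flag at the end---verifying that trunk points admit the required monotone up- and down-paths of height $\e'$---is one you can close without any new work: you already invoked \cref{prop:ImageOfSmoothingAndTruncating}, which gives $\Im(S_{\e'}^{\e'}(K,k)) = [a,b]$ directly, and since $S_{\e'}^{\e'}(K,k)$ is a subgraph of $S_{\e'}(K,k)$ whose only points at heights in $[a,b]$ are the trunk points, the truncated smoothing must be exactly the trunk. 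This is precisely the shortcut the paper takes (``Because the $S_\e^\e$ smoothing maintains the image, this implies \dots is a single line segment''), avoiding any separate argument about monotone paths through the upper and lower regions. Your direct argument is also salvageable (for a trunk point at height $c$, take $x\in K$ with $k(x)=b$; the path $t\mapsto q(x,t)$ for $t\in[0,\e']$ is a monotone up-path of height $\e'$ from the trunk point at height $b$, and concatenating with the trunk gives height $(b-c)+\e'\geq \e'$), but it is unnecessary.
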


\begin{proof}
Note that by \cref{prop:ImageOfSmoothingAndTruncating}, for any connected $G'$ with $\Im(G',f') = [a,b]$, $\Im S_\e^\e(G',f') = [a,b]$.
So the truncated smoothing maintains the image for every connected component, and thus for the union of the connected components.
Thus, we have $\Im(G,f) = \Im(S_\e^\e(G,f))$ and $\Im(H,h) = \Im(S_\e^\e(H,h))$ for any choice of $\e$.

Assume we have an $S_\e^\e$ interleaving $\phi\from (G,f) \to S_\e^\e(H,h)$ and $\psi\from (H,h) \to S_\e^\e(G,f)$.
Because $\phi$ and $\psi$ are  function preserving,
$\phi(G) = \Im(G,f) \subseteq \Im(S_\e^\e(H,h))$
and
$\psi(H) = \Im(H,h) \subseteq \Im(S_\e^\e(G,f))$.
But since $S_\e^\e$ leaves the images unchanged, this implies that $\Im(G,f) = \Im(H,h)$.

Now assume $\Im(G,f) = \Im(H,h)$.  
Let $\e = |\Im(G,f)|$ and consider the thickening $G \times [-\e,\e]$ and a value $a \in \Im(G,f)$. 
We claim that $(f+\Id)\inv(a)\subseteq G \times [-\e,\e]$ is exactly $A = \{(x,a-f(x))\mid x\in G\}$ and in particular, that it is homeomorphic to $G$.
Indeed, for any $x \in G$, $a-f(x)\in [-\e,\e]$ and the point $y = (x,a-f(x))$ has image $(f+\Id)(y) = a$ so $A \subseteq (f+\Id)\inv(a)$. 
Moreover, for any $(x,t) \in (f+\Id)\inv(a)$, $f(x) + t = a$ so $t = a-f(x)$, thus $(f+\Id)\inv(a) \subseteq A$. 

So, since $G$ is connected and $f$ is continuous, $(f+\Id)\inv(a) \cong G$ is a single connected component for any $a \in \Im(G,f)$, and the same is true for $(H,h)$.
Because the $S_\e^\e$ smoothing maintains the image, this implies  $S_\e^\e(G,f) = S_\e^\e(H,f)$ is a single line segment with the same image. 
We obtain an interleaving by simply sending  every point in $(G,f)$ to the unique point at the same height in $S_\e(H,h)$ and vice versa, so the $d_I^m$ distance is finite. 
\end{proof}

We next investigate stability, for this collection of metrics.
\begin{definition}
Let $(\X,f)$ and $(\X,g)$ be $\R$-spaces with the same total space $\X$, and let $R(\X,f)$ and $R(\Y,g)$ be the respective Reeb graphs. A metric $d$ is said to be stable if
\begin{equation*}
    d(R(\X,f),R(\X,g)) \leq \|f-g\|_\infty.
\end{equation*}
\end{definition}
The original Reeb interleaving distance, $m=0$, is stable \cite[Thm 4.4]{deSilva2016}.
Unfortunately, $d_I^m$ is not stable in the strictest sense; to see why, consider the following simple example.
Consider two simple line segments for graphs, for example, $(L,f_1)$ and $(L,f_2)$ where
$\Im(L,f_1) = [-a,a]$
and
$\Im(L,f_2) = [-b,b]$ for $a < b$.
Then $\|f_1-f_2\|_\infty = b-a$.
However, the interleaving distance requires that we smooth at least until $[-b,b] = \Im(L,f_2) \subseteq \Im(S_\e(L,f_1))$.
But by \cref{prop:ImageOfSmoothingAndTruncating}, $\Im(S_\e(L,f_1)) = [a-(\e-m\e),a+(\e-m\e)]$.
Thus $d_I^m(f_1,f_2) \geq \frac{b-a}{1-m} \geq b-a$, and is strictly greater if $m \neq 0$.
This means that $b-a = \|f_1-f_2\|_\infty < d_I^m((L,f_1),(L,f_2))$, and thus $d_I^m$ is not stable.

We can regain at least partial control of the distance, however, as $d_I^m$ is still Lipschitz when given a fixed choice of $m$.

\begin{proposition}
\label{prop:stabilityWIthConstant}
Let $m  \in [0,1)$.
Assume $(\X,g_1)$ and $(\X,g_2)$ are given for a connected space $\X$ and denote the associated Reeb graphs by $(G,f)$ and $(H,h)$ respectively.
Then there is a positive constant $C$ dependent on $m$ for which
\begin{equation*}
    d_I^m((G,f), (H,h)) \leq C \|g_1 - g_2\|_\infty.
\end{equation*}
\end{proposition}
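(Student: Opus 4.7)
The plan is to leverage the strong equivalence of the metrics $d_I^m$ with the original Reeb interleaving distance $d_I = d_I^0$, combined with the known stability of $d_I^0$. Essentially, the proposition should follow by composing two results already established in the paper and cited literature, so no genuinely new argument is needed.

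First, since $\X$ is connected, both $(G,f) = R(\X,g_1)$ and $(H,h) = R(\X,g_2)$ are connected Reeb graphs, so in particular they have the same number of path-connected components. By the stability of the original Reeb interleaving distance \cite[Thm.~4.4]{deSilva2016}, we therefore have
\[
d_I^0((G,f),(H,h)) \leq \|g_1 - g_2\|_\infty,
\]
and this quantity is finite.

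Next, I would apply \cref{cor:equivalenceAll_m} with $M = 0$ and $M' = m$, which yields a positive constant $C$ depending only on $m$ (but not on the Reeb graphs) such that
\[
d_I^m((G,f),(H,h)) \leq C \cdot d_I^0((G,f),(H,h)).
\]
Chaining these two inequalities gives $d_I^m((G,f),(H,h)) \leq C \|g_1 - g_2\|_\infty$, as required.

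The only point that merits a brief remark concerns the explicit form of the constant. For $m \in [0, \tfrac{1}{2})$ one may apply \cref{thm:EquivalentMetrics} directly with the pair $(0,m)$, obtaining $C = \tfrac{1}{1-m}$. For $m \in [\tfrac{1}{2},1)$ one must instead use the zigzag construction from the proof of \cref{cor:equivalenceAll_m}; the resulting constant depends on the number of zigzag steps but remains finite for every $m < 1$. Since there is no genuine obstacle beyond invoking prior results, the only care needed is to ensure the equivalence constant is obtained before multiplying by the stability bound, rather than vice versa; this makes explicit that $C$ depends on $m$ but not on $g_1, g_2$.
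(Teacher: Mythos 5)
Your argument is exactly the paper's own proof: cite \cref{cor:equivalenceAll_m} to obtain $d_I^m \leq C\, d_I^0$ for a constant $C$ depending only on $m$, then invoke the stability of $d_I^0$ from \cite[Thm.~4.4]{deSilva2016} to conclude. The additional remark on the explicit form of $C$ (direct bound $\tfrac{1}{1-m}$ for $m < \tfrac{1}{2}$, zigzag for larger $m$) is consistent with the paper's note that the optimal zigzag and constant are left unexplored, and does not change the approach.
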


\begin{proof}
By \cref{cor:equivalenceAll_m}, $d_I^0$ and $d_I^m$ are strongly equivalent metrics, so there is a positive constant $C$ for which $d_I^m \leq C d_I^0$.
Then because the Reeb graph interleaving distance $d_I^0$ is stable, we have
\[d_I^m((G,f), (H,h)) \leq C d_I^0((G,f), (H,h)) \leq C \|g_1-g_2 \|.\qedhere\]
\end{proof}

Because of the dependence on \cref{cor:equivalenceAll_m} where the optimal choice of zigzag to find the constant $C$ is unclear, we do not give an explicit formulation here.
We conclude by connecting our extended pseudometric to two other metrics for Reeb graphs, the functional distortion distance \cite{Bauer2014} and the bottleneck distance \cite{Oudot2017a}.
The proof is a straightforward implication of inequalities, so due to space constraints we simply state these results without formally defining either. 
The interested reader  can find further details on the metrics in \cite{Bauer2014} and \cite{Oudot2017a}.

\begin{proposition}
\label{Prop:FD}
The truncated interleaving distance is strongly equivalent to the functional distortion distance. 
Further, defining $d_B$ as the bottleneck distance of the level set persistent homology, we have the inequality
        $d_B  \leq 5 d_I^m$.
\end{proposition}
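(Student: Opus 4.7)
The plan is to derive both statements by chaining our new equivalence result, \cref{cor:equivalenceAll_m}, together with bounds already present in the literature, exploiting both transitivity of strong equivalence and the monotonicity of $d_I^m$ in $m$ that is implicit in \cref{thm:EquivalentMetrics}.

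For the functional distortion claim, I would first note that strong equivalence is a transitive relation: if $d_A$ and $d_B$ are strongly equivalent with constants, and $d_B$ and $d_C$ are strongly equivalent with constants, then so are $d_A$ and $d_C$ (with the composed constants). Fix $m \in [0,1)$. By \cref{cor:equivalenceAll_m} applied with $M=0$ and $M'=m$, there exist positive constants $C_1, C_2$ such that $C_1 d_I^0 \leq d_I^m \leq C_2 d_I^0$. Independently, the strong equivalence of $d_I^0 = d_I$ with the functional distortion distance $d_{FD}$ is established in \cite{Bauer2020a}. Composing the two pairs of inequalities yields the desired strong equivalence between $d_I^m$ and $d_{FD}$. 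The restriction to $m \in [0,1)$ is essential, since \cref{prop:m1_infinite} shows that $d_I^1$ can be infinite between Reeb graphs with different images even when $d_{FD}$ is finite.

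For the bottleneck inequality, the key observation is that \cref{thm:EquivalentMetrics} gives monotonicity of $d_I^m$ in $m$: whenever $0 \leq m \leq m'$ with $m'-m < 1-m'$, the bound $d_I^m \leq d_I^{m'}$ holds. In particular, specializing to the pair $(0,m)$ when allowed, and then iterating along a zigzag of admissible pairs as in the proof of \cref{cor:equivalenceAll_m} when $m \geq 1/2$, gives $d_I^0 \leq d_I^m$ for every $m \in [0,1)$. It therefore suffices to establish $d_B \leq 5 d_I^0$, and this follows by composing the known bound $d_B \leq \alpha \, d_{FD}$ relating levelset bottleneck distance to functional distortion from \cite{Bauer2014} with the inequality $d_{FD} \leq \beta \, d_I$ from \cite{Bauer2020a}, where the product $\alpha\beta$ comes out at most $5$. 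The desired bound $d_B \leq 5 d_I^0 \leq 5 d_I^m$ is then immediate.

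The main obstacle is purely bookkeeping: carefully extracting the sharpest available literature constants along the chain $d_B \leq \alpha \, d_{FD} \leq \alpha\beta \, d_I^0$ to verify that the product is at most $5$, rather than a larger value that would require a weaker statement. No fundamentally new argument is needed beyond the two infrastructural pieces supplied by \cref{cor:equivalenceAll_m} and the monotonicity in \cref{thm:EquivalentMetrics}; once those are in hand, the entire proof reduces to a composition of standard metric inequalities, consistent with the paper's remark that the argument is a straightforward implication of known bounds.
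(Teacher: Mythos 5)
Your argument for the first claim follows the same route as the paper — strong equivalence of $d_I^0$ with $d_{FD}$, then transitivity via \cref{cor:equivalenceAll_m} — but the citation is off: the $d_I \leq d_{FD} \leq 3 d_I$ result is \cite[Thm.~16]{Bauer2015b}, not \cite{Bauer2020a} (the latter is the universal distance / stability paper). That is a bookkeeping slip, not a logical one.

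The bottleneck inequality is where there is a genuine gap. You route through $d_{FD}$, chaining $d_B \leq \alpha\, d_{FD}$ from \cite{Bauer2014} and $d_{FD} \leq \beta\, d_I^0$ from Bauer--Munch--Wang, then multiply. The constants in the literature are $\alpha = 3$ and $\beta = 3$, so this chain produces $d_B \leq 9\, d_I^0$, not $5\, d_I^0$. The constant $5$ does not come from composing through the functional distortion distance; it comes directly from \cite[Thm.~4.13]{Botnan2018a}, which bounds the bottleneck distance of levelset persistence by the Reeb interleaving distance via an algebraic-stability argument for zigzag modules, bypassing $d_{FD}$ entirely. You flag the constant-bookkeeping as ``the main obstacle,'' and indeed it is fatal here: the products do not come out at $5$, so the stated inequality cannot be reached by your route. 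Your monotonicity observation $d_I^0 \leq d_I^m$ (the first inequality of \cref{thm:EquivalentMetrics}, propagated by a zigzag) is correct and matches the final step in the paper's argument; the fix is to replace the two-step chain with the direct citation to Botnan--Lesnick and then apply that monotonicity.
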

\begin{proof}
The interleaving distance, $d_I^0$, is strongly equivalent to
the functional distortion distance  by \cite[Thm 16]{Bauer2015b}.
So by \cref{cor:equivalenceAll_m} and transitivity of strong equivalence, they are each strongly equivalent to $d_I^m$ for any $m \in [0,1)$.
To obtain the inequality, we use the bound on the bottleneck distance of level set persistent homology by the Reeb graph interleaving distance in  \cite[Thm.~4.13]{Botnan2018a}.
\end{proof}

\section{Categories and interleavings }
\label{sec:CategoriesBackground}

 A key tool in our operations on Reeb graphs comes from a category theory perspective, so we begin filling in the holes left behind in \cref{ssec:truncInterleavingSurvey} by not fully describing the categorical aspects of the  constructions discussed.
 We briefly review some essential concepts, but refer the reader to \cite{Riehl2017} for a background in category theory, as well as to prior work on category theory for Reeb graphs~\cite{Munch2016} and categories with a flow \cite{deSilva2018} for more details.

 \subsection{Categories, flows, and interleaving distances}
 \label{ssec:categories}
 A category $\CC$ is a collection of objects, a collection of morphisms between the objects.
 We further require an associative composition operator which can compose any two of the morphisms, and that every object $c \in \CC$ has an identity morphism $\Id_c\from c \to c$.
 Mathematics is full of examples, from sets to vector spaces, as well as constructible $\mathbb{R}$-spaces.
 Denote by {\Reeb} the  category of Reeb graphs with  function preserving maps as the morphisms.

 A functor $F\from \CC \to \DD$ is a map between any two categories sending objects to objects: $F(c) = d$;
 and morphisms to morphisms: $F[\phi]\from F(c) \to F(d)$ for $\phi\from c \to d$.
 This collection of data must preserve composition and identities, so $F[\phi \circ \psi] = F[\phi]\circ F[\psi]$ and $F[\Id_c] = \Id_{F(c)}$.
 Some examples of useful functors are homology $H_k\from\mathrm{Top} \to \mathrm{Vect}$ from topological spaces to vector spaces (assuming field coefficients), or the functor $\pi_0\from \mathrm{Top} \to \mathrm{Set}$ sending a topological space to the set of its path-connected components.

 We can treat the collection of functors from $\CC$ to $\DD$ as a category in itself, where the morphisms are given by \textit{natural transformations}.
 Specifically, given $F,G\from \CC \to \DD$, a \textit{natural transformation} $\eta\from F \Rightarrow G$ is a collection of morphisms $\eta_c\from F(c) \to G(c)$ such that
 \begin{equation*}
     \begin{tikzcd}
        F(c) \ar[r,  "\eta_c"] \ar[d, "{F[\phi]}"'] & G(c) \ar[d, "{G[\phi]}"] \\
        F(c') \ar[r, "\eta_{c'}"] & G(c')
     \end{tikzcd}
 \end{equation*}
 commutes for any morphism $\phi:c \to c'$ in $\CC$.
 This functor category  with objects as functors and morphisms given by natural transformations is denoted $\DD^\CC$.
 A natural transformation is a \textit{natural isomorphism} if every map $\eta_c$ is an isomorphism.
 When we have a natural isomorphism between functors we write $F \cong G$.
  A special case of the functor category is when $\DD = \CC$.
 A functor from a category to itself, $F:\CC \to \CC$, is called an endomorphism and the category of all such functors with natural transformations is denoted $\End(\CC)$.

The end goal of this paper is to  study flows of Reeb graphs, where the idea is to have a 1-parameter varying collection of Reeb graphs satisfying nice properties.
For this, we look to the definition of a category with a flow given in \cite{deSilva2018}.

 \begin{definition}
 \label{defn:Flow}
  Let $[0,\infty)$ denote the poset category of positive real numbers with morphisms given by $\leq$.
 Given a category $\CC$, a \emph{categorical flow} is a functor $F:[0,\infty) \to \End(\CC)$, $\e \mapsto F_\e$, with $F_0 \cong \1_\CC$ and $F_aF_b \cong F_{a+b}$ for all $a,b \geq 0$.
 \end{definition}

 Note that this definition is hiding quite a bit of infrastructure.
 In particular, $F_\e$ is a functor, so it gives rise to a morphism $F_\e[\phi]$ for every morphism $\phi$ of $\CC$.
 The fact that $F$ is a functor means we get a natural transformation $F[\e \leq \e']: F_\e \Rightarrow F_{\e'}$ for every $\e \leq \e'$.
 That this is a natural transformation means that
 \begin{equation*}
 \begin{tikzcd}[column sep = huge]
    F_\e(c)
        \ar[r,  "{F[\e \leq \e']_c}"]
        \ar[d, "{F_\e[\phi]}"']
    & F_{\e'}(c)
        \ar[d, "{F_{\e'}[\phi]}"]
    \\
    F_{\e}(c')
        \ar[r, "{F[\e \leq \e']_{c'}}"']
    & F_{\e'}(c')
 \end{tikzcd}
 \end{equation*}
 commutes for every morphism $\phi$ of $\CC$.
 The final requirement, $F_aF_b \cong F_{a+b}$,  checks that flowing by $a$ and then $b$ is at least closely related to flowing by the total amount $a+b$ all at once.

This definition is particularly useful since it can be used to provide an interleaving distance for any category with a given flow.

\begin{definition}
\label{defn:interleavingGeneral}
Given a category with a (categorical) flow $(\CC,F)$ and two objects $X,Y \in \CC$, an $\e$-interleaving of $X$ and $Y$ is a pair of morphisms ${\color{violet}\phi}: X \to F_\e Y$ and ${\color{OliveGreen}\psi}: Y \to F_\e X$ such that
\begin{equation}
\label{eq:categoricalInterleaving}
    \begin{tikzcd}[column sep = 10em]
        X \ar[r, "{F[0 \leq \e]}"]
            \ar[dr, violet, "\phi" description, near start]
        & F_\e X
            \ar[dr, violet, "{F_\e[\phi]}" description, very near start]
            \ar[r, "{F[\e \leq 2\e]}"]
        & F_{2\e} X \\
        Y
            \ar[ur, OliveGreen, dashed, crossing over, "\psi" description,  near start]
            \ar[r, "{F_\e[0 \leq \e]}"']
        & F_\e Y
            \ar[ur,  OliveGreen, dashed, crossing over, "{F_\e[\psi]}" description, very near start]
            \ar[r, "{F[\e \leq 2\e]}"']
        & F_{2\e Y}
    \end{tikzcd}
\end{equation}
commutes.
Then, the interleaving distance is given by
\begin{equation*}
   d_{(\CC,F)}(X,Y) = \inf \{ \e \geq 0 \mid X,Y \text{ are } \e\text{-interleaved} \}.
\end{equation*}
\end{definition}
Note that $d_{\CC,F}$ is an extended pseudometric on the objects of $\CC$ \cite[Thm.~2.7]{deSilva2018}; i.e.~it can take infinite value, and $d(X,Y) = 0$ does not imply that $X = Y$.

\begin{remark}
\label{remark:abusingNotation}
It is necessary to now point out that we are consciously abusing notation from here on out.
This categorical flow is a special case of the definition of flow given in \cite{deSilva2018,Stefanou2018}; specifically, what we have defined is called a strong flow in that work.
In \cite{deSilva2018}, the flow comes with additional notation to encode the isomorphisms of  $T_0 \cong \1_\CC$ and $T_aT_b \cong T_{a+b}$, and to ensure that they interact accordingly.
Then, when giving the interleaving definition, the interleaving diagram is expanded to essentially be comprised of linked pentagons rather than the large scale triangles seen in \cref{eq:categoricalInterleaving}.
In this paper, we will do our best to point out when it happens, but we will suppress the isomorphism since it does not serve to illuminate the work, but rather invariably results in exponential growth of the size of the required commutative diagrams.
\end{remark}

\subsection{Smoothing and the interleaving distance for Reeb graphs}
\label{ssec:Reebinterleaving}
In this section, we give further specifics of the original smoothing definition from \cite{deSilva2016}, given as \cref{defn:originalSmoothing}.
While the idea comes from the equivalence of categories between $\Reeb$ and a particular category of cosheaves, we will not need that construction here so we will focus on the geometric definition of smoothing.

We construct the thickening of the graph, $(G \times [-\e,\e], f+\Id)$, and the smoothing $S_\e(G,f)$ is the Reeb quotient of this space, where we denote the quotient map by $q$.
However, this leaves out the important collection of morphisms that come with the smoothing construction.
Specifically, we have an inclusion $(\Id,  0): G \to G \times [-\e,\e]$ given by $x \mapsto (x,0)$; see \cref{fig:thickening} for an illustration, where this inclusion is shown in the thickened space as a dotted copy of the original graph from the left.
Let $\eta=q_\e\after(\1 \times 0)\from G\to G_\e$, so that $\eta(x)=q_\e(x,0)$.
The process can be summarized in the diagram
  \begin{equation}
  \label{eq:smoothingTriangle}
  \begin{tikzcd}
    & (G \times [-\e,\e], f+\Id) \ar[dr, "q"]\\
    (G,f) \ar[ur,hook,"{(\mathrm{Id},0)}"] \ar[rr, "\eta"] & &
    S_\e(G,f).
  \end{tikzcd}
  \end{equation}
Note that $\eta$, $q$, and $(\Id,  0)$ are all function preserving maps.

Given a morphism $\phi:(G,f) \to (H,h)$, i.e.~a function preserving map $\phi:G \to H$, it can be checked that there is an induced morphism $S_\e[\phi]:S_\e(G,f) \to S_\e(H,h)$
making the diagram
    \begin{equation}
    \label{eq:S_e_appliedtomaps}
  \begin{tikzcd}
    & (G \times [-\e,\e], f+\Id)
        \ar[dr, "q"]
        \ar[dd, violet,"{(\phi,\Id)}", near start]
    \\
    (G,f)
        \ar[ur,hook,"{(\mathrm{Id},0)}"]
        \ar[rr, "\eta", near start, crossing over]
        \ar[dd,violet, "\phi" ]
    & &
    S_\e(G,f)=: (G_\e,f_\e) \ar[dd, violet, "{S_\e[\phi]}"]
    \\
    &
    (H \times [-\e,\e], h+\Id)
        \ar[dr, "q"]
    \\
    (H,h)
        \ar[ur,hook,"{(\mathrm{Id},0)}"]
        \ar[rr, "\eta"] & &
    S_\e(H,h)=: (H_\e,h_\e)
  \end{tikzcd}
  \end{equation}
  commute.
With these $S_\e[\phi]$ maps, $S_\e$ is an endofunctor on $\Reeb$; see \cite[Sec 4.4]{deSilva2016} for details.

Further, note that replacing $(H,f)$ with $S_\e(G,f)$ in \cref{eq:S_e_appliedtomaps} gives a map $S_\e[\eta]:S_\e(G,f) \to S_\e S_\delta(G,f)$.
As noted in \cite[Obs.~4.30]{deSilva2016}, there is a natural isomorphism $S_{\e_2} S_{\e_1}(G,f) \to S_{\e_1+\e_2}(G,f)$.
For this reason and in the spirit of \cref{remark:abusingNotation}, we abuse notation and write $\eta: S_{\e_1}(G,f) \to S_{\e_1+\e_2}(G,f)$ for the composition of maps
  \begin{equation*}
      \begin{tikzcd}
        S_{\e_1}(G,f) \ar[d, "{S_\e[\eta]}"']
          \ar[dr, "\eta"]\\
        S_{\e_2}(S_{\e_1}(G,f)) \ar[r,"\cong"] & S_{\e_1+\e_2}(G,f).
      \end{tikzcd}
  \end{equation*}
In particular, for any $0 \leq \e \leq \e'$, we write $\eta:S_\e(G,f) \to S_{\e'}(G,f)$ without changing the notation $\eta$ unless it is necessary for clarity.
A full discussion of this suppressed homeomorphism  is in \cref{ssec:apdx:isomorphism}.

With this notation, we have that the $\eta$'s compose, in the sense that
\begin{equation*}
    \begin{tikzcd}
        S_{\e_1}(G,f)
            \ar[r, "\eta"]
            \ar[rr, bend right, "\eta"]
        & S_{\e_2}(G,f)
            \ar[r, "\eta"]
        & S_{\e_3}(G,f)
    \end{tikzcd}
\end{equation*}
commutes for all $\e_1 \leq \e_2 \leq \e_3$.
Again suppressing the homeomorphism, the $\eta$ maps also interact with the Reeb quotient map in the sense that
\begin{equation}
\label{eq:Square}
    \begin{tikzcd}
        G \times [-\e,\e] \ar[r, hook] \ar[d,"q"']
            & G \times [-\e', \e']\ar[d,"q"]\\
        S_{\e}(G,f) \ar[r,"\eta"] & S_{\e'}(G,f)
    \end{tikzcd}
\end{equation}
commutes for all $\e \leq \e'$.
Another useful diagram to note is that with this supression of the homeomorphism, the diagram
\begin{equation}
\label{eq:Se_of_morphism_and_eta_square}
\begin{tikzcd}
    S_\e(G,f) \ar[r, "\eta"]
    \ar[d, "{S_\e[\psi]}"']
    & S_{\e'}(G,f)
    \ar[d, "{S_{\e'}[\psi]}"]
    \\
    S_\e(H,h) \ar[r, "\eta"]
    & S_{\e'}(H,h)
\end{tikzcd}
\end{equation}
commutes.

All of this bookkeeping can be summarized by $S_0(G,f) \cong (G,f)$ and $S_b S_a(G,f) \cong S_{a+b}(G,f)$, and thus $S: \e \mapsto S_\e$ with $S_\e \in \End(\Reeb)$ is a strong flow.
 Finally, since $S$ defines a flow, we have an interleaving distance (\cref{defn:interleavingGeneral}) on $\Reeb$ which we will call simply the interleaving distance, $d_I = d_{\Reeb,S}$.

 \begin{proposition}
[{\cite[Props.~4.3, 4.5, and 4.6]{deSilva2016}} ]
 The Reeb interleaving distance $d_I$ is an extended pseudometric.
 It can take infinite value if and only if the Reeb graphs have different numbers of path components.
 It is 0 if and only if the Reeb graphs are isomorphic (i.e.~if their graphs are isomorphic and that isomorphism is function preserving).

 \end{proposition}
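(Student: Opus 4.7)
The plan is to handle the three claims sequentially, with the first being essentially a corollary of earlier results, the second a connectivity argument, and the third the main work.

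First, the extended pseudometric claim follows directly from \cref{thm:flowGivesMetric} once we recognize that $S\from[0,\infty)\to\End(\Reeb)$ is a categorical flow, which was established in \cref{ssec:Reebinterleaving}. Concretely, reflexivity comes from the $\e$-interleaving $(\eta,\eta)\from((G,f),(G,f))$ available for every $\e\ge 0$; symmetry is immediate from the symmetric role of $\phi$ and $\psi$ in the interleaving diagram; and the triangle inequality follows by composing an $\e$-interleaving $(\phi_1,\psi_1)$ between $(G,f)$ and $(H,h)$ with a $\delta$-interleaving $(\phi_2,\psi_2)$ between $(H,h)$ and $(K,k)$ to produce an $(\e+\delta)$-interleaving via $S_\delta[\phi_2]\after\phi_1$ and $S_\e[\psi_1]\after\psi_2$, using the naturality square \eqref{eq:Se_of_morphism_and_eta_square} and the composition $S_\delta S_\e\cong S_{\e+\delta}$.

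For the characterization of finiteness, the key observation is that the smoothing functor preserves the number of path-connected components: the thickening $G\times[-\e,\e]$ deformation retracts onto $G$, and Reeb-quotienting does not change $\pi_0$. Hence $\pi_0(S_\e(G,f))\cong\pi_0(G)$ for every $\e$. If $(G,f)$ and $(H,h)$ are $\e$-interleaved, the function-preserving map $\phi\from (G,f)\to S_\e(H,h)$ induces a surjection on $\pi_0$ (by the interleaving commutativity with $\eta$, which is a $\pi_0$-isomorphism), and similarly $\psi$ goes the other way, so $|\pi_0(G)|=|\pi_0(H)|$. Conversely, when the component counts agree we pair up components, and for each paired component choose $\e$ large enough that both images lie inside each other's $\e$-smoothed images; then collapsing one component of $S_\e(H,h)$ onto a monotone path spanning $\Im(G,f)$ (and symmetrically) yields a finite interleaving, giving finiteness.

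The main obstacle is showing that $d_I=0$ forces an isomorphism in $\Reeb$. The plan is to take a sequence of $\e_n$-interleavings with $\e_n\to 0$ and use finiteness of the combinatorial data of Reeb graphs to extract an isomorphism in the limit. Here one fixes the (finite) sets of vertices and essential critical values of $(G,f)$ and $(H,h)$; for small enough $\e_n$ the function-preserving maps $\phi_n\from (G,f)\to S_{\e_n}(H,h)$ and $\psi_n\from (H,h)\to S_{\e_n}(G,f)$ must send vertices within $\e_n$ of corresponding vertices, and the interleaving square forces $\psi_n\after\phi_n$ to agree with $\eta\from (G,f)\to S_{2\e_n}(G,f)$. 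For $\e_n$ smaller than half the minimum edge length, this pins down $\phi_n$ on each vertex to a unique choice, and passing to a subsequence we obtain a consistent vertex map. One then checks this vertex map extends to an edge-respecting, function-preserving bijection, yielding the desired isomorphism. The reverse implication is trivial: a function-preserving isomorphism $\phi\from(G,f)\to(H,h)$ induces an $\e$-interleaving for every $\e>0$ via $\eta\after\phi$ and $\eta\after\phi\inv$, so $d_I=0$.
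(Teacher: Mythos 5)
The paper does not prove this proposition---it is stated with a citation to \cite[Props.~4.3, 4.5, 4.6]{deSilva2016}---so there is no in-paper argument to compare against. Your handling of the first claim is fine: it is exactly \cref{thm:flowGivesMetric} applied to the flow $S$, and the explicit composite you write out for the triangle inequality is the standard one. The finiteness characterization is essentially right, although the phrasing of the converse is slightly backwards: for large $\e$ each component of $S_\e(H,h)$ \emph{already is} a single monotone interval, so there is nothing to ``collapse''; one maps each point of the paired component of $(G,f)$ to the unique point of that interval at the same height, which is function preserving once $\e$ is large enough that the smoothed image contains $\Im(G,f)$, and the interleaving square is then verified directly.

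For the third claim the sketch has a genuine gap. As written, $\psi_n\after\phi_n$ does not type-check: $\phi_n$ lands in $S_{\e_n}(H,h)$ while $\psi_n$ is defined on $(H,h)$; the interleaving condition is $S_{\e_n}[\psi_n]\after\phi_n=\eta$. More substantively, the assertion that this ``pins down $\phi_n$ on each vertex to a unique choice'' is unsubstantiated. A function-preserving $\phi_n$ sends a vertex $v$ of $G$ to \emph{some} point of $S_{\e_n}(H,h)$ at height $f(v)$; that fiber can contain several points, and $\phi_n(v)$ need not be a vertex at all. Upgrading a subsequential limit of such vertex images to an actual Reeb graph isomorphism---bijective on vertices, respecting edge multiplicities, extending to a function-preserving homeomorphism on edges---is precisely the nontrivial content of the cited Prop.~4.6: there one exploits finite constructibility to show that below the minimum positive gap between critical values, the interleaving maps are rigid enough to restrict to a genuine isomorphism. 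Your outline names the right ingredients (finiteness of the critical data, small $\e$, the composite equaling $\eta$) but leaves that deduction, which is the heart of the matter, entirely unproved.
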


\section{Equivalent definitions of truncated smoothing }
\label{sec:EquivalentDefinitions}
We have defined truncated smoothing by removing the portion of $S_\e(G,f)$ corresponding to growing tails (\cref{defn:TruncatedSmoothing}).
In this section, we show three other equivalent definitions (\cref{prop:defn_combinatorial,cor:DefnIntersection,prop:DefnBackwardView}) which  will prove useful later when for calculations or proving mathematical properties.

\subsection{Definition in terms of quotient maps}
\label{sec:conSmoothing}

We begin by presenting a definition of truncated smoothing that is more closely tied to the thickening definition used to construct $S_\e(G,f)$ in the first place by restricting the values of $\tau$ that can be used.
Given $\e \geq 0$ and $\tau \in [0,2\e]$,
we have the commutative diagram
\begin{equation*}
    \begin{tikzcd}
    (G \times [\tau - \e, \e], f+\Id)
        \ar[d,hook]  \ar[dr,dashed, "q"] \\
    (G \times [-\e, \e], f+\Id)
        \ar[r,"q"] & S_\e(G,f)  \\
    (G \times [-\e, \e-\tau], f+\Id)
        \ar[u,hook']  \ar[ur,dashed, "q"']
    \end{tikzcd}
\end{equation*}
where the middle map $q$ is the Reeb quotient map, and the diagonal arrows are its restrictions.
We can then study the intersection
\begin{equation}
\label{eq:TrunSmoothingDefn}
    q\Big(G \times [\tau-\e,\e]\Big)
    \,\cap \,
    q\Big(G \times [-\e,\e-\tau]\Big)
\end{equation}
with function given by the restriction of the function from $S_\e(G,f)$.
See \cref{fig:truncated_smoothing_construction} for an example with two different choices of $\tau$ relative to $\e$.
The following proposition shows that this intersection gives an equivalent definition for $S_\e^\tau$ to that of \cref{defn:TruncatedSmoothing}.

\begin{figure}
    \centering
    \includegraphics{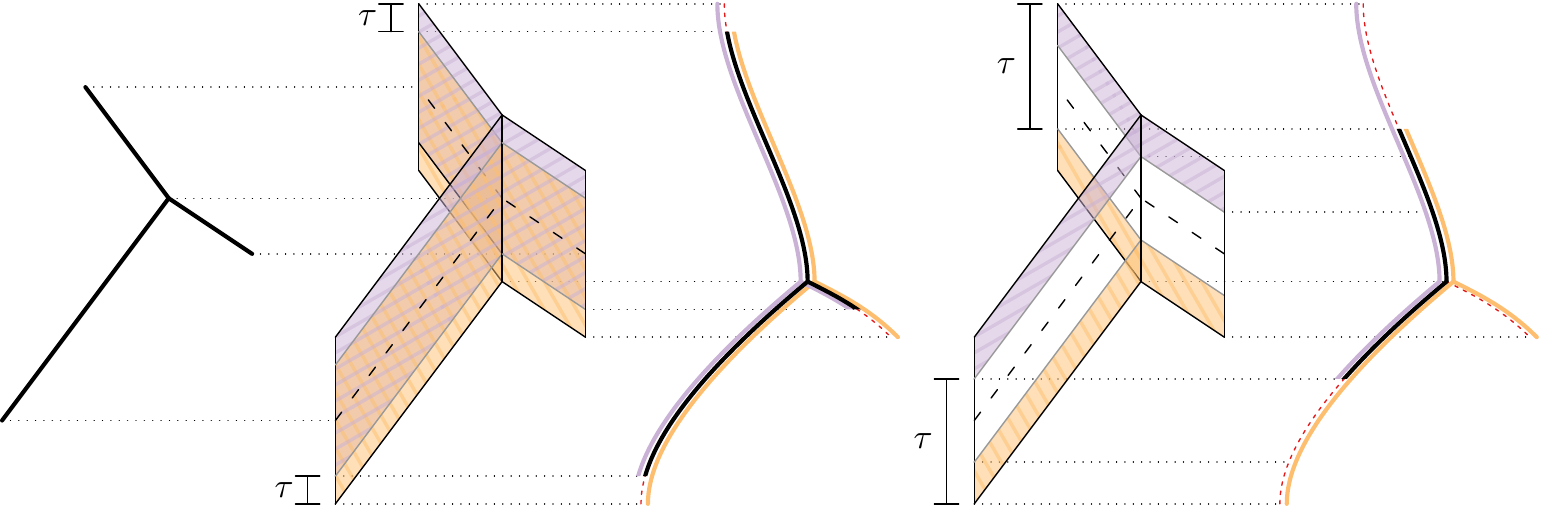}
    \caption{Left: a Reeb graph $(G,f)$. Middle: its thickening and truncated smoothing $T^\tau S_\e(G,f)$ for $\tau \in [0,\e]$. Right: the same for $\tau \in (\e, 2\e]$.}
    \label{fig:truncated_smoothing_construction}
\end{figure}

\begin{proposition}
    \label{prop:defn_combinatorial}
    For $\tau\in[0,2\e]$,
    \begin{equation*}
    S_\e^\tau(G,f) =
        q\Big(G \times [\tau-\e,\e]\Big)
        \,\cap \,
        q\Big(G \times [-\e,\e-\tau]\Big).
    \end{equation*}
\end{proposition}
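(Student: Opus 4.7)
The approach is to show the set equality by proving two containments, after unpacking both sides concretely. The right-hand side, by definition of the image of $q$, consists of points $x \in S_\e(G,f)$ admitting a representative $(y_1, t_1)$ in the thickening with $t_1 \geq \tau - \e$ \emph{and} another representative $(y_2, t_2)$ with $t_2 \leq \e - \tau$. The left-hand side consists of $x$ having both an up-path and a down-path of height $\tau$ in $S_\e(G,f)$, by definition of $T^\tau$.

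The $\supseteq$ containment will proceed by explicit construction of monotone paths from representatives. Given $(y_1, t_1)$ with $t_1 \geq \tau - \e$, the map $s \in [0, \tau] \mapsto (y_1, t_1 - s)$ stays in $G \times [-\e, \e]$ because $t_1 - \tau \geq -\e$, so composing with $q$ yields a monotone down-path of height $\tau$ from $x$. The symmetric construction using $(y_2, t_2)$ produces an up-path of height $\tau$, certifying that $x \in T^\tau S_\e(G,f)$.

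The $\subseteq$ containment is the substantive direction. Given an up-path $\pi^+$ and a down-path $\pi^-$ of height $\tau$ from $x$ in $S_\e(G,f)$, I must produce representatives of $x$ with the required $t$-bounds. The plan is to reduce each monotone path in $S_\e(G,f)$ to a purely vertical path in the thickening: first lift $\pi^-$ to a monotone path $\tilde\pi^-$ in the thickening, which is possible because $q$ collapses only path-components of level sets of $f + \Id$, so local lifts exist and glue continuously along monotone paths; then, using $\tau \leq 2\e$ to ensure sufficient vertical room in the thickening, replace $\tilde\pi^-$ by a purely vertical down-path from a (possibly different) representative $(y, t)$ of $x$ reaching the same endpoint $\pi^-(1)$. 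Such a vertical path forces $t \geq \tau - \e$, giving a representative of $x$ in $G \times [\tau - \e, \e]$; the symmetric argument applied to $\pi^+$ yields a representative in $G \times [-\e, \e - \tau]$, and together they place $x$ in the intersection.

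The main obstacle is the verticalization step. Intuitively, $2\e$ of vertical slack accommodates the height change $\tau \leq 2\e$, but rigorously one must exhibit a representative of $x$ from which the vertical down-path reaches $\pi^-(1)$ specifically, rather than some other point at the same function value, which involves tracking how path-components of level sets of $f + \Id$ in the thickening are identified under $q$. This structural compatibility is where the hypothesis $\tau \leq 2\e$ is essential, matching exactly the range of $\tau$ in the proposition.
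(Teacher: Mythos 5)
Your $\supseteq$ direction is exactly the paper's: pick representatives $(y,t)\in G\times[\tau-\e,\e]$ and $(y',t')\in G\times[-\e,\e-\tau]$ of $x$, compose $q$ with the vertical segments $s\mapsto(y,t-s)$ and $s\mapsto(y',t'+s)$, and observe these give a down-path and an up-path of height $\tau$ from $x$ in $S_\e(G,f)$. No issue there.

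The $\subseteq$ direction has a genuine gap, and it is located precisely where you flag ``the main obstacle.'' You propose to lift a monotone path $\pi^-$ to the thickening and then ``replace $\tilde\pi^-$ by a purely vertical down-path from a (possibly different) representative $(y,t)$ of $x$.'' But nothing in your argument produces such a $(y,t)$: you need to show that $q^{-1}(x)$ contains a point with $t\geq\tau-\e$, and the existence of a lift $\tilde\pi^-$ (itself only asserted, not proved) does not hand you one. Saying the verticalization ``involves tracking how path-components of level sets of $f+\Id$ are identified under $q$'' correctly names the difficulty, but does not discharge it---and that tracking is in fact the entire content of the hard direction. The paper avoids your difficulty by arguing the contrapositive: assume (WLOG) $q^{-1}(x)\cap(G\times[-\e,\e-\tau])=\emptyset$; take the component $C$ of the superlevel set $\{p : (f+\Id)(p)\geq f(x)\}$ containing $q^{-1}(x)$; define the push-down $\beta(p)=(y,\max\{f(x)-f(y),-\e\})$; and prove by a continuity-and-connectedness argument that $\beta(p)\in q^{-1}(x)$ for every $p\in C$. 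Combined with the emptiness hypothesis, this bounds $f(C)\subseteq[f(x),f(x)+\tau)$ and hence rules out any up-path of height $\tau$. That continuity argument---showing $\beta$ cannot escape $q^{-1}(x)$ as you move along a path in $C$---is exactly the rigorous version of the step your outline leaves unproved. Until you supply such an argument, your proof is a plan, not a proof.
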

\noindent \textit{Proof. }
For the sake of notation, denote
$A = q\Big(G \times [\tau-\e,\e]\Big)
        \,\cap \,
        q\Big(G \times [-\e,\e-\tau]\Big)$ for the remainder of the proof.
    We first show that any point
    $x\in A$
    has both an up- and a down-path of height $\tau$ in $S_\e(G,f)$, and hence  $x \in T^\tau S_\e(G,f) = S_\e^\tau(G,f) $.
    Let $(y,t)\in G\times[\tau-\e,\e]$ and
    $(y',t')\in G\times[-\e,\e-\tau]$ such that
    $x=q(y,t)=q(y',t')$.
    Then we have paths $\gamma,\gamma'\from[0,\tau] \rightsquigarrow S_\e(G)$
    given by
    $\gamma(s) = (y,t-s)$ and $\gamma'(s) = (y',t'+s)$.
    Therefore $q\circ\gamma$ is a down-path and $q\circ\gamma'$ is an up-path of $x$ in $S_\e(G,f)$ of height $\tau$, so $S^\tau_\e(G,f) \subseteq T^\tau S_\e(G,f)$.

    For the other direction, we show that any point $x\in S_\e(G,f)\setminus A$ has no up-path or no down-path of height at least $\tau$ in at least one direction.
    Since $x\notin A$, we have
    $q^{-1}(x)\cap(G\times[\tau-\e,\e])=\emptyset$ or
    $q^{-1}(x)\cap(G\times[-\e,\e-\tau])=\emptyset$ by definition.
    Without loss of generality, assume that
    $q^{-1}(x)\cap(G\times[-\e,\e-\tau])=\emptyset$
    as the other case is symmetric.

    First, consider the superlevelset $\{p\in G\times[-\e,\e]\mid f(p)\geq f(x)\}$ and let $C$ be its component containing $q^{-1}(x)$.
    For a point $p=(y,\lambda)\in C$, we define $\beta(p)$ to be the point $(y,\lambda')\in C$ that minimizes $\lambda'$.
    Specifically, we set $\beta(p)=(y,\max\{f(x)-f(y),-\e\})$, so that we project the path down as far as possible towards $q^{-1}(x)$ in its vertical component of the superlevel set.
    Note that $f(\beta(p))=\max\{f(x),f(y)-\e\}$; see \cref{fig:annoyingProof} for a visual of the notation.

    \begin{figure}
        \centering
        \includegraphics[width = 2in]{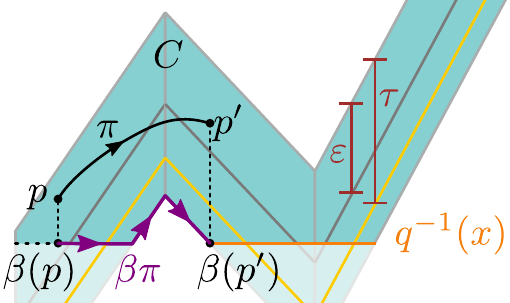}
        \caption{The notation used in the proof of \cref{prop:defn_combinatorial}.}
        \label{fig:annoyingProof}
     \end{figure}
    We claim that for every $p \in C$, we have $f(\beta(p))=f(x)$, which will later imply that no up-path of sufficient height exists.
    Seeking a contradiction, suppose instead that there is a $p\in C$ with $f(\beta(p))>f(x)$.
    This implies that $f(\beta(p)) = f(y)-\e > f(x)$.
    Clearly $\beta(p) \not \in q\inv(x)$, otherwise $f(\beta(p)) = f(x)$.
    Because $C$ is path-connected, there exists a path
    $\pi\from[0,1]\to C$
    from $p$ to a point $p' \in C$ with $\beta(p')\in q^{-1}(x)$; again, see \cref{fig:annoyingProof}.
    Denote $\pi(t)= (y(t),\lambda(t))$.
    Without loss of generality, assume that $\beta(\pi(t))\notin q^{-1}(x)$ for all $t<1$.
    If $f(\beta \pi(t)) = f(x)$ for all $t$, then $\beta(p)$ is in the same level-set connected component as $q\inv(x)$, contradicting that $\beta(p) \not \in q\inv(x)$.
    The assumption that $\beta\pi(t) \not \in q\inv(x)$ for $t <1$ thus further implies that there is a $\delta$ for which $f(\beta\pi(t))>f(x)$ for  $ t \in [1-\delta,1)$.
    By definition of $\beta$, this means that $f(\beta\pi(t)) = f(y(t)) - \e$ for $t\in[1-\delta,1)$.
    However, by continuity of $t\mapsto f(y(t))-\e$, we have $f(\beta\pi(1))=f(x)=f(y(1))-\e$.
    Therefore $\beta\pi(1)=(y,-\e)\in G\times[-\e,\e-\tau]$, and $\beta\pi(1)  = \beta(p') \in q\inv(x)$, contradicting that
    $q^{-1}(x)\cap(G\times[-\e,\e-\tau])=\emptyset$.

    Now, for any $p\in C$, we have $\beta(\pi(p))\in q^{-1}(x)$, so the image of $C$ under $f$ is a subset of $[f(x),f(x)+\tau)$.
    The preimage of any up-path in $S_\e(G,f)$ starting at $x$ lies in $C$, so no such path can have height at least $\tau$, completing the proof.
    \qed

\subsection{Alternative definitions for \texorpdfstring{$\tau \in [0,\e]$}{t in [0,e]}}
Note that if $\tau \in (\e, 2\e]$, then $[-\e,\e - \tau] \cap [\tau - \e, \e] $ is empty.
In this case, $G \times [-\e,\e - \tau] \cap G \times [\tau - \e, \e] $ is empty, which is the reason for to passing to the image of $q$ before intersection in \cref{eq:TrunSmoothingDefn}.
However, if $\tau  \in [0,\e]$, we have no such issue, which leads to an alternative definition for $S_\e^\tau(G)$ which is a corollary to the following lemma.

\begin{lemma}
\label{lem:intersections_and_q}
    Given $(G,f)$, if $\tau \in [0,\e]$,
    $$
    q\Big(G \times [-\e,\e - \tau]\Big)\, \cap \, q\Big(G \times [\tau - \e, \e]\Big)
    =
q \Big(G \times [-\e+\tau,\e - \tau] \Big)
    $$
\end{lemma}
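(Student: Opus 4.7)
The plan is to establish the two set inclusions separately. For the containment $q(G\times[-\e+\tau,\e-\tau])\subseteq q(G\times[-\e,\e-\tau])\cap q(G\times[\tau-\e,\e])$, I would simply observe that the hypothesis $\tau\in[0,\e]$ guarantees $[-\e+\tau,\e-\tau]$ is a (possibly degenerate) subinterval of both $[-\e,\e-\tau]$ and $[\tau-\e,\e]$. Then monotonicity of image under $q$ yields the inclusion into each factor, and hence into their intersection.

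The substantive direction is the reverse containment. Fix $x$ in the intersection on the left-hand side, and choose preimages $(y,t)\in G\times[-\e,\e-\tau]$ and $(y',t')\in G\times[\tau-\e,\e]$ with $q(y,t)=q(y',t')=x$. By definition of the Reeb quotient, both lie in the same path-connected component $C$ of the level set $(f+\Id)^{-1}(f_\e(x))$ inside $G\times[-\e,\e]$. Take a continuous path $\pi\colon[0,1]\to C$ from $(y,t)$ to $(y',t')$, and write $\pi(s)=(y(s),t(s))$ with $t\colon[0,1]\to[-\e,\e]$ continuous.

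The core of the argument is an intermediate value theorem application on $t(s)$. Note that $\tau\leq\e$ ensures $\tau-\e\leq\e-\tau$, so $[\tau-\e,\e-\tau]$ is a genuine (nonempty) closed interval. If $t(0)=t\geq\tau-\e$, then $(y(0),t(0))$ already lies in $G\times[\tau-\e,\e-\tau]$; similarly if $t(1)=t'\leq\e-\tau$. In the remaining case $t(0)<\tau-\e$ and $t(1)>\e-\tau$, continuity gives some $s^*\in(0,1)$ with $t(s^*)\in[\tau-\e,\e-\tau]$. In any case, the point $(y(s^*),t(s^*))$ lies in $G\times[-\e+\tau,\e-\tau]$, and since $\pi(s^*)\in C$ we have $q(y(s^*),t(s^*))=x$, witnessing $x\in q(G\times[-\e+\tau,\e-\tau])$.

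The only delicate issue is the edge case $\tau=\e$, where the target interval $[\tau-\e,\e-\tau]$ collapses to $\{0\}$; but IVT still produces a zero of $t(s)$ whenever $t(0)$ and $t(1)$ straddle it, so the argument carries through unchanged. I would expect no other real obstacle, since the assumption $\tau\leq\e$ is precisely what keeps the relevant intervals nondegenerate and in the correct order for IVT.
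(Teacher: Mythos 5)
Your proof is correct and follows essentially the same route as the paper's: for the easy inclusion you use that $[-\e+\tau,\e-\tau]$ is contained in both of the larger intervals, and for the reverse inclusion you connect the two chosen preimages by a path in the level set of $f+\Id$ and apply the intermediate value theorem to its second coordinate to land in $G\times[\tau-\e,\e-\tau]$. The only difference is cosmetic: you explicitly split into the cases where $t$ or $t'$ is already in range versus the straddling case, and you flag the degenerate $\tau=\e$ interval, whereas the paper compresses this to a single IVT invocation after a WLOG reduction; both arguments are sound.
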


\begin{proof}
First, note that
$
q \left(G \times [-\e,\e - \tau] \cap G \times [\tau - \e, \e]\right)
=
q \left(G \times [-\e+\tau,\e - \tau] \right).
$
The left inclusion of the lemma is immediate since given any
$x \in q(G \times [-\e+\tau, \e - \tau])$
there is
$(y,t) \in G \times [-\e + \tau, \e - \tau]$
such that $q(y,t) = x$.
Thus $(y,t)$ is in both $G \times [-\e,\e - \tau]$ and $G \times [\tau - \e, \e]$, and so $x \in q(G \times [-\e,\e - \tau]) \cap q(G \times [-\e+\tau,\e])$.

For the right inclusion, let $x \in q(G \times [-\e,\e - \tau]) \cap q(G \times [\tau - \e, \e])$.
Then there is
$(y,t) \in G \times[-\e,\e - \tau]$ and
$(y',t') \in G \times[-\e + \tau ,\e]$ such that
$x = q(y,t) = q(y',t')$.
If either $t$ or $t'$ are contained in $[-\e + \tau, \e-\tau]$, then we are done, so we can assume $t \in [\e - \tau, \e]$ and $t' \in [-\e, -\e+\tau]$.
Because they both have the same image under $q$, there is a path $(\gamma_1,\gamma_2) = \gamma\from (y,t) \rightsquigarrow (y',t')$ with $f(\gamma_1(s)) + \gamma_2(s)$ constant; specifically, $q(\gamma(s)) = x$ for all $s \in [0,1]$.
As $\gamma_2$ is a continuous map, there must be an $s \in [0,1]$ for which $\gamma_2(s) \in [-\e+\tau, \e-\tau]$.
Then $\gamma(s) \in G \times [-\e+\tau, \e-\tau]$ and $q(\gamma(s)) = x$, completing the proof.
\end{proof}

Combined with \cref{defn:TruncatedSmoothing}, this gives us an immediate corollary that can be viewed as an equivalent definition for the truncated smoothing whenever $\tau$ is small enough.

\begin{corollary}
\label{cor:DefnIntersection}
    Given $(G,f)$ and $\tau \in [0,\e]$,
\[
S_\e^\tau(G,f) = q(G \times [\tau-\e,\e-\tau])
\]
where $q$ is a restriction of the quotient map
$q: G \times [-\e,\e] \rightarrow S_\e(G,f)$,
and the Reeb graph function is given by the restriction of the function from $S_\e(G,f)$.
\end{corollary}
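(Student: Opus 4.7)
The plan is to obtain this as an immediate consequence of the two preceding results. Since $\tau \in [0,\e] \subseteq [0,2\e]$, the hypothesis of \cref{prop:defn_combinatorial} is satisfied, which yields
\[
S_\e^\tau(G,f) = q\bigl(G \times [\tau-\e,\e]\bigr) \,\cap\, q\bigl(G \times [-\e,\e-\tau]\bigr).
\]
The hypothesis $\tau \in [0,\e]$ also matches the hypothesis of \cref{lem:intersections_and_q} exactly, so I can rewrite the right-hand side as $q(G \times [\tau-\e,\e-\tau])$, giving the stated equality of underlying sets.

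For the function on $S_\e^\tau(G,f)$, I would note that by \cref{defn:TruncatedSmoothing}, $S_\e^\tau(G,f) = T^\tau S_\e(G,f)$ is by construction a subspace of $S_\e(G,f)$ endowed with the restriction of $f_\e$; this is consistent with viewing the right-hand side as the image under a restriction of the quotient map $q\from G \times [-\e,\e] \to S_\e(G,f)$, where the codomain carries the function induced by $f+\Id$.

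There is no real obstacle here: the corollary is a clean repackaging of \cref{prop:defn_combinatorial,lem:intersections_and_q} into the single-slab form $q(G \times [\tau-\e,\e-\tau])$, which is the form most convenient for subsequent arguments (for instance, when building explicit morphisms between truncated smoothings for different parameter choices). The only thing to double check is that the two hypotheses line up — both ask for $\tau \in [0,\e]$ — and that the two statements compose correctly with respect to the shared map $q$, which is immediate since \cref{lem:intersections_and_q} is phrased in terms of the same quotient map $q\from G\times[-\e,\e]\to S_\e(G,f)$ that appears in \cref{prop:defn_combinatorial}.
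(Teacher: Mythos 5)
Your proof is correct and follows essentially the same route as the paper: combine \cref{prop:defn_combinatorial} (for $\tau \in [0,2\e]$) with \cref{lem:intersections_and_q} (for $\tau \in [0,\e]$). If anything, your citation is slightly more precise than the paper's text, which attributes the corollary to \cref{lem:intersections_and_q} and \cref{defn:TruncatedSmoothing} but really also relies on \cref{prop:defn_combinatorial} to pass from the definition $T^\tau S_\e(G,f)$ to the intersection form.
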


Another way of viewing the truncated definition is by looking backward in the flow by $\tau$.
Because flows are only defined for $\e \geq 0$, we can use this viewpoint only if $\tau$ is small enough that $\e-\tau$ is non-negative, thus we have the following equivalent definition for truncated smoothing with small enough $\tau$.
\begin{corollary}
\label{prop:DefnBackwardView}
    Given $(G,f)$ and $\tau \in [0,\e]$, then
    $$
    S_\e^\tau(G,f) = \eta(S_{\e-\tau}(G,f)) := \Im \{ \eta\from  S_{\e-\tau}(G,f) \to S_\e(G,f)\}.
    $$
\end{corollary}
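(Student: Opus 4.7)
The plan is to reduce the statement to Corollary~\ref{cor:DefnIntersection} by tracking exactly which points of $S_\e(G,f)$ lie in the image of the composition $\eta\from S_{\e-\tau}(G,f) \to S_\e(G,f)$. Since $\tau \in [0,\e]$, Corollary~\ref{cor:DefnIntersection} already gives the concrete description
\[
S_\e^\tau(G,f) = q\Big(G \times [\tau-\e,\,\e-\tau]\Big),
\]
so the entire task is to prove that the right-hand side coincides with $\Im\{\eta\from S_{\e-\tau}(G,f)\to S_\e(G,f)\}$.

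First, I would unwind the notational suppression described in Remark~\ref{remark:abusingNotation} and the discussion following~\eqref{eq:Square}. The map $\eta$ in the statement is really the composition
\[
S_{\e-\tau}(G,f)
\xrightarrow{\eta}
S_\tau S_{\e-\tau}(G,f)
\xrightarrow{\ \cong\ }
S_\e(G,f),
\]
where the first arrow comes from the flow map $\eta = q_\tau \after (\Id, 0)$ applied to the Reeb graph $S_{\e-\tau}(G,f)$, and the second is the natural isomorphism $S_\tau S_{\e-\tau} \cong S_\e$ noted after \eqref{eq:S_e_appliedtomaps}. A point of $S_{\e-\tau}(G,f)$ has the form $q_{\e-\tau}(x,s)$ for some $(x,s) \in G \times [\tau-\e, \e-\tau]$. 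Applying the first arrow gives $q_\tau(q_{\e-\tau}(x,s),\,0)$, and the natural isomorphism sends this to $q_\e(x, s+0) = q_\e(x,s)$. Therefore
\[
\Im\{\eta\from S_{\e-\tau}(G,f)\to S_\e(G,f)\}
= \{\,q_\e(x,s) \mid (x,s) \in G \times [\tau-\e,\e-\tau]\,\}
= q\Big(G \times [\tau-\e,\,\e-\tau]\Big).
\]

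Comparing with Corollary~\ref{cor:DefnIntersection}, both sets equal $q(G \times [\tau-\e,\e-\tau])$, and the induced height function on each is the restriction of the height function on $S_\e(G,f)$. Hence $S_\e^\tau(G,f) = \eta(S_{\e-\tau}(G,f))$ as subgraphs of $S_\e(G,f)$, with matching height functions.

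The main (and essentially only) obstacle is making the identification $\eta(q_{\e-\tau}(x,s)) = q_\e(x,s)$ rigorous rather than notational: one must verify that the suppressed isomorphism $S_\tau S_{\e-\tau}(G,f) \cong S_\e(G,f)$ is genuinely given by $q_\tau(q_{\e-\tau}(x,s),t) \mapsto q_\e(x, s+t)$, since the paper hides this behind notation. This is exactly the content of \cite[Obs.~4.30]{deSilva2016} together with the full discussion deferred to \cref{ssec:apdx:isomorphism}, and once invoked the corollary follows immediately.
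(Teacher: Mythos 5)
Your proof is correct and follows essentially the same route as the paper: reduce to the description $S_\e^\tau(G,f)=q(G\times[\tau-\e,\e-\tau])$ from \cref{cor:DefnIntersection} (equivalently, \cref{lem:intersections_and_q}) and then identify $\Im(\eta)$ with that set. The paper expresses the second step by citing the commutative square in \cref{eq:Square}, whereas you re-derive it by tracking elements through the definition of $\eta$ and the suppressed isomorphism $S_\tau S_{\e-\tau}\cong S_\e$; this is the same fact, just made explicit rather than quoted.
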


\begin{proof}
This comes from combining \cref{lem:intersections_and_q} with the commutative diagram
\begin{equation*}
\begin{tikzcd}
    G \times [-\e+\tau, \e - \tau] \ar[r,hook] \ar[d,"q"]
        & G \times [-\e,\e] \ar[d, "q"]
    \\
    S_{\e-\tau}(G) \ar[r,"\eta"] & S_\e(G).
\end{tikzcd}
\end{equation*}
\end{proof}

\section{Properties of truncation}
\label{sec:properties_of_truncation}

In this section, we provide proofs of the main results stated in \cref{ssec:PropertiesOfTruncationSurvey}. 
We show the main results pertaining to connectedness of $S_\e^\tau(G,f)$ in \cref{ssec:apdx:Connectedness}, and give results for when $S_\e^\tau(G,f)$ is empty in \cref{ssec:apdx:Empty}

    \subsection{Connectedness}
    \label{ssec:apdx:Connectedness}
    Truncation does not necessarily preserve connectedness.
    In fact, for the ($0$-tailed) graph of \cref{fig:EmptyTauTruncation}, for any $\tau>0$, each edge of $T^\tau(G,f)$ is a separate component.
    In this section, we utilize the notions of $s$-safe and $t$-tailed (\cref{defn:tauTailed_and_tauSafe}) to show that for a $\tau$-tailed graph, $T^\tau(G,f)$ remains connected (\cref{lem:staysConnected})
    Finally, we use this to show in \cref{cor:ConnectedSmoothingAndTruncating} that for certain ranges of $\tau$ relative to $\e$, $S_\e^\tau(G,f)$ maintains its connected components. 
    
    \longTailsCombined*
    
    \begin{proof}
        For the first statement, assume $(G,f)$ is $t$-tailed.  
        A point $y\in S^\e(G,f)$ is an up-fork only if it has multiple interior-disjoint positive-height up-paths.
        For such paths to be interior-disjoint, the preimage of $y$ must contain a point $(x,\e)\in(G,f)\times[-\e,\e]$ where $x$ is an up-fork in $G$.
        Therefore, $x$ has a down-path $\gamma$ of height $t$ in $G$.
        $S^\e$ transforms $\gamma$ into a height $t+2\e$ down-path of $y$ in $S_\e(G,f)$.
        Therefore, every up-fork has a $t+2\e$ long down-path.
        Symmetrically, every down-fork has a $t+2\e$ long up-path, so $S_\e(G,f)$ is $(t+2\e)$-tailed.
        
        For the second statement, assume $(G,f)$ is $s$-safe, and thus by definition, $s$-tailed. 
        By the first statement, $S_\e(G,f)$ is $(s+2\e)$-tailed and therefore $(s+\e)$-tailed.
        Moreover, some $x\in(G,f)$ has an up-path and down-path of height $s$.
        $S_\e$ transforms their union to a monotone path of height $2s+2\e$, so $S_\e(G,f)$ contains a point with both an up-path and down-path of height $s+\e$.
        
        For the final statement, note that the empty Reeb graph is not $s$-safe for any $s$, whereas every nonempty Reeb graph is at least $0$-safe.
        Moreover, every Reeb graph, including the empty Reeb graph is at least $0$-tailed.
        So, setting $t=s = 0$ in the first two statements  gives the following corollary.
    \end{proof}

    The next lemma shows that even without smoothing first, truncation can preserve $\bullet$-safe and $\bullet$-tailed properties for decreased parameters.
    \singleEdge*
    \begin{proof}
    Let $x$ be a downfork in $T^\tau(G,f)\subseteq (G,f)$. 
    Then it is also a downfork in $(G,f)$, so if $(G,f)$ is $\e$-tailed, then $x$ has an up-path of height $\e$ in $(G,f)$. 
    Assuming the path is parameterized with respect to function value, the portion of this up-path defined on $[0,\e-\tau]$ consists of points who all have an up-path of height $\tau$ in $(G,f)$, so $x$ has an up path of height $\e-\tau$ in $T^\tau(G,f)$. 
    Showing up-forks have a down-path of height $\tau$ is a symmetric argument, so $T^\tau(G,f)$ is $(\e-\tau)$-safe. 
    
    If $(G,f)$ is $\e$-safe, then it just remains to show that $T^\tau(G,f)$ has a point with an up- and down-path of height $\e-\tau$. 
    Since $(G,f)$ has a point $x$ with an up- and down-path of height $\e$, a similar argument to the above shows that the portion of these parameterized paths  inside $[0,\e-\tau]$ remain in $T^\tau(G,f)$, so $T^\tau(G,f)$ is $(\e-\tau)$-safe.
    \end{proof}

    We will now characterize the structure of the points removed by truncating, as well as the structure of the remainder.
    In the following lemmas and utilizing the notation of \cref{defn:truncation}, let $U=U_\tau(G,f)$ be the set of points in $G$ with no up-path of height $\tau$
    and $D=D_\tau(G,f)$ be those without a height $\tau$ down-path.
    \begin{lemma}\label{lem:forestUD}
        If $(G,f)$ is $\tau$-tailed, then any point in $U$ roots an up-tree of height less than $\tau$, and any point in $D$ roots a down-tree of height less than $\tau$.
    \end{lemma}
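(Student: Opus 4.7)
The plan is to prove both statements symmetrically; I will just describe the up-case, with the down-case following by reversing the function. Let $x\in U$; I need to show two things: (i) the up-set $A$ of $x$ contains no down-fork of $(G,f)$, so that $A$ is an up-tree rooted at $x$, and (ii) $A$ has height strictly less than $\tau$.

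Step (ii) is almost a tautology. For any $y\in A$ there is, by definition, an up-path $\gamma_1$ from $x$ to $y$, and this up-path has height $f(y)-f(x)$. If $f(y)-f(x)\geq\tau$ then $x$ itself would have an up-path of height $\tau$, contradicting $x\in U$. Hence $\sup_{y\in A}(f(y)-f(x))<\tau$, which is the height bound claimed.

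Step (i) is the main step but is still short. Suppose for contradiction that some $y\in A$ is a down-fork of $(G,f)$. Because $(G,f)$ is $\tau$-tailed, $y$ admits an up-path $\gamma_2$ in $G$ of height $\tau$. Let $\gamma_1$ be the up-path from $x$ to $y$ witnessing $y\in A$. The concatenation $\gamma_1\cdot\gamma_2$ is again an up-path (its first segment is monotone up from $f(x)$ to $f(y)$, and its second segment is monotone up from $f(y)$ to $f(y)+\tau$), and it starts at $x$. Its height is $(f(y)-f(x))+\tau\geq\tau$, so $x$ has an up-path of height $\tau$, contradicting $x\in U$. Therefore no such $y$ exists and $A$ contains no down-fork of $(G,f)$, i.e.\ $x$ roots an up-tree.

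The argument for $D$ and down-trees is literally obtained by replacing $f$ by $-f$ and invoking the other half of the $\tau$-tailed hypothesis (every up-fork has a down-path of height $\tau$). The only place I expect even mild care is being explicit that the concatenation $\gamma_1\cdot\gamma_2$ really is monotone (rather than accidentally flat at $y$), which is immediate because both pieces are non-decreasing in $f$ and $\gamma_2$ strictly increases by $\tau$; no fancy machinery is needed beyond the definitions of up-set, up-fork, down-fork, and $\tau$-tailed from \cref{defn:tauTailed_and_tauSafe}.
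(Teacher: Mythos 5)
Your proof is correct and follows essentially the same argument as the paper's: both establish the height bound from $x\in U$, and both derive a contradiction from a hypothetical down-fork in the up-set by invoking the $\tau$-tailed property. The only cosmetic difference is that the paper phrases the contradiction as violating the already-established height bound on the up-set, while you concatenate the two up-paths explicitly and contradict $x\in U$ directly; these are the same observation.
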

    \begin{proof}
        Pick $x\in U$, and let $C$ be the set of points reachable from $x$ by (possibly height $0$) up-paths.
        As $x$ has no up-path of height $\tau$, $C$ has height less than $\tau$.
        It remains to show that $C$ is an up-tree.
        Suppose not, then $C$ contains a down-fork $x'$ of $(G,f)$.
        Since $(G,f)$ is $\tau$-tailed, $x'$ has an up-path of height $\tau$ that also lies in $C$, contradicting that $C$ has height less than $\tau$.
        So any point in $U$ roots an up-tree of height less than $\tau$.
        Symmetrically, any point in $D$ roots a down-tree of height less than $\tau$.
    \end{proof}
    
    \begin{lemma}\label{lem:disconnectedUD}
        If $(G,f)$ is connected and $\tau$-safe, then $U\cup D$ contains no path from $U$ to $D$.
    \end{lemma}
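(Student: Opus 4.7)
The plan is to reduce the lemma to showing $U \cap D = \emptyset$ under the hypotheses. Since $U$ and $D$ are open in $G$, a path $\pi\from[0,1]\to U \cup D$ with $\pi(0) \in U$ and $\pi(1) \in D$ gives rise to an open cover $\{\pi^{-1}(U),\pi^{-1}(D)\}$ of the connected space $[0,1]$. If $U \cap D$ were empty, this cover would be disjoint, forcing one preimage to be empty and contradicting the endpoint conditions; conversely, if $U \cap D$ is non-empty, the constant path at any point of $U \cap D$ provides such a $\pi$. So it suffices to prove $U \cap D = \emptyset$.

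Suppose for contradiction that $y \in U \cap D$. I will first observe that $y$ is neither an up-fork nor a down-fork of $(G,f)$: by the $\tau$-tailed half of the $\tau$-safe hypothesis, an up-fork would carry a down-path of height $\tau$ (contradicting $y \in D$), and a down-fork would carry an up-path of height $\tau$ (contradicting $y \in U$). Hence $y$ has at most one up-edge and at most one down-edge. I then invoke \cref{lem:forestUD} to obtain that the up-set $U_y$ is an up-tree of height strictly less than $\tau$ and that the down-set $D_y$ is a down-tree of height strictly less than $\tau$; in particular $U_y \cup D_y$ has image contained in $(f(y)-\tau,\,f(y)+\tau)$.

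The main technical step will be to show that no edge of $G$ leaves $U_y \cup D_y$. For any vertex $v \in U_y \setminus \{y\}$, the up-path from $y$ to $v$ must enter $v$ through a down-edge of $v$; a second down-edge at $v$ would make $v$ a down-fork, which \cref{lem:forestUD} forbids inside $U_y$. Thus $v$ has a unique down-edge, which lies in $U_y$, and every up-edge from $v$ extends the up-path from $y$ to a vertex (and edge) also in $U_y$. A symmetric argument handles $D_y \setminus \{y\}$. At $y$ itself, the (at most one) up-edge lies in $U_y$ and the (at most one) down-edge lies in $D_y$, so no edge incident to $y$ escapes either. Hence $U_y \cup D_y$ is a union of connected components of $G$.

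Because $G$ is connected and contains $y$, this forces $G = U_y \cup D_y$, so $\Im(G,f) \subseteq (f(y)-\tau,\,f(y)+\tau)$ has length strictly less than $2\tau$. On the other hand, the weakly $\tau$-safe part of the hypothesis supplies some $z \in G$ with both an up-path and a down-path of height $\tau$, forcing $\Im(G,f)$ to contain an interval of length at least $2\tau$. This contradiction yields the lemma. I expect the step requiring the most care to be the edge-by-edge analysis in the third paragraph: the prohibition of down-forks inside $U_y$ (and up-forks inside $D_y$) must be leveraged to rule out every way an edge could exit $U_y \cup D_y$, including the boundary cases where $y$ has degree one.
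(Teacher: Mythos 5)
Your proof is correct, and it takes a genuinely different route from the paper's. Both proofs start from a point $y \in U \cap D$ and invoke \cref{lem:forestUD} to see that $y$ roots both an up-tree $U_y$ and a down-tree $D_y$ of height $<\tau$. The paper then takes a point $z \notin U \cup D$ (guaranteed by weak $\tau$-safety), connects $y$ to $z$ by a simple path, and observes that because $y$ roots both trees any simple path leaving $y$ is forced to be monotone, hence lands inside $U_y$ or $D_y$, so $z \in U \cup D$, a contradiction. You instead make the "cannot escape" statement explicit as an edge-by-edge closedness argument, conclude $G = U_y \cup D_y$ by connectedness, and contradict weak $\tau$-safety via the resulting diameter bound $\|\Im(G,f)\| < 2\tau$ rather than via the specific point $z$. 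The two are morally the same observation (the trees swallow the whole graph), but your version makes the monotonicity claim explicit where the paper asserts it in one line, at the cost of an extra detour through the image bound. One small remark: the observation that $y$ is neither an up-fork nor a down-fork, while true, is not actually needed — any up-edge at $y$ lies in $U_y$ and any down-edge lies in $D_y$ regardless of how many there are, so the closedness at $y$ holds without invoking the $\tau$-tailed condition there. The edge-case care you flagged (degree-one $y$, interior points of edges) does check out.
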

    \begin{proof}
        Suppose that $U$ and $D$ intersect, then there exists some $x\in U\cap D$.
        As $(G,f)$ is $\tau$-safe, it contains a point $y\notin U\cup D$.
        Since $(G,f)$ is connected, there is a simple path from $x$ to $y$.
        Because $x$ roots both an up-tree and a down-tree, this path must be monotone.
        Therefore, $y$ lies in the up-tree or down-tree rooted at $x$, so $y\in U\cup D$, which is a contradiction.
        So $U$ and $D$ are disjoint, and since $U$ and $D$ are open, there is no path inside $U\cup D$ from $U$ to $D$.
    \end{proof}
    
    For a component of $C$ of $U$ or $D$, call a point $x\in T^\tau(G,f)$ a \emph{root} of $C$ if it lies in the closure of $C$.
    \begin{lemma}\label{lem:atMostOneRoot}
        If $(G,f)$ is connected and $\tau$-tailed, then any component $C$ of $U\cup D$ has at most one root.
    \end{lemma}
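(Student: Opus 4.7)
The plan is to argue by contradiction: assume $C$ has two distinct roots $r_1,r_2\in\bar C\cap T^\tau$. Since $C$ is an open path-connected set containing $r_1$ and $r_2$ in its closure, we first extract a simple arc $\pi\colon[0,1]\to\bar C$ with $\pi(0)=r_1$, $\pi(1)=r_2$, and $\pi((0,1))\subseteq C$; this is routine since each $r_i$ meets $C$ along an edge of $G$ in an open interval.

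Two preliminary observations drive the argument. First, because $\pi$ locally traverses an edge of $G$ at each endpoint, the $f$-direction in which $\pi$ leaves $r_i$ is well-defined. If $\pi$ leaves $r_1$ along an up-edge, then for small $t>0$ one obtains $d(\pi(t))\geq (f(\pi(t))-f(r_1))+d(r_1)>\tau$ by prepending a short down-segment to $r_1$'s height-$\tau$ down-path, so $\pi(t)\notin D$ and hence $\pi(t)\in U$; the symmetric statements handle down-edges and the behavior at $r_2$. Second, because $f$ is strictly monotone on each edge, interior extrema of $f\circ\pi$ occur at vertices of $G$. At an interior local maximum $v=\pi(t^*)$ the simple path arrives and leaves along two distinct down-edges of $v$, so $v$ is a down-fork; the $\tau$-tailed hypothesis then gives $u(v)\geq\tau$, and combined with $v\in C$ this forces $v\in D$. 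A symmetric conclusion holds at interior local minima, producing an up-fork in $U$.

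The proof then splits into four cases on the direction pair at $(r_1,r_2)$. The representative case has $\pi$ leaving $r_1$ upward and approaching $r_2$ from above: the boundary behavior forces the global maximum of $f\circ\pi$ to lie at an interior vertex $v$, which by the previous observation is a down-fork in $D$. By \cref{lem:forestUD} the down-set of $v$ is a down-tree of height less than $\tau$; short checks give that this down-set is contained in $D$ (every $y$ in it satisfies $d(y)\leq d(v)<\tau$) and is closed in $G$ (finite union of compact path images). The absence of up-forks of $G$ inside the down-tree means every non-$v$ point there has a unique up-edge of $G$, and that edge necessarily remains in the down-set; combined with simplicity of $\pi$ and the fact that $v$ is the global maximum of $f\circ\pi$ (which rules out exit through an up-edge of $v$ itself), the tail $\pi|_{[t^*,1]}$ is trapped inside the down-set. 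Passing to the limit at $t=1$ places $r_2$ in the down-set, hence in $D$, contradicting $r_2\in T^\tau$. The case where $\pi$ leaves $r_1$ downward and approaches $r_2$ from below is fully symmetric via the up-tree at an interior minimum. The two mixed cases either reduce to an interior extremum (same trapped-tree argument) or describe a monotone $\pi$, in which case a direct computation gives $u(\pi(t))>\tau$ and $d(\pi(t))>\tau$ at every interior $t$, placing $\pi$ in $T^\tau$ and contradicting $\pi((0,1))\subseteq C$.

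The main obstacle is the ``trapped'' step: rigorously verifying that $\pi$ cannot escape the down-set of $v$ once it has entered. The key combinatorial input is that $\tau$-tailedness prevents any up-fork of $G$ from lying inside the down-set, which forces every potential exit to go through $v$ itself via one of its up-edges; this last possibility is then excluded by simplicity of $\pi$ and the maximality of $v$. Tracking all cases of vertex-transitions versus edge-interior crossings, and confirming that the down-set really is closed in $G$, is the technically delicate part.
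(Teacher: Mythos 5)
Your proof is correct but follows a genuinely different route from the paper's. The paper first reduces to the case where $(G,f)$ is $\tau$-safe (otherwise $T^\tau(G,f)$ is empty and there are no roots at all), then invokes \cref{lem:disconnectedUD} to conclude $C\subseteq U$ or $C\subseteq D$. Once, say, $C\subseteq U$, the component contains no down-forks whatsoever (\cref{lem:forestUD}), and since $C$ is up-closed a simple arc with interior in $C$ must leave each root going upward; any non-constant such arc would then have an interior local maximum of $f\circ\pi$, hence a down-fork of $C$, which is impossible — forcing the arc to be constant and $x=x'$. You bypass \cref{lem:disconnectedUD} entirely (and so never need the $\tau$-safe reduction), instead allowing the interior down-fork $v$ to exist, locating it in $D$ via the $\tau$-tailed hypothesis, and then running the ``trapped in the down-tree'' argument via \cref{lem:forestUD} applied at $v$ to push the contradiction to $r_2\in D$. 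Both proofs share the move of producing a simple arc and observing that interior extrema of $f\circ\pi$ sit at down- or up-forks of $G$, but the paper's use of $C\subseteq U$ kills the down-fork on contact, whereas your version must track the arc past it; the paper's argument is correspondingly much shorter because \cref{lem:disconnectedUD} does the heavy lifting, while yours is more self-contained and applies uniformly. One small simplification you could make: in your representative case, global maximality of $v$ is not actually needed for the trapped step — simplicity of $\pi$ already forbids re-entering $v$, which (given the absence of up-forks in the down-tree) is the only potential upward exit — so the same argument covers the mixed non-monotone cases without singling out the global extremum.
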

    \begin{proof}
        If $(G,f)$ is not $\tau$-safe, then $T^\tau(G,f)$ is empty, so $U\cup D$ has no root and we are done.
        So assume that $(G,f)$ is $\tau$-safe.
        By \cref{lem:disconnectedUD}, $C\subseteq U$ or $C\subseteq D$.
        Without loss of generality, assume that $C\subseteq U$, and that $C$ has multiple roots $x,x'\notin U\cup D$.
        Let $\pi$ be a simple path in $C$ connecting $x$ and $x'$.
        Because $C$ contains no down-forks and $C$ contains all points reachable from $C$ by up-paths, $\pi$ has height $0$.
        Because $G$ is a Hausdorff space, this means that $x=x'$, so $C$ has at most one root.
    \end{proof}
    
    \begin{lemma}\label{lem:atMostOneRootPath}
        Let $\gamma$ be a simple path in $(G,f)$ that starts and ends in $T^\tau(G,f)$.
        If $C$ is a component of $U\cup D$ with at most one root, then $\gamma$ does not intersect $C$.
    \end{lemma}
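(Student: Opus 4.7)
The plan is to argue by contradiction: suppose $\gamma$ does intersect $C$, and extract two distinct roots of $C$, contradicting the hypothesis that $C$ has at most one root. The key topological inputs I will use are that $U$ and $D$ are open in $(G,f)$ (noted just before \cref{defn:truncation}), so $U\cup D$ is open and its connected component $C$ is open, and that $\gamma$ being a simple path is in particular injective.

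First I would consider the preimage $\gamma^{-1}(C)\subseteq[0,1]$. Since $\gamma$ is continuous and $C$ is open, this is a nonempty open subset of $[0,1]$, and since $\gamma(0),\gamma(1)\in T^\tau(G,f)=(G,f)\setminus(U\cup D)$, neither $0$ nor $1$ belongs to it. I would then pick any connected component, which is an open interval $(a,b)$ with $0\leq a<b\leq 1$. By continuity of $\gamma$ and maximality of the component, both $\gamma(a)$ and $\gamma(b)$ lie in $\overline{C}$. The aim is to show each of them lies in $T^\tau(G,f)$, which makes them roots of $C$ in the sense defined just before \cref{lem:atMostOneRoot}. If $a=0$ (resp.\ $b=1$), this is immediate from the hypothesis that $\gamma(0)$ (resp.\ $\gamma(1)$) lies in $T^\tau(G,f)$. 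In the remaining case $a\in(0,1)$, maximality of $(a,b)$ gives $\gamma(a)\notin C$; to show $\gamma(a)\notin U\cup D$, suppose for contradiction that $\gamma(a)$ lies in some other component $C'\neq C$ of $U\cup D$. Then $C'$ is an open neighborhood of $\gamma(a)$, and since $\gamma(a)\in\overline{C}$, we would have $C'\cap C\neq\emptyset$, contradicting that $C'$ and $C$ are distinct components. So $\gamma(a)\in T^\tau(G,f)\cap\overline{C}$ is a root of $C$, and the same argument applies to $\gamma(b)$. Since $\gamma$ is simple and $a\neq b$, $\gamma(a)\neq\gamma(b)$, yielding two distinct roots of $C$ and the desired contradiction.

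The main obstacle is precisely the step of ruling out that a boundary point $\gamma(a)\in\overline{C}\setminus C$ lands in a \emph{different} component of $U\cup D$ rather than in $T^\tau(G,f)$. This is where the openness of $U\cup D$ (and hence of each of its components) is essential; without it, the argument would only produce points of $\overline{C}$, not genuine roots. The rest of the proof is bookkeeping on open sets, connected components, and the injectivity of the simple path.
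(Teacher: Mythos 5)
Your proof is correct and follows essentially the same strategy as the paper's terse one-line argument: the points where $\gamma$ enters and exits $C$ are both roots of $C$, and simplicity of $\gamma$ forces them to be distinct, contradicting the hypothesis of at most one root (the paper phrases the same contradiction the other way around, via the uniqueness of the root contradicting simplicity). You do make explicit a step the paper leaves implicit, namely ruling out that a boundary point of $C$ could lie in a \emph{different} component of $U\cup D$ rather than in $T^\tau(G,f)$, which correctly hinges on the openness of $U\cup D$ and hence of its components.
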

    \begin{proof}
        Suppose instead that $\gamma(t)$ lies in $C$, and $\gamma$ starts and ends in $T^\tau(G,f)$, which is disjoint from $C$, there is some $t'<t$ and $t''>t$ for which $\gamma(t')$ and $\gamma(t'')$ is the unique root of $C$, contradicting that $\gamma$ is simple.
    \end{proof}
    
    \begin{corollary}\label{lem:safeSimplePath}
        If $(G,f)$ is $\tau$-tailed, then any simple path $\gamma$ in $(G,f)$ that starts and ends in $T^\tau(G,f)$ lies completely inside $T^\tau(G,f)$.
    \end{corollary}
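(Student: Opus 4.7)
The plan is to deduce this corollary almost immediately from the two preceding lemmas, \cref{lem:atMostOneRoot} and \cref{lem:atMostOneRootPath}, after a small reduction to the connected case. First I would observe that since $\gamma\from[0,1]\to G$ is continuous, its image is a connected subset of $G$, and therefore lies entirely inside a single path-connected component $G'$ of $G$. The $\tau$-tailed property is purely local (it concerns the existence of monotone paths at up- and down-forks), so restricting $f$ to $G'$ gives a connected $\tau$-tailed Reeb graph $(G',f|_{G'})$.

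Next I would note that truncation respects the decomposition into components: $U_\tau(G,f)\cap G' = U_\tau(G',f|_{G'})$ and similarly for $D_\tau$, so $T^\tau(G,f)\cap G' = T^\tau(G',f|_{G'})$, and the endpoints of $\gamma$ lie in $T^\tau(G',f|_{G'})$. At this point \cref{lem:atMostOneRoot} applies to $(G',f|_{G'})$ and tells us that every component $C$ of $U_\tau(G',f|_{G'})\cup D_\tau(G',f|_{G'})$ has at most one root.

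Finally, I would invoke \cref{lem:atMostOneRootPath}: for every such component $C$, the simple path $\gamma$ (viewed as a simple path in $G'$ starting and ending in $T^\tau(G',f|_{G'})$) does not intersect $C$. Taking the union over all components of $U_\tau(G',f|_{G'})\cup D_\tau(G',f|_{G'})$, we conclude that $\gamma$ avoids $U_\tau\cup D_\tau$ entirely on $G'$, hence lies in $T^\tau(G',f|_{G'})\subseteq T^\tau(G,f)$, which is the desired conclusion.

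I do not expect any real obstacle here; the only subtlety is the passage from the possibly-disconnected $(G,f)$ to a single connected component in order to apply \cref{lem:atMostOneRoot}, which requires only that $\tau$-tailedness and the sets $U_\tau, D_\tau$ localize to components. Once that is in hand, the corollary is a direct combination of the two prior lemmas.
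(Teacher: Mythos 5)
Your proof is correct and follows the same route the paper implicitly intends: combine \cref{lem:atMostOneRoot} (each component of $U\cup D$ has at most one root) with \cref{lem:atMostOneRootPath} (a simple path between truncation-surviving points avoids any component with at most one root). The paper states this as a corollary with no written proof, so the component-wise reduction you supply — restricting to the connected component $G'$ containing the path, noting that $\tau$-tailedness and the sets $U_\tau, D_\tau$ localize to $G'$, and then invoking \cref{lem:atMostOneRoot} for the connected graph $(G', f|_{G'})$ — is exactly the bridge needed to reconcile the fact that \cref{lem:atMostOneRoot} carries a connectedness hypothesis while the corollary does not. This is a genuine bookkeeping gap in the paper's exposition that you correctly identified and filled; the argument itself is otherwise identical.
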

    
    \begin{lemma}\label{lem:someRoot}
        If $(G,f)$ is connected and $\tau$-safe, then any component $C$ of $U\cup D$ has a root.
    \end{lemma}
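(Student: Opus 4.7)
The plan is to use connectedness of $G$ together with the $\tau$-safe hypothesis to force the boundary of $C$ to be nonempty, and then to show that every boundary point of $C$ in $G$ must lie in $T^\tau(G,f)$.

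First, I would observe that $U$ and $D$ are open subsets of $(G,f)$, so $U \cup D$ is open, and hence each of its connected components (in particular $C$) is open in $G$. By \cref{lem:disconnectedUD} we may assume without loss of generality that $C \subseteq U$; the argument for $C \subseteq D$ is symmetric. Next I would use the $\tau$-safe hypothesis: there exists a point $y \in (G,f)$ with both an up-path and a down-path of height $\tau$, so $y \notin U \cup D$. In particular $y \in T^\tau(G,f)$ and $y \notin C$, so $C$ is a proper nonempty subset of $G$.

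Since $G$ is connected and $C$ is a proper nonempty subset, its topological boundary $\partial C = \bar{C} \cap \overline{G \setminus C}$ is nonempty; otherwise $\bar C$ would be clopen and nonempty in $G$, forcing $\bar C = G$ and contradicting that $\bar C \cap T^\tau(G,f) = \emptyset$ would be needed. Pick any $x \in \partial C$. Since $C$ is open, $x \notin C$. It remains to show $x \notin U \cup D$, which will give $x \in T^\tau(G,f)$ and hence that $x$ is a root of $C$ by definition.

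The main step is ruling out $x \in U \cup D$. Suppose for contradiction $x$ lies in some component $C'$ of $U \cup D$. Because $U \cup D$ is open and locally path-connected, $C'$ is open, so some neighborhood $N$ of $x$ satisfies $N \subseteq C'$. Since $x \in \bar C$, we have $N \cap C \neq \emptyset$, which forces $C' \cap C \neq \emptyset$; as components are either equal or disjoint, $C' = C$, contradicting $x \notin C$. Therefore $x \in T^\tau(G,f)$ and is the required root. I do not anticipate major obstacles in this argument; the only mild subtlety is verifying that components of $U \cup D$ are open, which is immediate in the locally connected topological space underlying the Reeb graph $G$.
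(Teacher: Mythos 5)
Your proof is correct and, interestingly, it bypasses the paper's route through \cref{lem:disconnectedUD}. The paper first observes that for $\tau>0$ both $U$ and $D$ are nonempty (every local maximum lies in $U$, every local minimum in $D$), invokes \cref{lem:disconnectedUD} to conclude $U\cup D$ has at least two components, and then uses connectivity of $G$ to run a path from $C$ to a distinct component, whose exit point from $C$ is the root. You instead use the weakly-$\tau$-safe part of the hypothesis directly to produce a point $y\in T^\tau(G,f)$, so $C\subsetneq G$; then openness of $C$ plus connectedness of $G$ forces $\partial C\neq\emptyset$, and openness of components of $U\cup D$ (valid because $G$ is locally connected) rules out $\partial C$ meeting $U\cup D$. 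This is a cleaner dependency structure: you never need the disjointness of $U$ and $D$, only that $T^\tau(G,f)\neq\emptyset$. One small remark: the sentence invoking \cref{lem:disconnectedUD} to assume WLOG $C\subseteq U$ is never actually used downstream and can be dropped; also the parenthetical justification that $\partial C\neq\emptyset$ is slightly garbled as written (the clean version is: $\partial C=\emptyset$ and $C$ open imply $C$ clopen and nonempty, hence $C=G$, contradicting $y\notin C$), though the intended argument is sound.
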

    \begin{proof}
        Assume that $U\cup D$ is nonempty and $\tau >0$ (otherwise we are done).
        Then $U$ and $D$ are both nonempty.
        By Lemma~\ref{lem:disconnectedUD}, $U$ is disconnected from $D$, so $U\cup D$ has at least two components.
        Since $(G,f)$ is connected, there is a path from $C$ to a different component of $U\cup D$, so $C$ has at least one root.
    \end{proof}

    We show that connectivity is preserved for $\tau$-tailed graphs.
    \begin{proposition}\label{lem:staysConnected}
        If $(G,f)$ is connected and $\tau$-tailed, then $T^\tau(G,f)$ is also connected.
    \end{proposition}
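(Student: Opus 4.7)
The plan is to reduce connectedness of $T^\tau(G,f)$ to a direct consequence of \cref{lem:safeSimplePath}, which has already been set up by the chain of supporting lemmas immediately preceding the statement. The strategy is to argue path-connectedness: pick two arbitrary points $x,x' \in T^\tau(G,f)$ and exhibit a path between them lying entirely in the truncation.

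First I would dispose of the degenerate case: if $T^\tau(G,f) = \emptyset$, the claim holds vacuously. Otherwise, choose $x,x' \in T^\tau(G,f)$. Since $(G,f)$ is a connected finite graph (with its usual topology), there exists a simple path $\gamma \colon [0,1] \to G$ with $\gamma(0)=x$ and $\gamma(1)=x'$. The key move is then to invoke \cref{lem:safeSimplePath}: under the $\tau$-tailed hypothesis, any simple path in $(G,f)$ whose endpoints lie in $T^\tau(G,f)$ remains inside $T^\tau(G,f)$. Applied to $\gamma$, this yields a path from $x$ to $x'$ in $T^\tau(G,f)$, so $T^\tau(G,f)$ is path-connected, hence connected.

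Essentially all of the substantive work has been distributed across the earlier lemmas, so I do not expect a genuine obstacle at this final step. To recap where the hypotheses were used: \cref{lem:forestUD} uses $\tau$-tailed to force each component of $U$ (resp.\ $D$) to root an up-tree (resp.\ down-tree) of height less than $\tau$; \cref{lem:atMostOneRoot} uses connectedness to bound the number of roots of a component by one; and \cref{lem:atMostOneRootPath} then prevents a simple path with endpoints in $T^\tau(G,f)$ from entering such a component, since entry and exit would require two distinct crossings through a unique root, violating simplicity. The mild subtlety worth flagging is that simplicity of $\gamma$ is indispensable — an arbitrary path is free to dip in and out of $U\cup D$ — but this causes no trouble because in a graph every path between two points contains a simple subpath with the same endpoints.
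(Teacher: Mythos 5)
Your proof is correct and follows essentially the same route as the paper's: the paper argues by contradiction — a simple path connecting two components of $T^\tau(G,f)$ would have to enter and exit a component $C$ of $U\cup D$, and since $C$ has a single root (\cref{lem:atMostOneRoot}) the path would have to revisit it, contradicting simplicity — which is exactly the content of \cref{lem:safeSimplePath} that you invoke directly. The only difference is organizational: you cite the pre-packaged corollary, while the paper re-derives the contradiction inline.
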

    \begin{proof}
        Suppose not, then there is a simple path in $(G,f)$ connecting two components of $T^\tau(G,f)$.
        This path must enter and exit a component $C$ of $U\cup D$.
        Since $C$ has a single root, the path must revisit it, contradicting that the path is simple.
    \end{proof}

    Note that the previous collection of lemmas was independent of the smoothing operation. 
    We can combine \cref{lem:staysConnected} with \cref{prop:longTailsCombined} to get the main result of this section.

    \ConnectedSmoothingAndTruncating*
    
    \begin{proof}
     $G$ is connected by assumption, and by \cref{prop:longTailsCombined}, $S_\e(G,f)$ is $2\e$-tailed.
    This further implies that $S_\e(G,f)$ is $\tau$ tailed for every $\tau \in [0,2\e]$, so by \cref{lem:staysConnected}, $S_\e^\tau(G,f)$ is connected.
    \end{proof}

\subsection{Emptiness}
\label{ssec:apdx:Empty}
    We conclude this section with a results on how smoothing and truncation can affect the image of the Reeb graph  $\Im(G,f) := f(G) \subseteq \R$.
    It is relatively immediate to see that if  the diameter of the image $\|\Im(G,f)\| < 2\tau$, then $T^\tau(G,f)$ is empty. 
    However, when $\|\Im(G,f)\| \geq 2\tau$, it is still possible for the $\tau$ truncation to be empty; see, for example, \cref{fig:EmptyTauTruncation}.

    \begin{proposition}
    \label{prop:ImageOfTruncatedGraph}
        Fix a Reeb graph $(G,f)$ with $f(G) = [a,b] \subseteq \R$. 
        If $b-a < 2\tau$, then $\Im(T^\tau(G,f)) = \emptyset$. 
        Otherwise, $\Im(T^\tau(G,f)) \subseteq [a+\tau, b-\tau]$.
    \end{proposition}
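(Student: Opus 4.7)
The plan is to unpack the definition of $T^\tau(G,f)$ and use the fact that heights of monotone paths in $(G,f)$ are bounded by $\|\Im(G,f)\| = b-a$. Specifically, any point $x \in T^\tau(G,f)$ must, by \cref{defn:truncation}, admit both an up-path of height $\tau$ and a down-path of height $\tau$ inside $(G,f)$.

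First, I would establish the inclusion $\Im(T^\tau(G,f)) \subseteq [a+\tau, b-\tau]$. Pick any $x \in T^\tau(G,f)$ and let $\gamma_u, \gamma_d \from [0,1] \to G$ be an up-path and a down-path of height $\tau$ starting at $x$. Then $f(\gamma_u(1)) = f(x) + \tau$ and $f(\gamma_d(1)) = f(x) - \tau$. Since the images of $\gamma_u$ and $\gamma_d$ lie in $G$ and $f(G) = [a,b]$, we have $f(x) + \tau \leq b$ and $f(x) - \tau \geq a$, i.e. $f(x) \in [a+\tau, b-\tau]$. Taking the union over all such $x$ gives $\Im(T^\tau(G,f)) \subseteq [a+\tau, b-\tau]$.

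Second, for the emptiness statement, suppose $b-a < 2\tau$, so that $a+\tau > b-\tau$ and the interval $[a+\tau, b-\tau]$ is empty. By the inclusion just proved, $\Im(T^\tau(G,f)) \subseteq \emptyset$, and since $\Im$ of a nonempty Reeb graph is nonempty, this forces $T^\tau(G,f)$ itself to be empty.

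There is no real obstacle here; the entire argument is a direct consequence of the definition of truncation together with the fact that monotone paths of height $\tau$ at a point $x$ push the function value into $[a,b]$ both above $f(x)+\tau$ and below $f(x)-\tau$. The only subtlety is being careful to note that the up-path and down-path witnessing $x \in T^\tau(G,f)$ live in $(G,f)$ itself (not in some smoothing), so their endpoints' function values are trapped in $[a,b]$ directly.
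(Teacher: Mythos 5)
Your proof is correct and takes essentially the same approach as the paper: the key observation in both is that any $x\in T^\tau(G,f)$ carries an up-path and down-path of height $\tau$ inside $G$, whose endpoints are trapped in $[a,b]$, forcing $f(x)\in[a+\tau,b-\tau]$. The only difference is organizational — you prove the inclusion first (unconditionally) and derive the emptiness case as an immediate corollary, whereas the paper handles the two cases separately — but this is a streamlining, not a different argument.
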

    \begin{proof}
        First, if $b-a < 2\tau$, then we will show that $T^\tau(G,f)$ is empty. 
        Indeed, since any point in $T^\tau(G,f)$ is a point in $x$ with up-and down-paths of height $\tau$ in $G$, the projection of these paths imply that $[f(x)-\tau, f(x)+\tau] \subseteq \Im(G,f)$, but this is impossible if $b-a <2\tau$. 
        
        Now we can assume that $b-a \geq 2\tau$ and need to show that $\Im(T^\tau(G,f)) \subseteq [a+\tau, b-\tau]$. 
        Let $c \in \Im(T^\tau(G,f)) $, so there is an $x \in G$ with an up-path $\pi_+$ and a down path $\pi_-$, each of height $\tau$, with $\pi_+(0) = \pi_-(0) = x$.
        Then $f(\pi_+(1)) = c+\tau$, and since $f(\pi_+(1)) \leq b$, $c \leq b-\tau$. 
        The symmetric argument gives us that $f(\pi_-(1)) = c-\tau$ so $c \geq a+\tau$.
    \end{proof}

    We can mitigate the undesirable properties of truncation by smoothing first.
    In the case that we start with a connected Reeb graph, we can make a stronger statement about the change in the image when smoothing and truncating.

    \ImageOfSmoothingAndTruncating*

    \begin{proof}
    Note that by \cref{prop:ImageOfSmoothedGraph}, $\Im(S_\e(G,f)) = [a-\e, b+\e]$. 
    If we assume $b-a < 2(\tau-\e)$, then $(b+\e) - (a-\e) < 2\tau$, so by \cref{prop:ImageOfTruncatedGraph}, 
    $\Im(S_\e^\tau(G,f)) = \Im(T^\tau(S_\e(G,f))) = \emptyset$.
    
    Now, we can assume $b-a \geq 2(\tau-\e)$. 
    One direction of containment is easy since by \cref{prop:ImageOfSmoothedGraph}, 
    \begin{equation*}
        \Im(S_\e^\tau(G,f)) = \Im(T^\tau(S_\e(G,f))) \subseteq [a- (\e - \tau), b+ (\e-\tau)].
    \end{equation*}
    Thus, it remains to show that $[a- (\e - \tau), b+ (\e-\tau)] \subseteq \Im(S_\e^\tau(G,f))$.
    There exist points $s,t\in S_\e(G,f)$ with $f(s)=a-\e$ and $f(t)=b+\e$ that are connected by a simple (not necessarily monotone path) $\pi$.
    If $\pi$ is monotone, then truncating retains its monotone subpath with image $[a-(\e-\tau),b+(\e-\tau)]$, and we are done.
    If on the other hand $\pi$ is not monotone, consider its maximal monotone subpaths.
    The subpath containing $s$ ends at a downfork $d$, and the other subpath containing $d$ ends at an upfork $u$ with $f(d)\geq f(u)$.
    By \cref{prop:longTailsCombined}, every down-fork of $S_\e(G,f)$ has a $2\e$ long up-path, so $f(d)\leq b-\e$ and symmetrically $f(u)\geq a+\e$.
    Let $d'$ and $u'$ be the points reachable from $d$ and $u$ by a $2\e$ long up-path and down-path, respectively.
    Then, $f(d')\geq f(d)+\e\geq f(u)\geq f(u')-\e\geq a=f(s) + \e$, so the first monotone subpath of $\pi$ has height at least $4\e$, and truncating it retains a point $p$ with $f(p)=a-(\e-\tau)$.
    Symmetrically, the monotone subpath of $\pi$ containing $t$ retains a point $q$ with $f(q)=b+(\e-\tau)$ after truncation.
    By Proposition~\ref{cor:ConnectedSmoothingAndTruncating}, $p$ and $q$ are connected, so $[a- (\e - \tau), b+ (\e-\tau)] \subseteq \Im(S_\e^\tau(G,f))$ by the intermediate value theorem.
    \end{proof}

\section{Maps and their properties}
\label{sec:mapsAndProperties}

We next build several maps, closely related to the map $\eta:S_\e(G,f) \to S_{\e'}(G,f)$ coming from the smoothing, which we will use to eventually build a new flow for our category $\Reeb$.
We then show that truncation itself is functorial and relate its definition to these maps.

\subsection{The maps \texorpdfstring{$\eta$}{e}, \texorpdfstring{$\nu$}{v}, and \texorpdfstring{$\omega$}{w}}
\label{ssec:maps}

We will  define several maps that relate $S_\e^\tau(G,f)$ for various values of $\tau$ and $\e$.
These maps can be visualized in the $\e$-$\tau$ plane; see \cref{tab:Maps} for a handy reference of the maps and their requirements and properties.

The easiest map to define is $\nu$, which is simply an inclusion of an untruncated graph into the original.
That is, we have immediate inclusions maps $S^\tau_\e(G,f) \hookrightarrow S_\e(G,f)$ since $S^\tau_\e(G,f)$ is defined as a subspace of $S_\e(G,f)$.
Given $\tau \leq \tau'$, this generalizes to map $\nu\from S_\e^{\tau'}(G,f) \to S_\e^{\tau}(G,f)$ on the truncated smoothings.
The easiest way to see this is using the fact that any point with an up- and down-path of height $\tau'$ certainly has each of height $\tau$.

\begin{table}[h]
    \centering
\begin{tabular}{c|c cccc}
      &  Direction & Slope ($\frac{d\tau}{d\e}$) &
    \shortstack{Requirements for existence\\ $S_{\e_1}^{\tau_1}(G,f) \to S_{\e_2}^{\tau_2}(G,f)$ }
    & Notes & Defn.
    \\ \hline
    $\nu$& $\downarrow$ & vertical & $\e_1=\e_2$, $\tau_1\geq\tau_2$ & inclusion map& \ref{defn:nu}\\
    $\eta$ & $\to$ & horizontal & $\e_1\leq\e_2$, $\tau_1=\tau_2$ & smoothing map & \ref{defn:eta}\\
    $\omega$ & $\searrow$& $m \in [-\infty, 0]$  & $\e_1\leq\e_2$, $\tau_1\geq\tau_2$ & Lem~\ref{lem:squareCommutes}: 
    $\omega = \nu \eta = \eta \nu$ 
    & \ref{defn:omega}\\
    $\rho$& $\nearrow$ &  $m \in [0,1]$ & $0 \leq \tau_2-\tau_1 \leq \e_2-\e_1 $ & equals $\eta$ when ${\tau_1} = {\tau_2}$ &\ref{defn:rho}
\end{tabular}
    \caption{Maps and when they exist. }
    \label{tab:Maps}
\end{table}

\begin{definition}
\label{defn:nu}
For any $0 \leq \tau_i \leq 2\e$ and $\tau_2 \leq \tau_1$, the map
$\nu\from S_{\e}^{\tau_1} (G,f) \hookrightarrow S_{\e}^{\tau_2} (G,f)  $
is given by inclusion.
\end{definition}

The next map is the restriction of  $\eta\from S_\e(G,f) \to S_{\e'}(G,f)$ to the truncated graphs with $\e \leq \e'$.
It is immediate that the following diagram
\begin{equation*}
    \begin{tikzcd}
    {\color{violet} G \times [-\e + \tau, \e]}
        \ar[dd,hook,violet] \ar[rr,hook, crossing over]
        \ar[to=L,dashed,start anchor={[xshift=5ex]}, "q"]
    &&
    {\color{violet} G \times [-\e' + \tau, \e']}
        \ar[to=R,dashed,start anchor={[xshift=5ex]}, "q"]
    \\
    &
    |[alias=L]|  S_\e(G)
        \ar[rr,crossing over, "\eta", near start]
        \ar[from = dddl, dashed,start anchor={[xshift=6ex]}, "q", near start]
    &&
    |[alias=R]|  S_{\e'}(G)
    \\
    {\color{black} G \times [-\e , \e]}
        \ar[rr,hook, crossing over] \ar[to=L, "q"]&&
    {\color{black} G \times [-\e', \e']}
        \ar[to=R, "q"]
        \ar[from = uu, violet, crossing over, hook]
    \\
    \\
    {\color{dkorange} G \times [-\e , \e-\tau]}
        \ar[uu,hook',dkorange]
        \ar[rr,hook, crossing over]
        &&
    {\color{dkorange} G \times [-\e', \e'-\tau]}
        \ar[uu,hook',dkorange]
        \ar[to=R,dashed,start anchor={[xshift=6ex]}, "q"]
    \end{tikzcd}
\end{equation*}
commutes since the black square was shown to be commutative in \cref{eq:Square}, and all other maps are either inclusions or defined by composition.
From this diagram, we can extend the definition of $\eta$ to the truncated graphs.

\begin{definition}
\label{defn:eta}
For $0 \leq \e \leq \e'$ and $0 \leq \tau \leq 2\e$, the map
$\eta\from  S_\e^\tau(G,f) \to S_{\e'}^\tau(G,f)$ is given by the restriction of $\eta\from S_\e(G,f) \to S_{\e'}(G,f)$ to $\nu(S_{\e}^\tau(G,f)) \subseteq S_\e(G,f)$.
\end{definition}

The maps $\eta$ and $\nu$ commute as shown in the following lemma.

\begin{lemma}
\label{lem:squareCommutes}
For any $0\leq\tau\leq\tau'\leq 2\e\leq 2\e'$, the following diagram commutes:
\begin{equation*}
    \begin{tikzcd}
        S_\e^{\tau'}(G,f) \ar[d,hook, "\nu"]\ar[r, "\eta"] & S_{\e'}^{\tau'}(G,f)\ar[d,hook, "\nu"]\\
        S_\e^\tau(G,f)      \ar[r,"\eta"] & S_{\e'}^\tau(G,f)
    \end{tikzcd}
\end{equation*}
\end{lemma}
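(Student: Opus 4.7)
The plan is to verify commutativity of the square pointwise, by unwinding the definitions of $\nu$ and $\eta$ on truncated smoothings. The key observation is that all four maps in the square are obtained as restrictions of a single pair of ``global'' maps: the inclusion $S_\e^\bullet(G,f) \hookrightarrow S_\e(G,f)$ (and similarly at $\e'$) and the smoothing map $\eta \from S_\e(G,f) \to S_{\e'}(G,f)$. Because of this, there is essentially nothing to check beyond ensuring that each restriction lands in the claimed subspace.

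First, I would fix a point $x \in S_\e^{\tau'}(G,f)$ and chase it around the square. By \cref{defn:nu}, the left-hand $\nu$ sends $x$ to itself, now viewed inside $S_\e^\tau(G,f) \subseteq S_\e(G,f)$. Then by \cref{defn:eta}, $\eta(\nu(x))$ is computed by applying the map $\eta \from S_\e(G,f)\to S_{\e'}(G,f)$ from \cref{ssec:Reebinterleaving} to $x$. Going the other way around, the top $\eta$ is, by \cref{defn:eta}, also the restriction of the same global smoothing map, so $\eta(x)$ in $S_{\e'}^{\tau'}(G,f)$ is the same element of $S_{\e'}(G,f)$, and the right-hand $\nu$ then simply re-reads this point as an element of $S_{\e'}^\tau(G,f)$. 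Both routes therefore produce the same element of $S_{\e'}(G,f)$, giving commutativity.

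The only substantive thing to justify is that both $\eta$ arrows in the square are in fact well defined as maps into the stated codomains. This is exactly the content of \cref{defn:eta}, but it is worth noting explicitly that if $x \in S_\e^{\tau'}(G,f)$ then $\eta(x)$ has up- and down-paths of height $\tau'$ in $S_{\e'}(G,f)$, obtained by pushing the witnessing paths in $S_\e(G,f)$ through the smoothing map; and analogously for $\tau$. Since $\tau \leq \tau'$, the inclusion $S_{\e'}^{\tau'}(G,f) \hookrightarrow S_{\e'}^\tau(G,f)$ is consistent with viewing $\eta(x)$ either way.

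I expect no real obstacle here: once the four maps are recognized as restrictions of the same underlying pair, the diagram collapses to the tautology $\eta(x) = \eta(x)$ in $S_{\e'}(G,f)$. If I wanted to avoid even pointwise chasing, I could simply invoke the fact that restricting a commuting diagram (in this case the trivially commuting square $S_\e(G,f) \xrightarrow{\eta} S_{\e'}(G,f)$ with identity on each side) to subspaces preserves commutativity, which is the shortest possible write-up.
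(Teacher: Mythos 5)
Your proposal is correct and is essentially the paper's argument. The paper packages the same idea diagrammatically: it stacks the given square on top of the bottom row $S_\e(G,f)\xrightarrow{\eta} S_{\e'}(G,f)$ and observes that the bottom square, the outer face, and the two side triangles (compositions of inclusions) all commute by construction of $\eta$ and $\nu$ as restrictions, whence the top square commutes — which is the ``restricting a commuting diagram'' shortcut you note at the end.
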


\begin{proof}
Because $\eta$ is defined by restricting to the relevant subspaces, we can use the diagram 
\begin{equation*}
    \begin{tikzcd}
        S_\e^{\tau'}(G,f) \ar[d,hook, "\nu"]\ar[r, "\eta"] 
        \ar[dd,hook, "\nu"', bend right = 75]
        & S_{\e'}^{\tau'}(G,f)\ar[d,hook, "\nu"]
        \ar[dd,hook, "\nu", bend left = 75]
        \\
        S_\e^\tau(G,f)      \ar[r,"\eta"] 
        \ar[d, hook, "\nu"]
        & S_{\e'}^\tau(G,f)
        \ar[d,hook, "\nu"]
        \\
        S_\e(G,f)      \ar[r,"\eta"] & S_{\e'}(G,f)
    \end{tikzcd}
\end{equation*}
The bottom square and outside face commute by \cref{defn:eta}, and the outer triangles commute as they are simply inclusions. Thus the top square commutes as desired. 
\end{proof}

\begin{figure}
\centering
    \includegraphics[page=1, width = 1.7in]{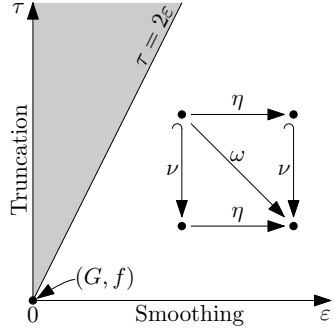}
    \caption{Visualization of the commutative diagram of \cref{lem:squareCommutes}.
    }
\label{fig:TruncatedMaps}
\end{figure}
We will use the square of \cref{lem:squareCommutes} repeatedly, so we name the diagonal map as follows.

\begin{definition}
\label{defn:omega}
For $0 \leq \tau_i \leq 2\e_i$ $i=1,2$, $\e_1 \leq \e_2$ and $\tau_2 \leq \tau_1$,
$\omega\from S_{\e_1}^{\tau_1}(G,f) \to S_{\e_2}^{\tau_2}(G,f)$ is defined by
is the diagonal of the square
\begin{equation*}
    \begin{tikzcd}
        S_{\e_1}^{\tau_1}(G,f)
            \ar[d,hook, "\nu"]
            \ar[r, "\eta"]
            \ar[dr,"\omega"]
        & S_{\e_2}^{\tau_1}(G,f)
            \ar[d,hook, "\nu"]\\
        S_{\e_1}^{\tau_2}(G,f)
            \ar[r,"\eta"]
        & S_{\e_2}^{\tau_2}(G,f).
    \end{tikzcd}
\end{equation*}
That is, $\omega = \eta \nu = \nu \eta$ when those maps exist.

\end{definition}

See \cref{fig:TruncatedMaps} for a visualization.
Note that this definition implies that $\omega = \eta$ if $\tau_1 = \tau_2$; and $\omega = \nu$ if $\e_1 = \e_2$.
Notice that we can also define $\omega$ by looking backwards by using \cref{prop:DefnBackwardView}.

\begin{lemma}
\label{lem:omega_restricts}
Assume $\tau_1$, $\tau_2$, $\e_1$ and $\e_2$ chosen so that the map $\omega\from S_{\e_1}^{\tau_1}(G,f) \to S_{\e_2}^{\tau_2}(G,f)$
is defined; i.e.~$0 \leq \tau_i \leq 2\e_i$,  $\e_1 \leq \e_2$ and $\tau_2 \leq \tau_1$.
\begin{enumerate}%
    \item The map $\omega$ is equal to the restriction of
    $\eta\from S_{\e_1}(G,f) \to S_{\e_2}(G,f)$ to
    $\nu(S_{\e_1}^{\tau_1}(G,f)) \subseteq S_{\e_1}(G,f)$.
    \item If $0 \leq \tau_i\leq \e_i$, then $\omega$ is the restriction of
        $\eta\from S_{\e_1}(G,f) \to S_{\e_2}(G,f)$ to
        $\eta(S_{\e_1-\tau_1}) \subseteq S_{\e_1}(G,f)$.
\end{enumerate}
\end{lemma}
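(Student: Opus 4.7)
The plan is to treat both statements as essentially bookkeeping on the definitions, leveraging the two equivalent descriptions of truncated smoothing already established (\cref{defn:TruncatedSmoothing} via subspaces of $S_\e(G,f)$, and \cref{prop:DefnBackwardView} via the image of $\eta$ when $\tau\le\e$).

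For (1), I would start by unpacking \cref{defn:omega}: we have $\omega = \nu\after\eta$, where $\eta\from S_{\e_1}^{\tau_1}(G,f)\to S_{\e_2}^{\tau_1}(G,f)$ and $\nu\from S_{\e_2}^{\tau_1}(G,f)\hookrightarrow S_{\e_2}^{\tau_2}(G,f)$. By \cref{defn:eta}, the first factor $\eta$ is already, by fiat, the restriction of the ambient $\eta\from S_{\e_1}(G,f)\to S_{\e_2}(G,f)$ to the subspace $\nu(S_{\e_1}^{\tau_1}(G,f))\subseteq S_{\e_1}(G,f)$. Post-composing with the inclusion $\nu\from S_{\e_2}^{\tau_1}(G,f)\hookrightarrow S_{\e_2}^{\tau_2}(G,f)$ does not change values in $S_{\e_2}(G,f)$, it merely relabels the codomain as a (larger) subspace of $S_{\e_2}(G,f)$. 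So $\omega$ equals the restriction of $\eta\from S_{\e_1}(G,f)\to S_{\e_2}(G,f)$ to $\nu(S_{\e_1}^{\tau_1}(G,f))$, which is exactly the claim.

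For (2), the extra hypothesis $\tau_i\le\e_i$ lets me invoke \cref{prop:DefnBackwardView}: as a subspace of $S_{\e_1}(G,f)$, the truncated smoothing admits the alternative description
\[
\nu(S_{\e_1}^{\tau_1}(G,f)) \;=\; S_{\e_1}^{\tau_1}(G,f) \;=\; \eta\bigl(S_{\e_1-\tau_1}(G,f)\bigr),
\]
where $\eta$ on the right is the smoothing morphism $S_{\e_1-\tau_1}(G,f)\to S_{\e_1}(G,f)$. Plugging this identification into the conclusion of part (1) yields that $\omega$ is the restriction of $\eta\from S_{\e_1}(G,f)\to S_{\e_2}(G,f)$ to $\eta(S_{\e_1-\tau_1}(G,f))$, as claimed.

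The only real subtlety, and the one step I would pay close attention to, is the identification of $S_{\e_1}^{\tau_1}(G,f)$ with its image $\nu(S_{\e_1}^{\tau_1}(G,f))$ inside $S_{\e_1}(G,f)$; this is a set-level equality rather than merely a natural isomorphism, which is precisely how $S_\e^\tau$ was defined in \cref{defn:truncation,defn:TruncatedSmoothing}. The argument otherwise involves no computation beyond chasing inclusions, and no technicality beyond the suppressed isomorphism $S_{\e_2-\e_1}S_{\e_1}\cong S_{\e_2}$ mentioned in \cref{remark:abusingNotation}, which has already been absorbed into the notation $\eta\from S_{\e_1}(G,f)\to S_{\e_2}(G,f)$.
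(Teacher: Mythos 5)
Your proof is correct and takes essentially the same approach as the paper's: both reduce part (1) to a diagram chase using the fact that $\eta$ between truncated smoothings is, by \cref{defn:eta}, the ambient $\eta$ restricted along $\nu$, and both derive part (2) by substituting the alternate description from \cref{prop:DefnBackwardView}. The only (cosmetic) difference is that you traverse the square of \cref{defn:omega} as $\nu\after\eta$ (smooth, then include) while the paper's displayed diagram uses the $\eta\after\nu$ composite (include, then smooth); since \cref{defn:omega} records $\omega=\eta\nu=\nu\eta$, these are interchangeable.
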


\begin{proof}
The first statement is saying that the diagram
\begin{equation*}
    \begin{tikzcd}
        S_{\e_1}^{\tau_1}(G,f)
            \ar[d,hook, "\nu"]
            \ar[dr,"\omega"]
        &
            \\
        S_{\e_1}^{\tau_2}(G,f)
            \ar[r,"\eta"]
            \ar[d,hook, "\nu"]
        & S_{\e_2}^{\tau_2}(G,f)
            \ar[d,hook, "\nu"]
        \\
        S_{\e_1}(G,f)
            \ar[r,"\eta"]
        & S_{\e_2}(G,f)
    \end{tikzcd}
\end{equation*}
commutes.
The second statement is immediate from combining \cref{prop:DefnBackwardView},
(which requires $0 \leq \tau_i \leq \e_i$) with the first statement.
\end{proof}

\subsection{A categorical view of truncation }
\label{ssec:categorical_truncation}

    We next investigate the properties of truncation that arises from our combinatorial interpretation, \cref{prop:defn_combinatorial}, separating the truncation operation from the smoothing functor and considering it as an operation on a graph equipped with a height function in its own right.

    From \cref{defn:truncation}, it is immediate that on the objects, $T^0(G,f)=(G,f)$.
    It is also easy to see that $T^{\tau_1}(G,f) \subseteq T^{\tau_2}(G,f)$ for $\tau_1 \geq \tau_2$.
    To make it clear which graph we are working with, we will again use $\nu$ to denote the relevant inclusion.

    \begin{definition}
        For $\tau_1\geq \tau_2$, we write $\nu\from T^{\tau_1}(G,f) \hookrightarrow T^{\tau_2}(G,f)$ for the inclusion.
    \end{definition}

    \begin{proposition}
        For $0 \leq \tau_1,\tau_2$,  $T^{\tau_2}T^{\tau_1}(G,f) =  T^{\tau_1+\tau_2}(G,f)$.
    \end{proposition}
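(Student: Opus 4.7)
The plan is to prove both inclusions of underlying point sets directly from \cref{defn:truncation}: a point lies in $T^\tau(G,f)$ if and only if it admits both an up-path and a down-path of height $\tau$ in $(G,f)$. Since $T^{\tau_1}(G,f)$ is a subspace of $(G,f)$, the iterated truncation $T^{\tau_2}T^{\tau_1}(G,f)$ consists of those points of $T^{\tau_1}(G,f)$ that admit up- and down-paths of height $\tau_2$ lying inside $T^{\tau_1}(G,f)$, with all monotonicity conditions inherited from $(G,f)$.

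For the containment $T^{\tau_1+\tau_2}(G,f)\subseteq T^{\tau_2}T^{\tau_1}(G,f)$, I would take $x\in T^{\tau_1+\tau_2}(G,f)$ together with an up-path $\pi_+$ and down-path $\pi_-$ of height $\tau_1+\tau_2$, parameterized by $f$-value so that $f(\pi_+(s))=f(x)+s$ and $f(\pi_-(s))=f(x)-s$. I would show that every intermediate point $p=\pi_+(s)$ with $s\in[0,\tau_2]$ lies in $T^{\tau_1}(G,f)$: the tail $\pi_+|_{[s,\,s+\tau_1]}$ is an up-path of $p$ of height $\tau_1$, and the reverse of $\pi_+|_{[0,s]}$ concatenated with $\pi_-$ is a monotone down-path of $p$ of total height $s+\tau_1+\tau_2\geq\tau_1$. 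The symmetric argument handles $\pi_-|_{[0,\tau_2]}$. Hence $x\in T^{\tau_1}(G,f)$, and the truncated subpaths $\pi_+|_{[0,\tau_2]}$ and $\pi_-|_{[0,\tau_2]}$ are up- and down-paths of $x$ of height $\tau_2$ lying inside $T^{\tau_1}(G,f)$, so $x\in T^{\tau_2}T^{\tau_1}(G,f)$.

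For the reverse inclusion $T^{\tau_2}T^{\tau_1}(G,f)\subseteq T^{\tau_1+\tau_2}(G,f)$, given $x\in T^{\tau_2}T^{\tau_1}(G,f)$ I would take up- and down-paths $\sigma_+,\sigma_-$ of $x$ of height $\tau_2$ lying inside $T^{\tau_1}(G,f)$. Their far endpoints lie in $T^{\tau_1}(G,f)$ and hence admit, in $(G,f)$, an up-path and a down-path of height $\tau_1$, respectively. Concatenating $\sigma_+$ with the up-path from its endpoint yields an up-path of $x$ of height $\tau_1+\tau_2$; monotonicity is preserved at the juncture because the two pieces share a common function value. The symmetric construction provides the matching down-path of height $\tau_1+\tau_2$, so $x\in T^{\tau_1+\tau_2}(G,f)$. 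The main obstacle is the forward inclusion: one must verify that the initial subpaths of $\pi_\pm$ remain inside $T^{\tau_1}(G,f)$, rather than merely inside $(G,f)$, and this is achieved by the backtrack-through-$x$ trick which uses the remaining $\tau_1+\tau_2$ reach of $\pi_-$ (respectively $\pi_+$) to supply the needed height-$\tau_1$ down-path (respectively up-path) at every intermediate point.
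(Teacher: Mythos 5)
Your proposal is correct and follows essentially the same two-pronged strategy as the paper: concatenation of paths for the inclusion $T^{\tau_2}T^{\tau_1}\subseteq T^{\tau_1+\tau_2}$, and restriction of a long up-path (resp.\ down-path) for the reverse inclusion. In fact you are somewhat more careful than the paper on the harder direction $T^{\tau_1+\tau_2}(G,f)\subseteq T^{\tau_2}T^{\tau_1}(G,f)$: the paper's argument verifies only that each intermediate point $\pi_+(s)$ with $s\in[0,\tau_2]$ has an up-path of height $\tau_1$ (namely $\pi_+|_{[s,s+\tau_1]}$), leaving implicit that one must also exhibit a down-path of height $\tau_1$ in $(G,f)$ to conclude $\pi_+(s)\in T^{\tau_1}(G,f)$. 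Your backtrack-through-$x$ observation — reversing $\pi_+|_{[0,s]}$ and continuing along $\pi_-$ to get a down-path of height $s+\tau_1+\tau_2\geq\tau_1$ — is exactly what is needed to close that gap, so your proof is a correct and slightly more complete rendering of the paper's argument.
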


    \begin{proof}
        Let $x \in T^{\tau_2}T^{\tau_1}(G,f)$.
        Then $x$ has an up-path of height $\tau_2$ in $T^{\tau_1}(G,f)$, $\pi_2\from [0,1] \to G$.
        Since the endpoint, $\pi(1)$ is in $T^{\tau_1}(G,f)$, it has an up-path $\pi_1$ of height $\tau_1$ in $G$.
        Concatenating these gives an up-path in $G$ of height $\tau_1 + \tau_2$, so $x \in T^{\tau_1+\tau_2}(G,f)$.

        For the other direction, set $\tau = \tau_1+\tau_2$ and assume $x \in T^{\tau}(G,f)$.
        Then $x$ has an up-path  of height $\tau$, reparameterized as   $\pi\from  [0,\tau] \to G$ with $f(\pi(t)) = f(x) + t$.
        So $x \in T^{\tau_1}(G,f)$ since it has a path of height $\tau_1 \leq \tau$ in $G$ by simply restricting $\pi$ to $[0,\tau_1]$.
        Further, the path $\pi$ restricted to $[0,\tau_2]$ is an up-path of height $\tau_2$ starting from $x$ which is entirely contained in $T^{\tau_1}(G,f)$.
        This is because every point $\pi(t)$ for $t \in [0,\tau_2]$ has the up-path of height $\tau_1$ given by $\pi\from [t,t+\tau_1]$.
        Taken together, this means that $x \in T^{\tau_2}T^{\tau_1}(G,f)$.
    \end{proof}

    We next show that truncation $T^\tau$ is a functor on Reeb.
    To do this, we need to provide a morphism $T^\tau[\phi]: T^\tau(G,f) \to T^\tau(H,h)$ for a given morphism $\phi:(G,f) \to (H,h)$. 
    Indeed, we can simply use the restriction map, which is well defined by the following lemma. 

\begin{lemma}
\label{prop:mapRestrictsTruncation}
    Given a function preserving map $\phi:(G,f) \to (H,h)$, assume $T^\tau(G,f)$ is not empty. 
    Then the map $\phi$ restricts to $T^\tau(G,f)$, i.e.
    \begin{equation*}
    \begin{tikzcd}
       T^\tau(G,f) 
        \ar[r,dashed, "{\tilde \phi}"] 
        \ar[d, hook, "\nu"]
       & T^\tau(H,h)
        \ar[d, hook, "\nu"]
        \\
       (G,f) \ar[r, "\phi"] & (H,h)
    \end{tikzcd}
    \end{equation*}
    commutes, where $\tilde \phi(a) = \phi(a)$. 
\end{lemma}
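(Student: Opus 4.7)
The plan is to verify that $\phi$ carries $T^\tau(G,f)$ into $T^\tau(H,h)$ pointwise, after which the restriction $\tilde\phi$ is automatic and the diagram commutes by definition. The essential content is that function-preservation is enough to transport the ``has an up-path and a down-path of height $\tau$'' property along $\phi$.

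First, I would fix an arbitrary $x \in T^\tau(G,f)$ and invoke \cref{defn:truncation} to produce an up-path $\pi_+ \from [0,\tau] \to G$ and a down-path $\pi_- \from [0,\tau] \to G$, each starting at $x$, with $f(\pi_+(s)) = f(x)+s$ and $f(\pi_-(s)) = f(x)-s$. Post-composing with $\phi$ gives continuous paths $\phi \after \pi_\pm \from [0,\tau] \to H$ starting at $\phi(x)$. Because $\phi$ is function-preserving, $h(\phi(\pi_+(s))) = f(\pi_+(s)) = h(\phi(x))+s$ and similarly for $\pi_-$, so $\phi \after \pi_+$ is an up-path and $\phi \after \pi_-$ is a down-path of $\phi(x)$ of height $\tau$ in $(H,h)$. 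Hence $\phi(x) \in T^\tau(H,h)$, so the set-theoretic restriction $\tilde\phi \from T^\tau(G,f) \to T^\tau(H,h); \, a \mapsto \phi(a)$ is well defined.

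Commutativity of the square is then immediate from the formula $\tilde\phi(a) = \phi(a)$: chasing $a \in T^\tau(G,f)$ both ways yields $\phi(a) \in (H,h)$, with the right-hand inclusion $\nu$ embedding $\tilde\phi(a)$ back into $(H,h)$. Continuity of $\tilde\phi$ follows because $T^\tau(G,f)$ carries the subspace topology inherited from $(G,f)$ and $T^\tau(H,h)$ likewise from $(H,h)$, so $\tilde\phi$ is the restriction--corestriction of a continuous map to a subspace and to a subspace containing its image. Function-preservation of $\tilde\phi$ is inherited directly from $\phi$.

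There is really no main obstacle here; the only subtlety worth flagging is making sure the argument handles the empty-graph case cleanly. If $T^\tau(G,f)$ is empty the statement is vacuous, so the lemma's hypothesis that $T^\tau(G,f)$ is nonempty is convenient but not strictly necessary. I would mention this briefly rather than spend time on it.
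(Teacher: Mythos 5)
Your proof is correct and follows essentially the same approach as the paper's: both show that function-preservation transports height-$\tau$ up- and down-paths along $\phi$, so $\phi$ maps $T^\tau(G,f)$ into $T^\tau(H,h)$. The only cosmetic difference is that the paper packages the up- and down-paths into a single path $\gamma\from[-\tau,\tau]\to G$ while you keep them as two separate paths $\pi_\pm$.
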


\begin{proof}
Assume that $x \in T^\tau(G,f)$, so there is an up and down path in $(G,f)$. 
We can parameterize this by $\gamma:[-\tau,\tau] \to G$ so that $\gamma(0) = x$, and $f(\gamma(t)) = f(x) + t$. 
Then $\tilde \gamma:= \phi \gamma$ is a path in $(H,h)$. 
Further, because $\phi$ is function preserving, 
$g(\tilde \gamma(t)) = f\gamma(t) = f(x) + t$.
So $\tilde \gamma$ is  an up-and down path in $(H,h)$ of height $\tau$ with $\tilde \gamma(0) = \phi(x)$. 
Thus $\phi(x) \in T^\tau(H,h)$. 
\end{proof}

Now, we can recast this structure in the following lemma. 

    \begin{proposition}
    \label{lem:endofunctorTrunc}
        For any fixed $\tau\geq 0$, $T^\tau \in \End(\Reeb)$.
    \end{proposition}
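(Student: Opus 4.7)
The plan is to verify the four functoriality axioms in turn, using \cref{prop:mapRestrictsTruncation} as the key technical input for the action on morphisms. I would begin by recording the action on objects: for any Reeb graph $(G,f)$, \cref{defn:truncation} defines $T^\tau(G,f)$ as a subgraph of $(G,f)$ equipped with the restriction of $f$, which is again a Reeb graph (possibly empty), so $T^\tau$ sends $\mathrm{Ob}(\Reeb)$ to $\mathrm{Ob}(\Reeb)$.

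Next, for a morphism $\phi\from (G,f) \to (H,h)$ in $\Reeb$, I define $T^\tau[\phi] := \tilde\phi \from T^\tau(G,f) \to T^\tau(H,h)$, where $\tilde\phi$ is the restriction of $\phi$ produced by \cref{prop:mapRestrictsTruncation}. In the edge case that $T^\tau(G,f)$ is empty, $T^\tau[\phi]$ is the unique (empty) function-preserving map into $T^\tau(H,h)$. Continuity and the function-preserving property of $\tilde\phi$ are inherited from $\phi$, so $T^\tau[\phi]$ is a morphism of $\Reeb$.

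It then remains to check identities and composition. Preservation of identities is immediate, since the restriction of $\Id_{(G,f)}$ to $T^\tau(G,f)$ is $\Id_{T^\tau(G,f)}$. For composition, given $\psi\from(G,f)\to(H,h)$ and $\phi\from(H,h)\to(K,k)$, the diagram
\begin{equation*}
\begin{tikzcd}
T^\tau(G,f) \ar[r, "T^\tau[\psi]"] \ar[d, hook, "\nu"'] & T^\tau(H,h) \ar[r, "T^\tau[\phi]"] \ar[d, hook, "\nu"] & T^\tau(K,k) \ar[d, hook, "\nu"] \\
(G,f) \ar[r, "\psi"'] & (H,h) \ar[r, "\phi"'] & (K,k)
\end{tikzcd}
\end{equation*}
commutes by two applications of \cref{prop:mapRestrictsTruncation}. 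Since restrictions of equal maps to the same subspace agree, $T^\tau[\phi\circ\psi]=T^\tau[\phi]\circ T^\tau[\psi]$.

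There is no substantive obstacle here: the content has already been absorbed into \cref{prop:mapRestrictsTruncation}, and the remaining steps are routine bookkeeping. The only subtlety worth calling attention to is the empty-graph case, where one must note that restriction to an empty domain still yields a legitimate morphism in $\Reeb$, so that $T^\tau$ is defined uniformly on all of $\Reeb$.
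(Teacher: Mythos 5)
Your proof is correct and takes essentially the same route as the paper's: both rely on \cref{prop:mapRestrictsTruncation} to make $T^\tau[\phi]$ well defined as a restriction, then observe that identities and composition follow automatically because restriction is compatible with both. Your explicit handling of the empty-graph case is a welcome addition, since \cref{prop:mapRestrictsTruncation} as stated carries a nonemptiness hypothesis that the paper's proof silently skips over.
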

    \begin{proof}
    Because $T^\tau[\phi]$ is well defined by \cref{prop:mapRestrictsTruncation}, we need only check that to check that $T^\tau[\Id] = \Id$ and that $T^\tau[\phi \circ \psi] = T^\tau[\phi] \circ T^\tau[\psi] $.
    However, these are also immediate from the definition by restriction, so we are done.

    \end{proof}

    Now, we can expand our view of $S_\e^\tau$ to be a functor since it is the composition of two functors $T^\tau S_\e$.
    In particular, we have the following definition.

    \begin{definition}
    For a morphism $\phi\from (G,f) \to (H,h)$, define $S_\e^\tau[\phi]\from  S_\e^\tau(G,f) \to S_\e^\tau(H,h)$ to be the restriction of $S_\e[\phi]\from S_\e(G,f) \to S_\e(H,h)$
    (given by \cref{eq:S_e_appliedtomaps} restricted to $\nu(S_\e^\tau[\phi])$).
    \end{definition}
    This definition is well defined since, as with $T^\tau[\phi]$, the image of a monotone path under $\phi$ is a monotone path, and thus $S_\e^\tau[\phi](S_\e^\tau(G,f)) \subseteq S_\e^\tau(H,h)$.
    With this definition for morphisms, the following lemma is immediate.

    \begin{lemma}
    \label{lem:endofunctor}
    For any fixed $\e$ and $\tau$ for which $0 \leq \tau \leq  2\e$, $S_\e^{\tau} \in \End(\Reeb)$.
    \end{lemma}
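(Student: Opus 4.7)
The plan is to observe that $S_\e^\tau$ is literally the composition $T^\tau \circ S_\e$ of two endofunctors, and to note that composition of endofunctors is an endofunctor. First, I would recall that $S_\e \in \End(\Reeb)$ was established in \cite{deSilva2016} and summarized in \cref{eq:S_e_appliedtomaps}, and that $T^\tau \in \End(\Reeb)$ is exactly the content of \cref{lem:endofunctorTrunc}. Therefore $S_\e^\tau = T^\tau \circ S_\e$ is automatically an endofunctor, and on morphisms it agrees with $(T^\tau \circ S_\e)[\phi] = T^\tau[S_\e[\phi]]$.

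The only thing one really needs to verify is that this functorial composition coincides with the definition of $S_\e^\tau[\phi]$ that was given just above the lemma statement (namely, the restriction of $S_\e[\phi]$ to $\nu(S_\e^\tau(G,f))$). By \cref{prop:mapRestrictsTruncation}, the truncation functor $T^\tau$ acts on a morphism $\psi \from (G',f') \to (H',h')$ precisely by restricting it to $T^\tau(G',f')$, with codomain $T^\tau(H',h')$. Taking $\psi = S_\e[\phi]$ gives that $T^\tau[S_\e[\phi]]$ is the restriction of $S_\e[\phi]$ to $\nu(T^\tau S_\e(G,f)) = \nu(S_\e^\tau(G,f))$, which is exactly the definition placed above the lemma.

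The well-definedness check needed for \cref{prop:mapRestrictsTruncation} to apply here — namely that $S_\e[\phi]$ sends $\nu(S_\e^\tau(G,f))$ into $\nu(S_\e^\tau(H,h))$ — follows from the fact that $S_\e[\phi]$ is function-preserving, so it sends a monotone path of height $\tau$ in $S_\e(G,f)$ to a monotone path of the same height in $S_\e(H,h)$; hence points with both up- and down-paths of height $\tau$ are carried to points with the same property. Once this is in hand, the identity and composition axioms $S_\e^\tau[\Id] = \Id$ and $S_\e^\tau[\phi \circ \psi] = S_\e^\tau[\phi] \circ S_\e^\tau[\psi]$ are inherited directly from the analogous axioms for $S_\e$ and $T^\tau$, since restriction commutes with identities and with composition.

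There is essentially no obstacle here: the work was done in \cref{lem:endofunctorTrunc} and in the recalled functoriality of $S_\e$, and the statement is the immediate consequence that the composite is again an endofunctor. The constraint $0 \leq \tau \leq 2\e$ plays no role in this particular lemma — it is carried along only because $S_\e^\tau$ is the operation of interest in that parameter range — so the proof does not need to invoke any of the geometric properties (connectivity, nonemptiness, safeness) established in \cref{sec:properties_of_truncation}.
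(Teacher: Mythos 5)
Your proof is correct and takes the same route as the paper: the paper itself introduces $S_\e^\tau[\phi]$ as the restriction of $S_\e[\phi]$, explicitly remarks that $S_\e^\tau$ is the composite $T^\tau S_\e$ of the two endofunctors, and then declares the lemma immediate. Your additional check that the restriction definition agrees with the composite $T^\tau[S_\e[\phi]]$ (via \cref{prop:mapRestrictsTruncation}) and your observation that the hypothesis $\tau\le 2\e$ is not actually used are both accurate.
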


    Our final lemma shows how $S_\e^\tau$ interacts with the inclusion maps $\nu$.

    \begin{lemma}
    \label{lem:Square_S_nu}
    For $0 \leq \e$ and $0 \leq \tau' \leq \tau \leq 2\e$, the square 
    \begin{equation*}
        \begin{tikzcd}
            (G,f) \ar[d,"\phi"]
            & S_\e^\tau(G,f)
                \ar[d, "{S_\e^{\tau}[\phi]}"']
                \ar[r, hook, "\nu"]
            & S_{\e}^{\tau'}(G,f)
                \ar[d, "{S_\e^{\tau'}[\phi]}"]
            \\
           (H,h)
            & S_\e^{\tau}(H,h)
                \ar[r, hook, "\nu"]
            & S_{\e}^{\tau'}(H,h)
        \end{tikzcd}
    \end{equation*}
    commutes.
    \end{lemma}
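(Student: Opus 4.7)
The plan is to unwind both sides of the square by expressing the horizontal and vertical maps in terms of their definitions as restrictions, and then observe that both compositions coincide with a single restriction of $S_\e[\phi]$.

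First, I recall that by \cref{lem:endofunctor} and its preceding definition, both $S_\e^{\tau}[\phi]$ and $S_\e^{\tau'}[\phi]$ are defined as restrictions of the same underlying map $S_\e[\phi]\from S_\e(G,f)\to S_\e(H,h)$, to the appropriate truncated subspaces. Likewise, the horizontal maps $\nu$ are inclusions of subspaces, using the observation that a point possessing up- and down-paths of height $\tau$ in $S_\e(G,f)$ automatically possesses up- and down-paths of the shorter height $\tau'\leq\tau$. So all four maps in the square are restrictions of, respectively, either $S_\e[\phi]$ or an identity map on the ambient $S_\e(\cdot,\cdot)$.

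Next, I would chase a point $x\in S_\e^\tau(G,f)$ around both paths. Along the top-right route, $\nu(x)=x$ as an element of $S_\e^{\tau'}(G,f)$, and then $S_\e^{\tau'}[\phi](x)=S_\e[\phi](x)$, viewed inside $S_\e^{\tau'}(H,h)$. Along the left-bottom route, $S_\e^{\tau}[\phi](x)=S_\e[\phi](x)$, viewed inside $S_\e^{\tau}(H,h)$, and then $\nu$ reinterprets this same point as an element of $S_\e^{\tau'}(H,h)$. Both compositions therefore yield the same element $S_\e[\phi](x)\in S_\e^{\tau'}(H,h)$, so the square commutes.

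The one subtlety worth explicitly noting is well-definedness: we need $S_\e[\phi](x)$ to actually lie in both $S_\e^{\tau}(H,h)$ and $S_\e^{\tau'}(H,h)$ so that the restrictions make sense. This is exactly the property established in the proof of \cref{prop:mapRestrictsTruncation}, namely that a function-preserving map sends up-paths and down-paths of height $\tau$ to up-paths and down-paths of the same height; combined with the fact that $S_\e[\phi]$ is a function-preserving map between $S_\e(G,f)$ and $S_\e(H,h)$, this guarantees that both vertical arrows land in the claimed subspaces. Since there is no nontrivial computation past this book-keeping, I do not anticipate a main obstacle; the result is essentially formal once one unpacks that every map in sight is a restriction of a common ambient map.
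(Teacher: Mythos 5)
Your argument is correct and matches the paper's proof in substance: both approaches observe that all four arrows are restrictions of either $S_\e[\phi]$ or the identity on the ambient $S_\e(\cdot)$, with the paper phrasing the point-chase as a commuting triangular prism over $S_\e(G,f)\to S_\e(H,h)$. Your explicit note on well-definedness (that $S_\e[\phi](x)$ lands in the truncated subspaces, via the argument of \cref{prop:mapRestrictsTruncation}) is the same bookkeeping the paper relies on implicitly.
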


    \begin{proof}
    The lemma follows from the fact that both $S_\e^\tau[\phi]$ and $S_{\e}^{\tau'}[\phi]$ are given by restricting $S_\e[\phi]$ to the relevant subset of $S_\e(G,f)$.
    That is, the remaining faces of the triangular prism
    \begin{equation*}
        \begin{tikzcd}
             S_\e^\tau(G,f)
                \ar[dd, "{S_\e^{\tau}[\phi]}"']
                \ar[dr, hook, "\nu"]
                \ar[rr, hook, "\nu"]
            &&
            S_\e(G,f)
                \ar[dd, "{S_\e[\phi]}"]
            \\
            & S_{\e}^{\tau'}(G,f)
                \ar[ur, hook, "\nu"]
            \\
            S_\e^{\tau}(H,h)
                \ar[dr, hook, "\nu"]
                \ar[rr, hook, "\nu", near start]
            &&
            S_\e(H,h)
            \\
            & S_{\e}^{\tau'}(H,h)
                \ar[ur, hook, "\nu"]
                \ar[from = uu, "{S_\e^{\tau'}[\phi]}", crossing over, very near start]
        \end{tikzcd}
    \end{equation*}
    commute, hence the left square commutes.
    \end{proof}

\section{Technical proofs for morphisms}
\label{sec:IsomorphismIssues}

In this section we fully unpack the various isomorphisms which were glossed over in the main body.  We begin with the related work, and in particular \cref{remark:abusingNotation}, which involves prior work.  We then follow these issues into our own proof that smoothing and truncation commute for $\tau$-safe graphs, which in turn implies that truncated smoothing can be applied to (as a functor) the morphims we presented in \cref{tab:Maps}.

\subsection{Isomorphism for smoothing}
\label{ssec:apdx:isomorphism}
The first technicality we note is the isomorphism issue noted in \cref{remark:abusingNotation}. 
The problem is that $S_\e S_\delta(G,f)$ is not exactly the same thing as $S_{\e+\delta}(G,f)$, although they are isomorphic. 
Basically, points in $S_{\e+\delta}(G,f)$ are equivalence classes of points from $G \times [-\e-\delta,\e+\delta]$, while points of $S_\e S_\delta(G,f)$ are equivalence classes of points in $S_\delta(G,f) \times [-\e,\e]$.

As noted in \cite[Obs.~4.30]{deSilva2016}, there is a natural isomorphism $S_{\e_2} S_{\e_1}(G,f) \to S_{\e_1+\e_2}(G,f)$ given as follows. 
Setting $\e = \e_1+\e_2$, a point in $S_{\e_2} S_{\e_1}(G,f)$ is represented by $q_2(q_1(x,\e_1),\e_2)$ and is sent  to $q_3(x,\e) \in S_{\e}(G,f)$ 
  making the diagram 
    \begin{equation*}
     \begin{tikzcd}[column sep= tiny]
    & (G \times [-\e_1,\e_1], f + \Id) 
        \ar[dr, "q_1"]
    &&
    (S_{\e_1}(G,f) \times [-\e_2,\e_2], f_{\e_1} + \Id) 
        \ar[dr, "q_2"]
        \\
    (G,f) 
        \ar[ur,hook,"{(\Id,0)}"] 
        \ar[drr,hook,"{(\Id,0)}"'] 
        \ar[drrrr, "\eta" description]
        \ar[rr, "\eta"]
    & &
    S_{\e_1}(G,f)
        \ar[ur,hook,"{(\Id,0)}"] 
        \ar[rr, "{S_{\e_2}[\eta]}"] 
     & &
    S_{\e_2} S_{\e_1}(G,f)
    \ar[d, "\cong"]
    \\
    &&
    (G \times [-\e,\e], f_{\e}) 
        \ar[rr, "q_3"']
    &&
    S_{\e_1+\e_2}(G,f)
    \end{tikzcd}
  \end{equation*}
  commute.

If we use the notation $S_\e(G,f)=: (G_\e,f_\e)$, constructing \cref{eq:S_e_appliedtomaps} with $\phi = \eta: (G,f) \to S_{\delta}(G,f)$
 gives the following commutative diagram 
     \begin{equation*}
  \begin{tikzcd}
    & (G \times [-\e,\e], f+\Id) 
        \ar[dr, "q"] 
        \ar[dd, violet,"{(\eta,\mathrm{Id})}", near start]
    \\
    (G,f) 
        \ar[ur,hook,"{(\mathrm{Id},0)}"] 
        \ar[rr, "\eta", near start, crossing over] 
        \ar[dd,violet, "\eta" ]
    & &
    S_\e(G,f) \ar[dd, violet, "{S_\e[\eta]}"]
    \\
    & 
    (G_\delta \times [-\e,\e], f_\delta + \Id) 
        \ar[dr, "q"] 
    \\
    S_\delta(G,f)
        \ar[ur,hook,"{(\mathrm{Id}_G,0)}"] 
        \ar[rr, "\eta"] 
        & &
    S_{\e}(S_\delta(G,f)).
  \end{tikzcd}
  \end{equation*}
  So, we abuse notation and write $\eta: S_{\e_1}(G,f) \to S_{\e_1+\e_2}(G,f)$ for the composition of maps 
  \begin{equation*}
      \begin{tikzcd}
        S_{\e_1}(G,f) \ar[d, "{S_{\e_2}[\eta]}"']
          \ar[dr, "\eta"]\\
        S_{\e_2}(S_{\e_1}(G,f)) \ar[r,"\cong"] & S_{\e_1+\e_2}(G,f)
      \end{tikzcd}
  \end{equation*}
  giving a map $\eta:S_\e(G,f) \to S_{\e'}(G,f)$ for every $\e \leq \e'$.

\subsection{Safe truncation and smoothing commute}
\label{ssec:apdx:commutes}

    We next construct a map $\psi\from S_\e T^\tau(G,f)\to T^\tau S_\e(G,f)$ in \cref{lem:conflow0} which is continuous and function-preserving; we then show that if $(G,f)$ is $\tau$-safe, $\psi$ is an isomorphism, proving that truncation and smoothing commute (\cref{prop:conflow01}) and smoothing and truncate combine additively (\cref{thm:additive}).

    Let~$\sim$ and $\sim_T$ respectively be the relevant equivalence relations on $(G,f)\times[-\e,\e]$ and $T^\tau(G,f)\times[-\e,\e]$, and let $q$ and $q_T$ be their quotient maps.
    
    \begin{lemma}\label{lem:conflow0}
        Let $\nu$ and $\nu_S$ be the inclusion maps of the relevant truncations.
        For any $0\leq\tau$ and $0\leq\e$, there exists a unique map $\psi\from S_\e T^\tau(G,f)\to T^\tau S_\e(G,f)$ that makes the diagram
        \begin{equation*}
        \begin{tikzcd}
            T^\tau(G,f)\times[-\e,\e]\ar[d,hook',swap,"{(\nu,\id)}"]\ar[r,"q_T",two heads] & S_\e T^\tau(G,f)\ar[r,"\psi"] & T^\tau S_\e(G,f)\ar[d,hook',swap,"\nu_S"]\\
            (G,f)\times[-\e,\e]\ar[rr,"q",two heads] && S_\e(G,f)
        \end{tikzcd}
        \end{equation*}
        commute, and $\psi$ is continuous and function-preserving.
    \end{lemma}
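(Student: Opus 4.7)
The plan is to build $\psi$ by factoring the composition $q \circ (\nu, \id)$ through the quotient map $q_T$ using the universal property of quotients, and then show the resulting map takes values in the subspace $T^\tau S_\e(G,f) \subseteq S_\e(G,f)$. Uniqueness and continuity will then fall out from standard quotient considerations.

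First I would observe that $q \circ (\nu, \id)\from T^\tau(G,f) \times [-\e,\e] \to S_\e(G,f)$ is continuous and function-preserving since each of its factors is. To factor this through $q_T$, I need to show that $\sim_T$ refines the pullback of $\sim$ along $(\nu, \id)$: if $(x,t), (x',t') \in T^\tau(G,f) \times [-\e,\e]$ satisfy $(x,t) \sim_T (x',t')$, then $(\nu(x), t) \sim (\nu(x'), t')$ in $(G,f) \times [-\e,\e]$. This is immediate: a path inside a level set of $f+\id$ restricted to $T^\tau(G,f) \times [-\e,\e]$ remains a path inside the corresponding level set in $(G,f) \times [-\e,\e]$, since $T^\tau(G,f)$ carries the subspace topology. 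The universal property of the quotient $q_T$ then yields a unique continuous map $\tilde\psi\from S_\e T^\tau(G,f) \to S_\e(G,f)$ with $\tilde\psi \circ q_T = q \circ (\nu, \id)$.

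Next I would show that the image of $\tilde\psi$ lies in $T^\tau S_\e(G,f)$. Pick $y \in S_\e T^\tau(G,f)$ and write $y = q_T(x,t)$ for some $x \in T^\tau(G,f)$ and $t \in [-\e,\e]$. By \cref{defn:truncation}, $x$ admits up- and down-paths $\gamma_+, \gamma_-\from[0,\tau]\to G$ of height $\tau$ in $(G,f)$. The composed paths $s \mapsto q(\gamma_\pm(s), t)$ in $S_\e(G,f)$ satisfy $f_\e(q(\gamma_\pm(s),t)) = f(x) + t \pm s$, so they are up- and down-paths of height $\tau$ emanating from $\tilde\psi(y) = q(x,t)$. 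Hence $\tilde\psi(y) \in T^\tau S_\e(G,f)$, and corestriction produces the desired continuous $\psi$ with $\nu_S \circ \psi = \tilde\psi$.

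Function-preservation of $\psi$ is then inherited from $\tilde\psi$ since $\nu_S$ is an inclusion into a subspace equipped with the restricted height function. For uniqueness, any other map $\psi'$ satisfying the commutativity gives $\nu_S \circ \psi' \circ q_T = q \circ (\nu, \id) = \nu_S \circ \psi \circ q_T$; injectivity of $\nu_S$ combined with surjectivity of $q_T$ then forces $\psi' = \psi$. The only nontrivial step is the refinement of equivalence relations in the first paragraph, but that reduces to the trivial observation that a continuous path in a subspace is a continuous path in the ambient space at the same function value.
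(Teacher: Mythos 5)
Your proof is correct and takes essentially the same approach as the paper: both define $\psi$ by $q_T(x,\lambda)\mapsto q(x,\lambda)$, check well-definedness by observing that $\sim_T$-classes embed into $\sim$-classes (a level-set path in the subspace is a level-set path in the ambient space), show the image lies in $T^\tau S_\e(G,f)$ by pushing up- and down-paths through $q$, and deduce uniqueness from surjectivity of $q_T$ and injectivity of $\nu_S$. Your framing via the universal property of the quotient is a slightly more formal packaging of the same argument, and is a clean way to get continuity for free.
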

    \begin{proof}
        Since $q_T$ is surjective and $\nu_S$ is injective, the map $\psi$, if it exists, is uniquely determined by the above diagram.
        Any function $\psi$ making this diagram commute is automatically continuous and function-preserving.
        
        Let $\psi(q_T(x,\lambda)) := q(x,\lambda)$.
        We first show that the image of $\psi$ lies in $T^\tau S_\e(G,f)$.
        By surjectivity of $q_T$, it suffices to show for any $x\in T^\tau(G,f)$ and $\lambda\in[-\e,\e]$, that $q(x,\lambda) \in T^\tau S_\e(G,f)$.
        We have some height $\tau$ up-path $\pi$ in $G$ starting at $x$.
        Then $s\mapsto q(\pi(s),\lambda)$ is a height $\tau$ up-path of $q(x,\lambda)$ in $S_\e(G,f)$, and $q(x,\lambda)$ similarly has a height $\tau$ down-path, so $q(x,\lambda)\in T^\tau S_\e(G,f)$.
        It remains to show that $\psi$ is well-defined by showing that whenever $q_T(x,\lambda)=q_T(x',\lambda')$, we have $q(x,\lambda)=q(x',\lambda')$.
        This follows from the inclusion of any equivalence class of $\sim_T$ into one of $\sim$.
        Explicitly, any path with constant function value in $T^\tau(G) \times [-\e,\e]$ is contained in $G \times [-\e,\e]$, where it also has constant function value.
    \end{proof}

    We will show that $S_\e^\tau[\eta]$ is interchangeable with $\eta$ when after accounting for the isomorphism.
    
        \begin{theorem}
        \label{thm:Ttaunu}
            If $0\leq\tau$ and $0\leq\e$, and $\eta$ and $\eta_T$ are the natural maps into the smoothings in the following diagram, the diagram commutes.
            \begin{equation*}
            \begin{tikzcd}
                T^\tau(G,f) \ar[r,"\eta_T",swap]\ar[rr,"{T^\tau[\eta]}",bend left=10]
                & S_\e T^\tau(G,f) \ar[r,"{\psi}",swap] & T^\tau S_\e(G,f)
                \\
                (G,f) \ar[rr,"{\eta}"] & & S_\e(G,f).
            \end{tikzcd}
            \end{equation*}
        \end{theorem}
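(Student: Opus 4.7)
The plan is a short diagram chase that exploits injectivity of $\nu_S$ together with the defining property of $\psi$. By \cref{lem:conflow0}, $\psi$ is characterized by the identity $\nu_S\circ\psi\circ q_T = q\circ(\nu,\id)$, where $q$ and $q_T$ are the Reeb quotients on the thickenings $(G,f)\times[-\e,\e]$ and $T^\tau(G,f)\times[-\e,\e]$, and $(\nu,\id)$ is the levelwise inclusion. Since $\eta_T$ factors through the thickening as $\eta_T(x) = q_T(x,0)$ (the truncated analogue of the triangle \cref{eq:smoothingTriangle}), I would first compute, for $x\in T^\tau(G,f)$,
\[
\nu_S\bigl(\psi(\eta_T(x))\bigr) \;=\; \nu_S\bigl(\psi(q_T(x,0))\bigr) \;=\; q(\nu(x),0) \;=\; \eta(\nu(x)).
\]
This establishes commutativity of the outer rectangle formed by the implicit inclusions $\nu\from T^\tau(G,f)\hookrightarrow(G,f)$ and $\nu_S\from T^\tau S_\e(G,f)\hookrightarrow S_\e(G,f)$.

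Next I would invoke the categorical view of truncation: by \cref{prop:mapRestrictsTruncation} applied to the morphism $\eta\from(G,f)\to S_\e(G,f)$, the map $T^\tau[\eta]$ is by definition the restriction of $\eta$ to $T^\tau(G,f)$, which is exactly the equation $\nu_S\circ T^\tau[\eta] = \eta\circ\nu$. Combining this with the previous computation yields $\nu_S\circ T^\tau[\eta] = \nu_S\circ\psi\circ\eta_T$, and because $\nu_S$ is an inclusion (hence injective), we may cancel it to conclude $T^\tau[\eta] = \psi\circ\eta_T$, which is precisely the top triangle of the diagram.

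There is no serious obstacle here beyond bookkeeping: the only thing to be careful about is that $T^\tau[\eta]$ is well defined as a map into $T^\tau S_\e(G,f)$, which is exactly the content of \cref{prop:mapRestrictsTruncation} (using that $\eta$ is function-preserving so up-paths and down-paths of height $\tau$ are sent to up-paths and down-paths of height $\tau$). Once the defining properties of $\psi$ and of $T^\tau[\eta]$ are both pushed through $\nu_S$, the two resulting expressions coincide by construction, and injectivity of $\nu_S$ finishes the argument.
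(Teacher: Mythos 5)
Your proof is correct and takes essentially the same approach as the paper: the paper's argument also consists of a commutative diagram built from the defining property of $\psi$ in \cref{lem:conflow0} together with the fact that $T^\tau[\eta]$ is the restriction of $\eta$, and your element-wise chase through $q_T$, $(\nu,\id)$, and $q$, followed by cancellation of the injective inclusion $\nu_S$, is just a more explicitly spelled-out version of the paper's "commutativity is evident from Lemma~\ref{lem:conflow0}."
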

        \begin{proof}
            Recall that $T[\eta]$ is simply the restriction of $\eta$ to the truncation.
            Commutativity of the diagram below is evident from Lemma~\ref{lem:conflow0}, and the theorem follows.
            \begin{equation*}
            \begin{tikzcd}
                T^\tau(G,f) \ar[rr,"{\eta_T\from x\mapsto q_T(x,0)}",swap]\ar[rrr,"{T^\tau[\eta]=T^\tau[x\mapsto q(x,0)]}",bend left=10]\ar[dr,"{(\id,0)}",hook',swap]
                \ar[dd,"\nu",hook',swap] 
                & &
                S_\e T^\tau(G,f) \ar[r,"{\psi=q\circ(\nu,\id)\circ q_T\inv}",swap]
                &[4em]
                T^\tau S_\e(G,f)
                \ar[dd,"\nu_S",hook',swap]
                \\
                &
                T^\tau(G,f)\times[-\e,\e] 
                \ar[ur,"q_T",swap,two heads]
                \\
                (G,f) \ar[rrr,"{\eta\from x\mapsto q(x,0)}",crossing over,swap]\ar[dr,"{(\id,0)}",hook',swap] & & &
                S_\e(G,f).\\
                &
                (G,f)\times[-\e,\e] \ar[urr,"q",swap,two heads]
                \ar[from = uu,"{(\nu,\id)}",hook',swap,near start, crossing over]
            \end{tikzcd}
            \end{equation*}
        \end{proof}

\conflow*

   We will break up the proof into several lemmas. 
    Assume that $(G,f)$ is $\tau$-safe.
    It is a standard result of point set topology that a continuous bijection from a compact space to a Hausdorff space is a homeomorphism \cite[Thm 26.6]{Munkres2000}.
    Because $\psi$ is continuous and function-preserving, it remains to show that it is a bijection.
    For injectivity, we use \cref{lem:safeSimplePath} of $\tau$-safe graphs that simple paths that start and end in $S_\e T^\tau(G,f)$ lie completely inside $S_\e T^\tau(G,f)$.
    \begin{lemma}
    \label{lem:psiInjective}
        If $(G,f)$ is $\tau$-safe, then $\psi$ is injective.
    \end{lemma}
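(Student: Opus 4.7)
The plan is to unpack what it means for $\psi$ to identify two points of $S_\e T^\tau(G,f)$, reduce the question to a statement about simple paths in $G$, and then apply \cref{lem:safeSimplePath} to stay inside $T^\tau(G,f)$. Concretely, suppose $\psi(q_T(x,\lambda)) = \psi(q_T(x',\lambda'))$; by construction of $\psi$ in \cref{lem:conflow0}, this says $q(x,\lambda) = q(x',\lambda')$ in $S_\e(G,f)$, so there is a continuous path in $G \times [-\e,\e]$ from $(x,\lambda)$ to $(x',\lambda')$ with $f+\Id$ constant, equal to $c := f(x)+\lambda = f(x')+\lambda'$.

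The key observation is that the level set $(f+\Id)^{-1}(c) \subseteq G \times [-\e,\e]$ is homeomorphic, via projection, to the subgraph $L' := \{y \in G : f(y) \in [c-\e,c+\e]\}$, with inverse $y \mapsto (y, c - f(y))$. So $x$ and $x'$ lie in the same path component of $L'$. I would then choose a simple path $\gamma$ in $L'$ from $x$ to $x'$; such a $\gamma$ is automatically a simple path in $G$ starting and ending in $T^\tau(G,f)$.

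Now I invoke \cref{lem:safeSimplePath}: since $(G,f)$ is $\tau$-safe, it is in particular $\tau$-tailed, so $\gamma$ lies entirely in $T^\tau(G,f)$. Hence $\gamma$ is in fact a path inside $T^\tau(G,f) \cap L'$, which lifts back via $t \mapsto (\gamma(t), c - f(\gamma(t)))$ to a continuous path in $T^\tau(G,f) \times [-\e,\e]$ from $(x,\lambda)$ to $(x',\lambda')$ of constant $f+\Id$ value. Therefore $(x,\lambda) \sim_T (x',\lambda')$, giving $q_T(x,\lambda) = q_T(x',\lambda')$, and $\psi$ is injective.

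The main obstacle is step three: the simple path in the ambient graph $G$ connecting $x$ and $x'$ might a priori leave $T^\tau(G,f)$, which would obstruct the lift. The $\tau$-safe hypothesis is exactly what rules this out, via \cref{lem:safeSimplePath}; in fact the proof essentially reduces to this lemma plus the homeomorphism between the level set in the thickening and the band $L'$ in $G$. Everything else is routine bookkeeping with the quotient maps $q$ and $q_T$.
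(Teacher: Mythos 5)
Your proposal is correct and takes essentially the same approach as the paper: both reduce to producing a simple path in $G$ from $x$ to $x'$ and then invoking \cref{lem:safeSimplePath} to keep it inside $T^\tau(G,f)$ before lifting back to the thickening. The only cosmetic difference is that the paper extracts this simple path by taking a simple function-preserving path in $G\times[-\e,\e]$ and checking that its first coordinate is simple, whereas you obtain it directly from the homeomorphism between the level set $(f+\Id)^{-1}(c)$ and the band $L'$ in $G$.
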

    \begin{proof}
        Suppose it is not, then there exist points $y\neq y'\in S_\e T^\tau(G,f)$ with $\psi(y) = \psi(y')$.
        Without loss of generality, let $(x,\lambda),(x',\lambda')\in T^\tau(G,f)\times[-\e,\e]$ with $y=:q_T(x,\lambda)$ and $y'=:q_T(x',\lambda')$.
        As $y\neq y'$ and $\psi(y)=\psi(y')$, we have
        \begin{align*}
            (x,\lambda)&\not\sim_T (x',\lambda'), and\\
            (x,\lambda)&    ~\sim~ (x',\lambda').
        \end{align*}

        There is a simple function preserving path $\pi\from (x,\lambda) \rightsquigarrow (x',\lambda')$ in $(G,f) \times [-\e,\e]$.
        Let $(\gamma,h):=\pi$.
        We show that its first component $\gamma$ is also simple.
        Otherwise $\gamma(t) = \gamma(t')$ for some $t\neq t'$, and since $\pi$ is function preserving, $f(\pi(t)) = f(\pi(t'))$, then 
        \begin{equation*}
            f(\gamma(t)) + h(t) = f(\gamma(t')) + h(t') = f(\gamma(t)) + h(t'),
        \end{equation*}
        so $h(t) = h(t')$ and hence $\pi(t) = \pi(t')$ contradicting that $\pi$ is simple.
        So $\gamma$ is a simple path.

        Moreover, $\gamma$ starts and ends in $T^\tau(G,f)$, so because $(G,f)$ is $\tau$-safe, $\gamma$ lies completely inside $T^\tau(G,f)$ by \cref{lem:safeSimplePath}.
        So $\pi$ is a function preserving path in $T^\tau(G,f) \times [-\e,\e]$, and therefore $(x,\lambda)\sim_T(x',\lambda')$, which is a contradiction, so $\psi$ is injective.
    \end{proof}

    For surjectivity, we use the fact that truncated points lie in an up-tree or down-tree of height at most $\tau$.
    \begin{lemma}
    \label{lem:psiSurjective}
        If $(G,f)$ is $\tau$-safe, then $\psi$ is surjective.
    \end{lemma}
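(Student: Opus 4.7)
The plan is to prove surjectivity by a contradiction argument that analyzes the set of preimages of $z \in T^\tau S_\e(G,f)$ under $q$. The key structural observation is that $q^{-1}(z) \subseteq G \times [-\e,\e]$ is in bijection with a single connected subset $K \subseteq G$---namely, the connected component of the ``slab'' $f^{-1}([f(z)-\e,\, f(z)+\e])$ containing any preimage of $z$---via $y \mapsto (y,\, f(z)-f(y))$. Hence finding a preimage inside $T^\tau(G,f) \times [-\e,\e]$ reduces to showing $K \cap T^\tau(G,f) \neq \emptyset$.

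First, suppose toward contradiction that $K \subseteq U \cup D$, where $U = U_\tau(G,f)$ and $D = D_\tau(G,f)$. Because $(G,f)$ is $\tau$-safe, \cref{lem:disconnectedUD} ensures $U$ and $D$ are separated, so the connected set $K$ lies entirely in one of them. By symmetry (swapping up- and down-roles) I may assume $K \subseteq U$; in particular $K \cap D = \emptyset$, so every $y \in K$ has a down-path of height $\tau$ in $(G,f)$ but no up-path of that height.

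Next I would exploit the fact that $z$ has both an up-path $\gamma_+$ and a down-path $\gamma_-$ of height $\tau$ in $S_\e(G,f)$. Let $y^{*} \in K$ minimize $f|_K$ (so the corresponding preimage $(y^{*},\, f(z)-f(y^{*}))$ has maximal $\lambda$-coordinate) and $y^{**} \in K$ maximize $f|_K$. A short case analysis---either $y^{*}$ lies on the lower face of the slab, or $y^{*}$ is a local minimum of $f$ in $G$, in which case any lift of $\gamma_-$ starting at the corresponding preimage must descend purely in the $\lambda$-coordinate---yields the inequality $f(y^{*}) \leq f(z)+\e-\tau$. Symmetrically, $f(y^{**}) \geq f(z)-\e+\tau$, so $\max_K f - \min_K f \geq 2(\tau-\e)$.

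When $\tau > 2\e$ this already contradicts $K \subseteq f^{-1}([f(z)-\e,\, f(z)+\e])$. For $\tau \leq 2\e$, I would invoke \cref{lem:forestUD} and \cref{lem:atMostOneRoot}---which together constrain each component of $U$ to have a single root in $T^\tau(G,f)$ and an up-tree structure of height strictly less than $\tau$---to extract a monotone sub-path inside $K$ ascending from $y^{*}$ toward $y^{**}$, thereby producing an up-path of height at least $\tau$ starting at a point of $U$, contradicting $K \subseteq U$. The main obstacle will be carrying out the combinatorial analysis of $U$-components and the monotone-subpath extraction uniformly across the regimes $\tau \leq \e$, $\e < \tau \leq 2\e$, and $\tau > 2\e$; a cleaner alternative may be to use the path-lifting description of $\psi$ afforded by \cref{lem:conflow0} directly, lifting $\gamma_-$ from a carefully chosen preimage of $z$ and tracking its projection to $G$ until it necessarily enters $T^\tau(G,f)$.
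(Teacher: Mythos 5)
Your setup is correct and matches the paper's opening steps: reducing surjectivity to the claim that the slab component $K = X_z$ meets $T^\tau(G,f)$, then using $\tau$-safety and \cref{lem:disconnectedUD} to conclude $K$ lies wholly in $U$ or wholly in $D$. After that, though, your argument diverges from the paper and has two genuine gaps.

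First, the slab-spread bounds $f(y^*) \le f(z)+\e-\tau$ and $f(y^{**}) \ge f(z)-\e+\tau$ are not established. The justification --- that when $y^*$ is a local minimum, ``any lift of $\gamma_-$ starting at the corresponding preimage must descend purely in the $\lambda$-coordinate'' --- does not hold: the $G$-component of a lift is free to move off $y^*$ toward higher function values, a lift's $G$-component need not stay in $K$ at all (the relevant slab shifts downward as $\gamma_-$ descends), and the existence of lifts along the Reeb quotient $q$ (which is not a covering map) is itself asserted rather than shown. You would need to replace this with a careful argument tracking $\lambda$ along a concrete lift.

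Second, and more decisively, even granting the bounds, the resulting inequality $\max_K f - \min_K f \ge 2(\tau-\e)$ only contradicts the slab constraint $\max_K f - \min_K f \le 2\e$ when $\tau > 2\e$. In the regime $\tau \le 2\e$ (which certainly occurs), $2(\tau-\e) < \tau$, so the fallback plan of extracting a height-$\tau$ monotone subpath inside $K$ to contradict $K \subseteq U$ cannot succeed from this data alone --- you simply do not have enough vertical spread. You flag this yourself, but the gap is load-bearing.

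The paper's proof uses a different mechanism to close the contradiction. After placing $K$ in a component $C$ of $U$ with unique root $r$ (via \cref{lem:forestUD} and \cref{lem:atMostOneRoot}), it (a) uses a down-path \emph{in $G$} inside the up-tree $C$ from a point of $K$ to $r$, tracking the $\lambda$-coordinate along this path to force $f(r) < f(z)-\e$; (b) observes that the terminal point $z'$ of the up-path $\gamma_+$ has $X_{z'} \cap U = \emptyset$ because $f(X_{z'}) \ge f(z)+\tau-\e > f(u)$ for every $u\in U$; and (c) uses connectedness of $\bigcup_t X_{\gamma_+(t)}$ together with the single-root property of $C$ to produce an intermediate point $y''$ on $\gamma_+$ with $r \in X_{y''}$ and $f(y'') \ge f(z)$, contradicting $f(y'') \le f(r)+\e < f(z)$. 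The connectivity step (c) is what your proposal is missing and what handles all regimes of $\tau$ relative to $\e$ uniformly. Your closing suggestion (lift $\gamma_-$ and track its $G$-projection into $T^\tau(G,f)$) is in the right spirit but points at the down-path, whereas the paper makes essential use of $\gamma_+$ for the connectivity argument and reserves the down-path role for the in-$G$ path inside the up-tree.
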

    \begin{proof}
        Recall that $\psi(q_T(x,\lambda)):=q(x,\lambda)$.
        Suppose for a contradiction that $\psi$ is not surjective.
        Then a point $y$ does not lie in the image of $\psi$, so $q\inv(y)$ does not intersect $T(G,f)\times[-\e,\e]$.
        Denote by $X_y$ the set of points $x$ with $(x,\lambda)\in q\inv(y)$ (for some $\lambda\in[-\e,\e]$).
        Then all points in $X_y$ are truncated.
        Because $(G,f)$ is $\tau$-safe, each connected component of $G\setminus T$ is a union of up-trees or a union of down-trees of height less than $\tau$.
        Since $X_y$ is connected, it lies completely inside one such component, and assume without loss of generality that it is a union of uptrees $U$.
        
        As $(G,f)$ is $\tau$-safe, the closure of $U$ contains exactly one point of $T$, call it $r$.
        We have $f(r)<f(u)\leq f(r)+\tau$  for any $u\in U$.
        Consider a point $x\in X_y$.
        Since $x\in U$, there is a down-path $\gamma\from x\rightsquigarrow r$.
        We have $q(x,f(y)-f(x))=y$ and as $f(\gamma(t))$ is nonincreasing, $q(\gamma(t),f(y)-f(\gamma(t)))=y$ as long as $f(\gamma(t))\geq f(y)-\e$.
        As $r\notin X_y$, we therefore have $f(r)<f(y)-\e$.

        As $y\in T^\tau S_\e(G,f)$, $y$ has an up-path of height $\tau$ in $S_\e(G,f)$ to a point $y'$.
        Then $f(y')=f(y)+\tau>f(r)+\tau+\e \ge f(u)+\e$ for any $u\in U$.
        For $x'\in X_{y'}$, we have $f(x')\geq f(y')-\e$, so $X_{y'}$ does not intersect $U$.
        Therefore, any path from $X_y$ to $X_{y'}$ passes through $r$.
        In particular, the up-path from $y$ to $y'$ contains a point $y''$ with $f(y)\leq f(y'')$ and $r\in X_{y''}$, but then $f(y)\leq f(y'')\leq f(r)+\e$, contradicting that $f(r)<f(y)-\e$.
        So $\psi$ is surjective.
    \end{proof}
    
    \begin{proof}[Proof of \cref{prop:conflow01}]
        We have constructed a continuous function-preserving map $\phi\from S_\e T^\tau(G,f)\to T^\tau S_\e(G,f)$.
        Moreover, we have shown in \cref{lem:psiInjective,lem:psiSurjective} that it is a bijection if $(G,f)$ is $\tau$-safe.
        
        A continuous bijection from a compact space to a Hausdorff space is a homeomorphism \cite[Thm 26.6]{Munkres2000}.
        A constructible space in the sense of \cite{deSilva2016} is compact and Hausdorff, and both Reeb graphs and the smoothing of a Reeb graph are constructible.
        Similarly, truncation preserves constructibility as it removes an open subset.
        Hence, the continuous bijection $\psi$ is a homeomoprhism.
        As it is also function preserving, we have that $S_\e T^\tau(G,f)\cong T^\tau S_\e(G,f)$ as Reeb graphs.
    \end{proof}

\subsection{Applying \texorpdfstring{$S_\e^{\tau}$}{Set} to morphisms}
\label{ssec:truncation_applied_to_maps}

We  will next use this map $\psi$ to show that truncated smoothing can be applied to morphisms from \cref{tab:Maps}.  
For example, $S^\tau_\e[\eta]$ is equivalent to a different version of $\eta$ (up to suppressing our isomorphism between $S_\e^\tau(S_{\e'}^{\tau'}(G,f))$ and $S_{\e+\e'}^{\tau+\tau'}(G,f)$), and similar statements hold for $\omega$ and $\rho$.
That is, this section is meant to justify the reuse of the symbol $\eta$ to represent the various maps $S_\e^\tau[\eta]$ arising from different values of $\e$ and $\tau$.
For this, we show that the natural inclusion $\eta'\from S_{\e+\e_1}^{\tau + \tau_1}(G,f) \to S_{\e+\e_2}^{\tau + \tau_1}(G,f)$ respects the isomorphism.

    \begin{lemma}
    \label{lem:Functorial_eta}
    For $0 \leq \tau_1,\tau_2$ and $0\leq \e_1\leq\e_2$, as well as $0 \le \tau \le 2 \e$ and smoothing maps $\eta$ and $\eta'$, the diagram 
    \begin{equation*}
        \begin{tikzcd}
            S_{\e_1}^{\tau_1}(G,f) \ar[d,"\eta"] 
            & S_\e^\tau(S_{\e_1}^{\tau_1}(G,f)) 
                \ar[d, "{S_\e^{\tau}[\eta]}"'] 
                \ar[r, "\cong"]
            & S_{\e+\e_1}^{\tau + \tau_1}(G,f) 
                \ar[d, "\eta'"] 
            \\
            S_{\e_2}^{\tau_1}(G,f)
            & S_\e^\tau(S_{\e_2}^{\tau_1}(G,f))
                \ar[r, "\cong"]
            & S_{\e+\e_2}^{\tau + \tau_1}(G,f) 
        \end{tikzcd}
    \end{equation*}
    commutes.
    That is, abusing notation by suppressing $\psi$, we have 
    $S_\e^\tau[\eta] = \eta'$.
    \end{lemma}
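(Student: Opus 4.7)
My strategy is to decompose the isomorphism $S_\e^\tau(S_{\e_i}^{\tau_1}(G,f))\cong S_{\e+\e_i}^{\tau+\tau_1}(G,f)$ from \cref{thm:additive} into its building blocks and verify naturality of each piece with respect to $\eta\from S_{\e_1}^{\tau_1}(G,f)\to S_{\e_2}^{\tau_1}(G,f)$. Using $S_\e^\tau = T^\tau S_\e$, this isomorphism factors as
\[
T^\tau S_\e T^{\tau_1} S_{\e_i}(G,f)
\xrightarrow{T^\tau[\psi]}
T^\tau T^{\tau_1} S_\e S_{\e_i}(G,f)
= T^{\tau+\tau_1} S_\e S_{\e_i}(G,f)
\xrightarrow{T^{\tau+\tau_1}[\alpha]}
T^{\tau+\tau_1} S_{\e+\e_i}(G,f),
\]
where $\psi$ is the commutation map from \cref{prop:conflow01} and $\alpha\from S_\e S_{\e_i}\cong S_{\e+\e_i}$ is the natural isomorphism of \cite[Obs.~4.30]{deSilva2016} recalled in \cref{ssec:apdx:isomorphism}. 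It thus suffices to show that each of $\psi$, $T^{\tau+\tau_1}$, and $\alpha$ is natural with respect to $\eta$, after which the right square of the lemma follows by pasting three commuting squares.

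The key step is naturality of $\psi$: writing $\bar\eta\from T^{\tau_1} S_{\e_1}(G,f)\to T^{\tau_1} S_{\e_2}(G,f)$ for the restriction of $\eta$ to the truncation, I need that
\[
\begin{tikzcd}
S_\e T^{\tau_1} S_{\e_1}(G,f) \ar[r,"\psi"] \ar[d,"{S_\e[\bar\eta]}"']
& T^{\tau_1} S_\e S_{\e_1}(G,f) \ar[d,"{T^{\tau_1} S_\e[\bar\eta]}"] \\
S_\e T^{\tau_1} S_{\e_2}(G,f) \ar[r,"\psi"]
& T^{\tau_1} S_\e S_{\e_2}(G,f)
\end{tikzcd}
\]
commutes. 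This is immediate from the explicit formula $\psi(q_T(x,\lambda))=q(x,\lambda)$ used to construct $\psi$ in \cref{lem:conflow0}: both $S_\e[\bar\eta]$ and its restriction $T^{\tau_1} S_\e[\bar\eta]$ are induced from $(\bar\eta,\Id)$ on the thickening via the universal property of the quotient (see \cref{eq:S_e_appliedtomaps} and \cref{lem:Square_S_nu}), so tracing $q_T(x,\lambda)$ around either leg yields $q(\bar\eta(x),\lambda)$. Applying the functor $T^\tau$ (\cref{lem:endofunctorTrunc}) to this square then handles the $\psi$-block of the main diagram.

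The remaining two blocks are routine. Functoriality of $T^{\tau+\tau_1}$ from \cref{lem:endofunctorTrunc} handles the middle block. For $\alpha$, I will verify naturality at the level of the thickening: $\alpha$ sends the class of $(x,s,t)\in G\times[-\e_i,\e_i]\times[-\e,\e]$ to the class of $(x,s+t)\in G\times[-(\e+\e_i),\e+\e_i]$, and this coordinate-sum map visibly intertwines the iterated smoothing $S_\e S_{\e_i}[\eta]$ with the single smoothing map $\eta$ at parameter $\e+\e_1\leq\e+\e_2$ (both are induced from the identity on the first coordinate). Pasting the three naturality squares produces the desired commutativity. The main obstacle is purely bookkeeping: two distinct implicit isomorphisms (the coordinate-addition $\alpha$ and the commutation $\psi$) must be unwound simultaneously, and one has to verify the $\tau$-safety hypothesis required to invoke \cref{prop:conflow01}, which follows from \cref{prop:longTailsCombined} and \cref{obs:singleEdge} under the standing parameter constraints that make \cref{thm:additive} applicable in the first place.
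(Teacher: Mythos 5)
Your proof is correct and takes a more explicit route than the paper's. The paper's proof is a one-line invocation of \cref{thm:Ttaunu} (``replace $(G,f)$ by $S_{\e_1}(G,f)$ and $\tau$ by $\tau_1$''), essentially delegating all the work to that theorem and \cref{lem:conflow0}. You instead decompose the suppressed isomorphism explicitly into $T^{\tau+\tau_1}[\alpha]\circ T^\tau[\psi]$ (together with the strict equality $T^\tau T^{\tau_1}=T^{\tau+\tau_1}$), and verify naturality of each block with respect to $\eta$. This buys clarity: the reader sees exactly which isomorphisms are being suppressed and why each one interacts well with the smoothing maps, whereas the paper's proof requires the reader to unwind \cref{thm:Ttaunu} themselves. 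The underlying content is the same — both reduce to the fact that all maps in sight are induced on the thickening by $(\bar\eta,\Id)$ or $(\eta,\Id)$, so everything commutes at the level of representatives — but your version is the more self-contained argument.

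One notational slip worth flagging: in your $\psi$-naturality square, the right vertical arrow is labeled $T^{\tau_1}S_\e[\bar\eta]$ and described as ``the restriction'' of $S_\e[\bar\eta]$. That label is not well-typed: $\bar\eta$ has domain $T^{\tau_1}S_{\e_1}(G,f)$, so $T^{\tau_1}S_\e[\bar\eta]$ would act on $T^{\tau_1}S_\e T^{\tau_1}S_{\e_1}(G,f)$, not on $T^{\tau_1}S_\e S_{\e_1}(G,f)$. The correct right vertical is $T^{\tau_1}[S_\e[\eta]]$, where $\eta\from S_{\e_1}(G,f)\to S_{\e_2}(G,f)$ is the \emph{unrestricted} smoothing map; it is the restriction of $S_\e[\eta]$ to the truncation, not a restriction of $S_\e[\bar\eta]$ (the two verticals have different domains, neither contained in the other). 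Your actual point-level check — that tracing $q_T(x,\lambda)$ around either leg gives $q(\bar\eta(x),\lambda)$ — is the right computation and does establish the correct square, so this is a labeling error rather than a gap, but it is worth fixing since the phrase ``its restriction'' is what leads you astray.
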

    
    \begin{proof}
    We use \cref{thm:Ttaunu}: the $\eta$ in the diagram is the $T^{\tau_1}[\eta\colon S_{\e_1}(G,f)\to S_{\e_2}(G,f)]$ map, 
    and by replacing $(G,f)$ by $S_{\e_1}(G,f)$ and $\tau$ by $\tau_1$, the lemma follows.
    \end{proof}

    Since the $\omega$ and $\rho$ maps are simply restrictions of $\eta$, we can apply this results to those maps as well. 
    \begin{corollary}
    \label{lem:Functorial_omega}
    For 
    $\omega\from  S_{\e_1}^{\tau_1}(G,f) \to S_{\e_2}^{\tau_2}(G,f)$, 
    the diagram 
    \begin{equation*}
        \begin{tikzcd}
            S_{\e_1}^{\tau_1}(G,f) \ar[d,"\omega"] 
            & S_\e^\tau(S_{\e_1}^{\tau_1}(G,f)) 
                \ar[d, "{S_\e^{\tau}[\omega]}"'] 
                \ar[r, "\cong"]
            & S_{\e+\e_1}^{\tau + \tau_1}(G,f) 
                \ar[d, "\omega"] 
            \\
            S_{\e_2}^{\tau_2}(G,f)
            & S_\e^\tau(S_{\e_2}^{\tau_2}(G,f))
                \ar[r, "\cong"]
            & S_{\e+\e_2}^{\tau + \tau_2}(G,f) 
        \end{tikzcd}
    \end{equation*}
    commutes so long as all spaces and maps exist.
    That is,  abusing notation by suppressing the homeomorphism, we have 
    $S_\e^\tau[\omega] = \omega$.
    \end{corollary}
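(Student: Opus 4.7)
The plan is to reduce the corollary to \cref{lem:Functorial_eta} by invoking the factorisation $\omega=\nu\after\eta=\eta\after\nu$ from \cref{defn:omega} and then checking separately that $S_\e^\tau[\nu]$ also corresponds to $\nu$ under the suppressed isomorphism $S_\e^\tau(S_{\e_i}^{\tau_i}(G,f))\cong S_{\e+\e_i}^{\tau+\tau_i}(G,f)$. Since $S_\e^\tau$ is an endofunctor on $\Reeb$ (\cref{lem:endofunctor}), it preserves composition, so it suffices to show separately that each factor of $\omega$ lifts correctly.

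The first step is to apply functoriality: write, say, $\omega=\nu\after\eta$ through the intermediate space $S_{\e_2}^{\tau_1}(G,f)$, and obtain $S_\e^\tau[\omega]=S_\e^\tau[\nu]\after S_\e^\tau[\eta]$. Then \cref{lem:Functorial_eta}, applied with the appropriate relabelling of parameters, identifies $S_\e^\tau[\eta]$ with the usual smoothing map $\eta\from S_{\e+\e_1}^{\tau+\tau_1}(G,f)\to S_{\e+\e_2}^{\tau+\tau_1}(G,f)$ after suppressing the isomorphism.

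The remaining piece is to verify that $S_\e^\tau[\nu]$ is, again up to the isomorphism, the inclusion $\nu\from S_{\e+\e_2}^{\tau+\tau_1}(G,f)\hookrightarrow S_{\e+\e_2}^{\tau+\tau_2}(G,f)$. By construction in \cref{ssec:categorical_truncation}, $S_\e^\tau[\nu]$ is the restriction of $S_\e[\nu]$ to $\nu(S_\e^\tau(\cdot))$; since $\nu\from S_{\e_2}^{\tau_1}(G,f)\hookrightarrow S_{\e_2}^{\tau_2}(G,f)$ is itself an inclusion of a subspace (of points whose up/down-paths are long enough), the thickening $\nu\times\Id$ is an inclusion on the product, and the induced map on the Reeb quotient is again an inclusion. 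Composing with the identification of $S_\e^\tau(S_{\e_2}^{\tau_i}(G,f))$ with $S_{\e+\e_2}^{\tau+\tau_i}(G,f)$ (the same isomorphism used in \cref{thm:additive} and \cref{lem:Functorial_eta}), this inclusion matches the natural $\nu$ between the two summed-parameter truncated smoothings, because a point survives truncation at level $\tau+\tau_i$ in the combined smoothing iff its representative survives in the iterated construction.

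Combining the two identifications gives $S_\e^\tau[\omega]=\nu\after\eta=\omega$ under the suppressed isomorphism, which is the claim. The main obstacle is purely bookkeeping: verifying that the same homeomorphism $\psi$ from \cref{lem:conflow0} realising $S_\e^\tau\after S_{\e_i}^{\tau_i}\cong S_{\e+\e_i}^{\tau+\tau_i}$ on both the top and bottom row intertwines the inclusion $S_\e^\tau[\nu]$ with $\nu$; this is where the parallel between \cref{lem:Functorial_eta} and the $\nu$ case must be handled with care, but since $\nu$ is defined as a literal subspace inclusion and $\psi$ is defined at the level of equivalence classes of thickened points, no new geometric content beyond what is already present in \cref{lem:Functorial_eta} is required.
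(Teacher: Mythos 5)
Your proposal takes a genuinely different route from the paper's. The paper proves this corollary by combining \cref{lem:Functorial_eta} with \cref{lem:omega_restricts}: since \cref{lem:omega_restricts} characterizes $\omega$ directly as the restriction of the untruncated $\eta\from S_{\e_1}(G,f)\to S_{\e_2}(G,f)$ to $\nu(S_{\e_1}^{\tau_1}(G,f))$, the commutativity in \cref{lem:Functorial_eta} passes to the restriction in one step, and no separate treatment of $\nu$ is needed. You instead factor $\omega=\nu\after\eta$ through $S_{\e_2}^{\tau_1}(G,f)$, apply functoriality of $S_\e^\tau$, and handle the two pieces separately, which requires a new sub-claim — that $S_\e^\tau[\nu]$ corresponds to $\nu$ under the suppressed isomorphism — that the paper never has to state. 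Your decomposition is more explicit but does genuinely more work.

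Within that extra step there is one assertion you should be cautious about: you claim that because $\nu\times\Id$ is an inclusion on the thickened product, ``the induced map on the Reeb quotient is again an inclusion.'' This is not automatic. Two points of $A\times[-\e,\e]$ (where $A=S_{\e_2}^{\tau_1}(G,f)\subseteq B=S_{\e_2}^{\tau_2}(G,f)$) lying in different level-set components of $A\times[-\e,\e]$ could a priori be joined by a level-constant path through $B\times[-\e,\e]$, so $S_\e[\nu]$ could fail to be injective; showing it does not fail is precisely the content of \cref{lem:psiInjective} and requires a $\tau$-safeness hypothesis, which is not guaranteed for every admissible choice of parameters in this corollary. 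Fortunately, injectivity of $S_\e[\nu]$ is not actually needed: what you need is that the square with $S_\e^\tau[\nu]$ on the left and $\nu$ on the right commutes after composing with $\psi$, and that can be read off directly by tracking a point $(y,s)$ with $y$ represented by $(x,t)\in G\times[-\e_2,\e_2]$ through both paths to $q(x,s+t)\in S_{\e+\e_2}(G,f)$ — exactly the level-of-equivalence-classes bookkeeping you gesture at in your final sentence. I'd recommend dropping the ``induced inclusion'' claim and replacing it with this explicit diagram chase at the level of thickened points; with that repair your argument is sound, though somewhat heavier than the paper's.
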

    
    \begin{proof}
    This is immediate from combining \cref{lem:Functorial_eta} with \cref{lem:omega_restricts}.
    \end{proof}

    \begin{corollary}
    \label{lem:Functorial_rho}
    For 
    $\rho\from  S_{\e_1}^{\tau_1}(G,f) \to S_{\e_2}^{\tau_2}(G,f)$, 
    the diagram 
    \begin{equation*}
        \begin{tikzcd}
            S_{\e_1}^{\tau_1}(G,f) \ar[d,"\rho"] 
            & S_\e^\tau(S_{\e_1}^{\tau_1}(G,f)) 
                \ar[d, "{S_\e^{\tau}[\rho]}"'] 
                \ar[r, "\cong"]
            & S_{\e+\e_1}^{\tau + \tau_1}(G,f) 
                \ar[d, "\rho"] 
            \\
            S_{\e_2}^{\tau_2}(G,f)
            & S_\e^\tau(S_{\e_2}^{\tau_2}(G,f))
                \ar[r, "\cong"]
            & S_{\e+\e_2}^{\tau + \tau_2}(G,f) 
        \end{tikzcd}
    \end{equation*}
    commutes so long as all spaces and maps exist.
    That is,  abusing notation by suppressing the homeomorphism, we have 
    $S_\e^\tau[\rho] = \rho$.
    \end{corollary}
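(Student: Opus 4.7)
The plan is to mirror the proof of \cref{lem:Functorial_omega} almost verbatim, reducing the claim about $\rho$ to the already-established statement for $\eta$ (\cref{lem:Functorial_eta}) by invoking a restriction lemma for $\rho$ analogous to \cref{lem:omega_restricts}. In the table, $\rho$ is characterized as a map that agrees with $\eta$ on the nose whenever $\tau_1=\tau_2$, and more generally the $\rho$ maps in this paper are built as restrictions of smoothing maps $\eta\from S_{\e_1}(G,f)\to S_{\e_2}(G,f)$ to specific truncated subspaces sitting inside $\nu(S_{\e_1}^{\tau_1}(G,f))$. So the first step is to record the restriction description of $\rho$ in the form: $\rho$ is the restriction of $\eta\from S_{\e_1}(G,f)\to S_{\e_2}(G,f)$ to $\nu(S_{\e_1}^{\tau_1}(G,f))\subseteq S_{\e_1}(G,f)$, landing in $\nu(S_{\e_2}^{\tau_2}(G,f))$.

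Next I would apply $S_\e^\tau$ to the triangle expressing $\rho$ as a restriction of $\eta$. By the functoriality of $S_\e^\tau$ on $\Reeb$ (\cref{lem:endofunctor}) together with \cref{lem:Square_S_nu}, applying $S_\e^\tau$ to the defining inclusion-and-composition square preserves commutativity, and hence $S_\e^\tau[\rho]$ is the restriction of $S_\e^\tau[\eta]$ to $S_\e^\tau(\nu(S_{\e_1}^{\tau_1}(G,f)))$. Then \cref{lem:Functorial_eta} identifies $S_\e^\tau[\eta]$ with the natural smoothing map $\eta'\from S_{\e+\e_1}^{\tau+\tau_1}(G,f)\to S_{\e+\e_2}^{\tau+\tau_1}(G,f)$ after passing through the isomorphism $\psi$ of \cref{lem:conflow0}, iterated to identify $S_\e^\tau(S_{\e_i}^{\tau_i}(G,f))$ with $S_{\e+\e_i}^{\tau+\tau_i}(G,f)$.

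Combining these, under the isomorphism the map $S_\e^\tau[\rho]$ becomes the restriction of $\eta'$ to the image of $\nu(S_{\e_1+\e}^{\tau_1}(G,f))$ inside $S_{\e_1+\e}^{\tau_1+\tau}(G,f)$, which by the same restriction characterization applied at the shifted parameters is precisely the $\rho$ map $S_{\e+\e_1}^{\tau+\tau_1}(G,f)\to S_{\e+\e_2}^{\tau+\tau_2}(G,f)$. Thus the right-hand square of the displayed diagram commutes, and the left-hand square commutes by \cref{lem:Functorial_eta} composed with the restriction argument; chaining the two squares yields the full diagram.

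The main obstacle I expect is purely bookkeeping: verifying that the iterated isomorphism $S_\e^\tau(S_{\e_i}^{\tau_i}(G,f))\cong S_{\e+\e_i}^{\tau+\tau_i}(G,f)$ carries the truncated subspaces on either side to each other, so that ``restriction of $S_\e^\tau[\eta]$'' really does correspond to ``restriction of $\eta'$.'' This is exactly where \cref{remark:abusingNotation} warns that suppressed isomorphisms must be checked; the content to verify is that $\psi$ restricts to an isomorphism between the two versions of the smaller truncated subspace, which follows from the construction of $\psi$ in \cref{lem:conflow0} together with the additive combining in \cref{thm:additive}. Once that naturality of $\psi$ with respect to $\nu$ is in hand, the rest of the proof is a diagram chase of at most the same complexity as that of \cref{lem:Functorial_omega}.
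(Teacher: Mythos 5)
Your proposal is correct and follows the same route as the paper: reduce the $\rho$ case to \cref{lem:Functorial_eta} via the restriction description of $\rho$, exactly parallel to the proof of \cref{lem:Functorial_omega}. The only thing to note is that, unlike $\omega$ (where \cref{lem:omega_restricts} is a separate lemma that must be proved), the restriction characterization of $\rho$ as a restriction of $\eta\from S_{\e_1}(G,f)\to S_{\e_2}(G,f)$ is the literal definition (\cref{defn:rho}), so there is nothing additional to establish there---the paper's proof simply cites \cref{defn:rho} together with \cref{lem:Functorial_eta}, and the bookkeeping you worry about (that $\psi$ carries the truncated subspaces correctly) is exactly what \cref{lem:Functorial_eta} already packages up.
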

    
    \begin{proof}
    This is immediate from combining \cref{lem:Functorial_eta} with \cref{defn:rho}.
    \end{proof}

\section{Categorical flow for \texorpdfstring{$\tau \in [0,\e]$}{t in [0,e]}}
\label{sec:categorical_flow}

Intuitively, we would like to be able to define a new categorical flow which incorporates the truncation parameter.
That is, a flow given increasing $\e$ parameter and non-decreasing $\tau$.
We have two barriers to deal with.

\begin{figure}
    \centering
    \includegraphics[page=2]{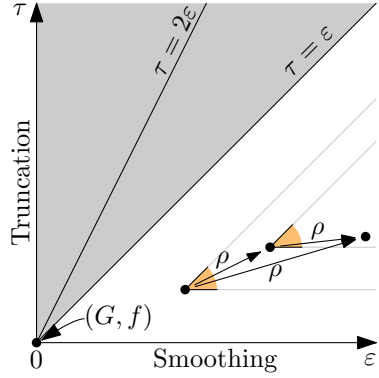}
    \caption{An illustration where $\rho$ is defined: given $S^{\tau_1}_{\e_1}$ and $S^{\tau_2}_{\e_2}$, the map $\rho$ exists if $(\e_2,\tau_2)$ is in the wedge that originates at point $(\e_1,\tau_1)$ with slope between 0 and 1 (shown in orange), and these maps commute whenever available.}
    \label{fig:ParametersRho}
\end{figure}
First, for some choices of parameters  truncating and then smoothing is not the same as smoothing and then truncating; i.e.~$S_\e^\tau S_\e^\tau \not \cong S_{2\e}^{2\tau}$.
For example, in \cref{fig:smoothingOperation_withTrunc}, $S_0^{2\e} S_{2\e}^{0}(G,f)$ has one connected component, but $ S_{2\e}^{0} S_0^{2\e} (G,f)  = S_{2\e}T^{2\e}(G,f)$ has two.
The second roadblock is that the forward morphisms $S_\e^\tau \Rightarrow S_{\e'}^{\tau'}$ required of a categorical flow corresponding to $\e \leq \e'$ might not be available since we require morphisms to be function preserving, and the truncation can erode the height of the codomain enough to leave no available points to map to.
Luckily, both of these issues can be overcome if we restrict the amount of truncation done to  $\tau \in [0,\e]$.
So, working in this restricted space, we will first investigate properties of $S_\e^\tau$, build the map $\rho$ and prove that $S_\e^\tau$ can be used to build a flow on $\Reeb$.

\subsection{The map \texorpdfstring{$\rho$}{p}}
\label{subsec:rho}

So far, all maps we have used go down and/or right in the $\e$-$\tau$ plane.
In general, we do not have a way to map up-right, that is for both increasing smoothing and truncating parameters.
This in fact makes sense: truncated graphs naturally include downward, as they are subsets of less-truncated versions, and it is not clear how to map from the larger, less-truncated/smoothed version to something that is more smoothed and more truncated.
However, we will see in this section that we do have such a map in a restricted setting.

Assume $0 \leq \e \leq \e'$, $0\leq \tau \leq 2\e$ and $0\leq \tau' \leq 2\e'$
so that
$S_\e^\tau(G)$ and $S_{\e'}^{\tau'}(G)$ exist.
We further assume $0\leq{\tau'-\tau}\leq{\e'-\e}$; that is, the slope of the line connecting $S_\e^\tau(G)$ and $S_{\e'}^{\tau'}(G)$ in parameter space is in $[0,1]$.
There are two equivalent ways to find the map
$\rho\from S_\e^\tau(G) \to S_{\e'}^{\tau'}(G)$.
First, with the assumptions we have that $0 \leq \e-\tau \leq \e'-\tau'$.
Using \cref{prop:DefnBackwardView}, we have that
    $$
    S_\e^\tau(G) = \eta(S_{\e-\tau}(G))
    : = \Im \{ \eta\from  S_{\e-\tau}(G) \to S_\e(G)\}.
    $$
so we have the commutative diagram
\begin{equation*}
    \begin{tikzcd}
        S_{\e-\tau}(G) \ar[rr, "\eta"] \ar[dr, "\eta"]
        &&  S_{\e'-\tau'}(G) \ar[dr, "\eta"]\\
        & S_\e(G) \ar[ur, leftrightarrow, "\eta"] \ar[rr, "\eta"] & & S_{\e'}(G)
    \end{tikzcd}
\end{equation*}
where the middle $\eta$ arrow points towards the larger of  $\e$ or $\e'-\tau'$.
Thus, we can define $\rho$ as follows, see also the visualization of  \cref{fig:ParametersRho}.
\begin{definition}
\label{defn:rho}
For $0 \leq \tau_i \leq \e_i$, $i=1,2$, and
$0 \leq \e_2 - \e_1 \leq \tau_2 - \tau_1$, we have a map
$\rho\from S_{\e_1}^{\tau_1}(G) \to S_{\e_2}^{\tau_2}(G)$
which is the restriction of
$\eta\from S_{\e_1}(G) \to S_{\e_2}(G)$ to $\eta(S_{\e_1-\tau_1}(G)) \subseteq S_{\e}(G)$.
\end{definition}

An equivalent construction of $\rho$ comes from utilizing \cref{cor:DefnIntersection} where $S_\e^\tau(G) = q(G \times [-\e + \tau, \e-\tau])$.
Because of the constraints,
$$
-\e'+\tau' \leq -\e + \tau \leq 0 \leq \e-\tau \leq \e'-\tau'
$$
and thus $G \times [-\e+\tau ,\e-\tau] \subseteq G \times [-\e'+\tau' ,\e'-\tau']$.
So, we can view $\rho$ as the restriction of $\eta$ to $q(G \times [-\e+\tau ,\e-\tau] )$ in the commutative diagram
\begin{equation*}
    \begin{tikzcd}
       G \times [-\e+\tau ,\e-\tau] \ar[r,hook] \ar[d,hook]
       & G \times [-\e'+\tau' ,\e'-\tau'] \ar[d,hook]\\
       G \times [-\e ,\e] \ar[r,hook] \ar[d,"q"]
       & G \times [-\e' ,\e'] \ar[d,"q"]\\
     S_\e(G) \ar[r, "\eta"] & S_{\e'}(G).
    \end{tikzcd}
\end{equation*}

The assumptions on $\e$ and $\tau$ mean that in terms of the visualization of \cref{fig:ParametersRho}, $\rho$ is available at slope between 0 and 1, denoted by the yellow wedge.
Further, these maps compose whenever available as shown below.

\begin{lemma}
\label{lem:triangles}
Whenever the required spaces and maps exist, the diagrams
\begin{equation*}
    \begin{tikzcd}%
        &S_{\e'}^{\tau'}(G)\ar[r, "\rho"]
        &S_{\e''}^{\tau''}(G)
        &S_{\e'}^{\tau'}(G)
        \ar[dr, "\omega"]
        & S_{\e}^{\tau}(G)
         \ar[dr, "\omega"]
        \ar[rr, "\rho"]
        &&S_{\e'}^{\tau'}(G)
        \\
        S_{\e}^{\tau}(G)
            \ar[ur, "\rho"]
            \ar[urr,  "\rho"]
        &&S_{\e}^{\tau}(G)
        \ar[rr, "\rho"]
        \ar[ur, "\rho"]
        &&S_{\e''}^{\tau''}(G)
        &S_{\e''}^{\tau''}(G)
        \ar[ur, "\rho"]
    \end{tikzcd}
\end{equation*}
commute.
\end{lemma}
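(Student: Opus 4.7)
The plan is to reduce all three triangles to a single fact about un-truncated smoothings: the smoothing maps compose transitively, i.e. $\eta\circ\eta = \eta$ as maps $S_\e(G,f)\to S_{\e'}(G,f)\to S_{\e''}(G,f)$ for $\e\le\e'\le\e''$, after suppressing the natural isomorphism $S_{\e_2}S_{\e_1}\cong S_{\e_1+\e_2}$ as discussed in \cref{remark:abusingNotation} and \cref{ssec:apdx:isomorphism}. I would first observe that both $\rho$ and $\omega$ are, by construction, restrictions of these ambient smoothing maps: for $\omega$ this is precisely \cref{lem:omega_restricts}(1), and for $\rho$ this is built directly into \cref{defn:rho}. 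In other words, whenever either map exists, the arrow $S_{\e_1}^{\tau_1}(G,f)\to S_{\e_2}^{\tau_2}(G,f)$ is obtained from the single ambient map $\eta\from S_{\e_1}(G,f)\to S_{\e_2}(G,f)$ by restricting to the subspace $\nu(S_{\e_1}^{\tau_1}(G,f))\subseteq S_{\e_1}(G,f)$; the only difference between $\rho$ and $\omega$ is the parameter regime in which the target $S_{\e_2}^{\tau_2}(G,f)$ is allowed to live.

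With this observation in hand, each of the three triangles collapses to a diagram of un-truncated smoothings. For the first triangle, both the composite $\rho\circ\rho$ and the single $\rho$ are the restriction of $\eta\from S_\e(G,f)\to S_{\e''}(G,f)$ to $\nu(S_\e^\tau(G,f))$, so they agree. The second triangle ($\omega\circ\rho=\rho$) and the third ($\rho\circ\omega=\rho$) are handled identically: in each case, the two-arrow side and the single-arrow side are both restrictions of the same iterated $\eta$ to the same subspace of $S_\e(G,f)$, and hence coincide. The only step to verify is that at the intermediate vertex of each triangle, the image of the first arrow actually lies in the subspace on which the second arrow is defined, so that the composition is itself a restriction of an iterated $\eta$; this is immediate from the nested inclusion $\nu(S_{\e'}^{\tau'}(G,f))\supseteq\rho(S_\e^\tau(G,f))$ (and its $\omega$-analogue), which in turn follows from the commutative square of \cref{lem:squareCommutes} relating $\nu$ and $\eta$.

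I do not expect a serious obstacle here; the content of the lemma is essentially that, once the two truncated maps are recognized as restrictions of the same ambient smoothing, their compositions are governed by the transitivity of $\eta$ on the un-truncated graphs. The only mildly delicate point is checking that the parameter constraints for $\rho$ (slope $\in[0,1]$) and $\omega$ (slope $\in[-\infty,0]$) assumed by the lemma statement (via the blanket ``whenever the required spaces and maps exist'') are compatible along each triangle so that the single diagonal $\rho$ does exist; this follows directly from the triangle inequalities $0\le\tau'-\tau\le\e'-\e$ and $\tau''\le\tau'$, $\e'\le\e''$ (or their symmetric versions), which add appropriately across the middle vertex.
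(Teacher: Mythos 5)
Your proposal is correct and matches the paper's own argument, which is precisely that both $\rho$ and $\omega$ are restrictions of the ambient $\eta$ maps to the relevant subspaces of $S_\e(G,f)$, so the triangles commute by transitivity of $\eta$. The additional check at the intermediate vertex is harmless but unnecessary: the codomain of the first arrow is by definition the domain of the second, and the identification $\nu\circ\rho=\eta\circ\nu$ (and likewise for $\omega$) already guarantees the composite is a restriction of the iterated $\eta$.
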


\begin{proof}
The result arises from using the fact that both $\rho$ and $\omega$ are restrictions of the $\eta$ maps to the relevant subspaces of $S_{\e}(G,f)$.
\end{proof}

While we leave the full details to \cref{ssec:truncation_applied_to_maps}, we note that with the suppression of the homeomorphism discussed in \cref{remark:abusingNotation}, we have the following notation.

\textit{Notation:} Up to homeomorphism, $S_\e^\tau[\eta] = \eta$, $S_\e^\tau[\omega] = \omega$, and $S_\e^\tau[\rho] = \rho$.

That is, for an 
$\eta:S_{\e_1}^{\tau_1}(G,f) \to S_{\e_2}^{\tau_2}(G,f)$, 
$S_\e^\tau[\eta]: S_{\e_1+\e}^{\tau_1+\tau}(G,f) \to S_{\e_2+\e}^{\tau_2+\tau}(G,f)$
is the standard map $\eta$ from \cref{defn:eta}; the others are defined similarly from \cref{defn:omega,defn:rho}.

\subsection{The categorical flow}

We can use this information to construct a truncated flow based on any line with slope at most 1.
In particular, we can now prove our main result.

\CategoricalFlow*

Proving this theorem amounts to checking that $S^m$ satisfies quite a few properties encased in \cref{defn:Flow}, largely handled in the previous sections.
The last bits necessary are the following two lemmas.

\begin{lemma}
\label{lem:build_nat_trans}
For any   $0 \leq \e \leq \e'$, $0\leq \tau \leq 2\e$ and $0\leq \tau' \leq 2\e'$ satisfying $0\leq{\tau'-\tau}\leq{\e'-\e}$ there is a natural transformation given by
$\rho\from  S_\e^{\tau} \Rightarrow S_{\e'}^{\tau'}$;
in particular the diagram
\begin{equation*}
    \begin{tikzcd}
        G \ar[d,"\psi"]
        & S_\e^{\tau}(G)
            \ar[d, "{S_\e^{\tau}[\psi]}"']
            \ar[r, blue, "\rho"]
        & S_{\e'}^{\tau'}(G)
            \ar[d, "{S_{\e'}^{\tau'}[\psi]}"]
        \\
        H
        & S_\e^{\tau}(H) \ar[r,blue,"\rho" ]
        & S_{\e'}^{\tau'}(H)
    \end{tikzcd}
\end{equation*}
commutes.

\end{lemma}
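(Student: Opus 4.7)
The plan is to derive the commutativity of the square from the already-established naturality of $\eta\from S_\e(G,f) \to S_{\e'}(G,f)$ recorded in equation~\eqref{eq:Se_of_morphism_and_eta_square}, by restricting to the appropriate subspaces. The whole proof should be a diagram chase on a cube rather than any fresh geometric argument.

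First, I would recall that $\rho\from S_\e^\tau(G,f) \to S_{\e'}^{\tau'}(G,f)$ is literally the restriction of $\eta\from S_\e(G,f) \to S_{\e'}(G,f)$ to $\nu(S_\e^\tau(G,f)) \subseteq S_\e(G,f)$, by \cref{defn:rho} and the accompanying commutative diagram. Similarly, $S_\e^\tau[\psi]$ is the restriction of $S_\e[\psi]$ to $\nu(S_\e^\tau(G,f))$, by the definition given just before \cref{lem:endofunctor}. Before using these, I must verify that the restrictions actually land in the claimed codomains: that $S_\e[\psi]$ maps $\nu(S_\e^\tau(G,f))$ into $\nu(S_\e^\tau(H,h))$ and $\nu(S_{\e'}^{\tau'}(G,f))$ into $\nu(S_{\e'}^{\tau'}(H,h))$, and that $\eta$ maps $\nu(S_\e^\tau(G,f))$ into $\nu(S_{\e'}^{\tau'}(G,f))$ under the assumption $0 \leq \tau' - \tau \leq \e' - \e$. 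Both claims reduce to the fact that function-preserving maps (and smoothing maps) carry monotone paths of height $\tau$ to monotone paths of height at least $\tau$ in the target, which is the same observation that justified $S_\e^\tau$ being an endofunctor in \cref{lem:endofunctor}.

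Next, I would assemble a commutative cube whose \emph{front} face is the square to be proved, whose \emph{back} face is the $\eta$-naturality square from~\eqref{eq:Se_of_morphism_and_eta_square}, and whose four connecting edges are the inclusions $\nu$. The top and bottom faces of the cube commute by the restriction characterization of $\rho$ (\cref{defn:rho}, unwinding exactly as in \cref{lem:omega_restricts}). The two side faces commute by \cref{lem:Square_S_nu}. Hence five of the six faces of the cube commute, and the back face is our hypothesis.

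Finally, I would run a short diagram chase: for $x \in S_\e^\tau(G,f)$, applying $\nu$ to either path around the front square gives an element of $S_{\e'}(H,h)$, and these two elements agree by back-face commutativity. Since each element also lies in the image $\nu(S_{\e'}^{\tau'}(H,h))$ and $\nu$ is injective, the two paths in the front square already agreed in $S_{\e'}^{\tau'}(H,h)$, giving the desired naturality. The main obstacle is not any deep argument but rather the careful bookkeeping of the parameter constraints: I must confirm at each restriction that the hypotheses on $(\e,\tau,\e',\tau')$ place us inside the domain where the relevant map exists, as catalogued in \cref{tab:Maps}. Once that bookkeeping is set up, the commutativity is a formal consequence of naturality of $\eta$.
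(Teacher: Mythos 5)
Your proposal is correct, and it takes a recognizably different (if dual) route from the paper's proof. The paper's argument descends one level further: invoking \cref{cor:DefnIntersection} it writes $S_\e^\tau(G,f) = q(G \times [\tau-\e,\e-\tau])$ and assembles a three-dimensional commutative diagram whose top layer is the thickened spaces $G \times [\tau-\e, \e-\tau]$, $G \times [\tau'-\e', \e'-\tau']$, etc.\ with the quotient maps $q$ and the maps $(\psi,\Id)$, and whose inner face is the square to be proved; commutativity of the inner square is read off the big diagram by surjectivity of the quotient maps. Your argument instead stays entirely at the level of the Reeb graphs $S_\e^\tau(\cdot)$, $S_\e(\cdot)$ and the inclusions $\nu$: you take the already-established $\eta$-naturality square \eqref{eq:Se_of_morphism_and_eta_square} as the back face of a cube, use \cref{defn:rho} (i.e.\ $\rho$ is the restriction of $\eta$) for the top and bottom faces, use \cref{lem:Square_S_nu} for the side faces, and finish with a chase via \emph{injectivity} of $\nu$. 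So the paper pushes forward along $q$ while you pull back along $\nu$; the intermediate scaffolding is different but both proofs are ultimately just restriction plus $\eta$-naturality. Your version is arguably slightly more modular because it reuses the established \cref{lem:Square_S_nu} rather than re-deriving the side squares from the thickening picture, at the cost of needing injectivity of $\nu$ as an explicit final step (which the paper gets implicitly by arguing at the thickened level). One small bookkeeping note: \cref{lem:omega_restricts}, which you cite in passing for the top and bottom faces, is stated for $\omega$ rather than $\rho$; the correct and more direct reference is \cref{defn:rho} itself (together with the commutative diagram immediately following it in the paper), which is what you primarily lean on anyway.
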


\begin{proof}
    Combining the definition given by \cref{cor:DefnIntersection}
    \cref{eq:Se_of_morphism_and_eta_square}, we have that the diagram
\begin{equation*}
  \begin{tikzcd}
    G \times [-\e + \tau, \e-\tau]
        \ar[rrr, hook]
        \ar[dr,"q"]
        \ar[ddd,blue, "{(\psi,\mathrm{Id})}"]
    & & &
    G \times [-\e' + \tau', \e' - \tau']
        \ar[dr,"q"]
        \ar[ddd,blue, "{(\psi,\mathrm{Id})}"]
    \\
    &
    S_\e^\tau(G)
    \ar[rrr, "\rho",crossing over]
    \ar[dr,hook]
    & & &
    S_{\e'}^{\tau'}(G)
    \ar[dr,hook]
    \ar[ddd, blue, "{S_{\e'}^{\tau'}[\psi]}"]
    \\
     & &
     S_\e(G)
     \ar[rrr,"\eta",crossing over]
     & & &
     S_{\e'}(G)
     \ar[ddd, blue, "{S_{\e'}[\psi]}"]
     \\
    H \times [-\e + \tau, \e-\tau]
        \ar[rrr, hook]
        \ar[dr,"q"]
    & & &
    H \times [-\e' + \tau', \e' - \tau']
        \ar[dr,"q"]
    \\
    &
    S_\e^\tau(H) \ar[rrr, "\rho"]
    \ar[from = uuu,blue, "{S_\e^\tau[\psi]}", crossing over]
    \ar[dr,hook]
    & & &
    S_{\e'}^{\tau'}(H)
    \ar[dr,hook]
    \\
     & &
     S_\e(H) \ar[rrr,"\eta"]
     \ar[from = uuu, blue, "{S_\e[\psi]}", crossing over]
     & & &
     S_{\e'}(H)
  \end{tikzcd}
  \end{equation*}
  commutes, and thus the inner square commutes.
\end{proof}

\begin{proof}[Proof of \cref{thm:CategoricalFlow}]
\cref{lem:endofunctor} and \cref{lem:build_nat_trans} combine to show that $S^m$ is a functor.
By \cref{prop:conflow01}, $S_{\e'}^{m\e'}S_\e^{m\e} \cong S_{\e+\e'}^{m(\e+\e')}$ so $S^m$ is additive.
The combination means that $S^m$ is a categorical flow on $\Reeb$.
\end{proof}

\section{Full proof of strong equivalence of metrics}
\label{sec:appendix_strongequivmetrics}
Our final appendix is dedicated to the specifics of the proof of strong equivalence of metrics for some choices of $m$ and $m'$, with the result for all pairs $0 \leq m \leq m' <1 $ given as \cref{cor:equivalenceAll_m}.

\EquivalentMetrics*

\begin{proof}
\textbf{First inequality.} We first show that 
    \[
    d_I^{m}((G,f), (H,h)) 
    \leq  d_I^{m'}((G,f), (H,h)). 
    \]
Assume we have an $\e$-interleaving using $m'$ given by 
\begin{align*}
   \alpha\from  &(G,f) \to S_{\e}^{m'\e}(H,h)\\
   \beta\from  &(H,h) \to S_{\e}^{m'\e}(G,f).
\end{align*} 
Our goal is then to construct an $\e$-interleaving using $m$ by utilizing maps $\rho$ and $\omega$ that go between the lines $y=mx$ and $y=m'x$.
These can then be concatenated with the $\alpha$ and $\beta$ maps to build an interleaving. 

To that end, consider the diagram 
\begin{equation}
\label{eq:EquivMetrics_Dgm1}
\begin{tikzcd}[column sep = huge]%
       & 
       & \color{violet} S_{2\delta}^{2m\delta}(G,f)
       \\
       & \color{violet} S_{\e}^{m\e} (G,f) 
            \ar[ur,"\rho"]
       & S_{2\e}^{(m+m')\e}(G,f) 
           \ar[u, "\nu", red]
       \\
       \color{violet} (G,f) 
       \ar[r,"\rho", dashed]  
       \ar[dr, blue, "\alpha"',very near start]
       \ar[ur,"\rho"]
       & 
       S_{\e}^{m'\e} (G,f) 
            \ar[r,"\rho",dashed ] 
            \ar[u, "\nu", red]
            \ar[ur,"\rho", dashed]  
            \ar[dr, blue, "{S_{\e}^{m'\e}[\alpha]}"',very near start]
       & 
       S_{2\e}^{2m'\e}(G,f)
           \ar[u, "\nu", red]
       \\
       \color{violet} (H,h) 
       \ar[r,"\rho", dashed]  
       \ar[dr,"\rho"]
       \ar[ur, red, "\beta"',crossing over,very near end]
       & 
       S_{\e}^{m'\e} (H,h) 
            \ar[r,"\rho", dashed] 
            \ar[d, "\nu", blue]
            \ar[dr,"\rho", dashed]  
            \ar[ur, red, "{S_{\gamma}^{m'\gamma}[\beta]}"',very near end, crossing over]
       & 
       S_{2\e}^{2m'\e}(H,h)
           \ar[d, "\nu", blue]
       \\
       & \color{violet} S_{\e}^{m\e} (H,h) 
            \ar[dr,"\rho"]
       & S_{2\e}^{(m+m')\e}(H,h) 
           \ar[d, "\nu", blue]
       \\
       & %
       & \color{violet} S_{2\e}^{2m\e}(H,h) 
\end{tikzcd}
\end{equation}
Note that some quick calculations ensure that all the denoted $\rho$ maps exist because all slopes (viewed as in [fig whatever]) are at most 1. 
All triangles in the top and bottom half commute because of \cref{lem:triangles}. 
The parallelograms commute because $\rho$ is a restriction of the $\eta$ maps along the inclusions $\nu$. 
The middle strip commutes because $\alpha$ and $\beta$ consitute an interleaving.

So, let $\color{blue}\alpha':= \nu \alpha $ and $\color{red}\beta' = \nu \beta $ be the compositions
\[
(G,f) \xrightarrow{\alpha} S_{\epsilon}^{m'\epsilon}(H,h) \xrightarrow{\nu} S_{\epsilon}^{m\epsilon}(H,h) 
\qquad
(H,h) \xrightarrow{\beta} S_{\epsilon}^{m'\epsilon}(G,f) \xrightarrow{\nu} S_{\epsilon}^{m\epsilon}(G,f). 
\] 
We now ensure that 
$S_{\e}^{m\e}[\alpha'] = \nu \circ S_{\e}^{m'\e}[\alpha]$.
To see this consider the diagram 
\begin{equation*}
\begin{tikzcd}
G 
    \ar[rrr,"\rho"]
    \ar[dr,"\alpha"']
    \ar[dd, bend right, "\alpha'", violet]
&&&
S_\e^{m'\e}G
    \ar[dl,"{S_\e^{m'\e}[\alpha]}"]
    \ar[dd, bend left, "{S_\e^{m'\e}[\alpha']}", violet]
\\
&
S_\e^{m'\e}H 
    \ar[r,"\rho"]
    \ar[dl, "\nu"']
    \ar[drr, "\rho"]
& 
S_{2\e}^{2m'\e}H
    \ar[dr, "\nu"]
\\
S_\e^{m\e}H 
    \ar[rrr,"\rho"]
& &&
S_{2\e}^{2m\e}H
\end{tikzcd}
\end{equation*}
where the rightmost triangle is what we need to show commutes. 
The leftmost triangle is the definition of $\alpha'$. 
The top square and the outer face commute by \cref{lem:build_nat_trans}. 
The remaining two triangles in the middle-bottom commute by \cref{lem:triangles}. 
Thus, the rightmost triangle commutes as desired, and the proof that 
$S_{\e}^{m\e}[\beta'] = \nu \circ  S_{\e}^{m'\e}[\beta]$. 
Hence, $\alpha'$ and $\beta'$ constitute an $S_\e^{m'\e}$ interleaving, so $d_I^m \leq d_I^{m'}$.

\vspace{.5ex}
\noindent\textbf{Second inequality.}
For the other direction, we need to show 
\[
d_I^{m'}((G,f), (H,h)) 
\leq \frac{1-m}{1-m'}  d_I^{m}((G,f), (H,h)). 
\]
Assume we have an $\e$-interleaving using $m$ given by 
\begin{align*}
   \alpha\from  &(G,f) \to S_{\e}^{m\e}(H,h)\\
   \beta\from  &(H,h) \to S_{\e}^{m\e}(G,f)
\end{align*} 
We will use these maps to build a $\delta = \frac{1-m}{1-m'}$ interleaving using $S_{\e}^{m'\e}$.
Consider the diagram of \cref{fig:SecondInequalityDiagrams-Half},
where all $\rho$ maps exist as all slopes are less than 1, and the diagram commutes because of \cref{lem:triangles}. 

\begin{figure}%
    \centering
    \includegraphics{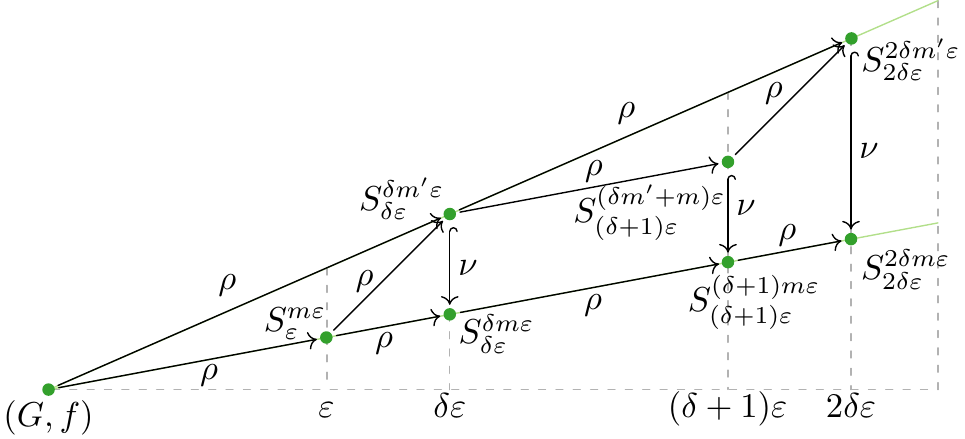}
    \caption{A diagram in the $\e$-$\tau$ plane used in the proof of \cref{thm:EquivalentMetrics}.}
    \label{fig:SecondInequalityDiagrams-Half}
\end{figure}

We now use this diagram to build $\alpha'$ and $\beta'$ as in the first half of the proof. 
Consider the diagram 
\begin{equation}
\label{eq:SecondInequalityDiagram}
\begin{tikzcd}[column sep = huge, row sep = huge]
    &&&&
    S_{2\delta\e}^{2m'\delta\e}
        \ar[dd,"\nu",hook]
    \\
    &
    & 
    S_{\delta \e}^{m'\delta\e} 
        \ar[d,"\nu",hook]
        \ar[r]
        \ar[urr]
        \ar[dddr,dashed,blue,"{ {S_{\delta \e}^{\delta m' \e}[\alpha]}}", very near start]
    &
    S_{(\delta+1)\e}^{(\delta m' + m)\e}
    \ar[d,"\nu",hook]
    \ar[ur]
    \\
    (G,f) 
        \ar[urr]
        \ar[r]
            \ar[dr,blue,"\alpha",very near start ]
    &
    S_\e^{m\e}
        \ar[r]
        \ar[ur]
    & 
    S_{\delta \e}^{m\delta\e} 
        \ar[r]
            \ar[dr, blue, "{S_{\delta \e}^{\delta m \e}[\alpha]}"', very near start]
    &
    S_{(\delta+1)\e}^{m(\delta+1)\e}
        \ar[r]
    &
    S_{2\delta\e}^{2m\delta\e}
    \\
    (H,h) 
        \ar[drr]
        \ar[r]
            \ar[ur,"\beta"',red, very near start, crossing over]
    &
    S_\e^{m\e}
        \ar[r]
        \ar[dr]
    & 
    S_{\delta \e}^{m\delta\e} 
        \ar[r]
        \ar[ur, red, "{ S_{\delta \e}^{\delta m \e}[\beta]  }", very near start]
    &
    S_{(\delta+1)\e}^{m(\delta+1)\e}
        \ar[r]
    &
    S_{2\delta\e}^{2m\delta\e}
    \\
    &
    & 
    S_{\delta \e}^{m'\delta\e} 
        \ar[u,"\nu",hook]
        \ar[r]
        \ar[drr]
        \ar[uuur,dashed,red, "{ { S_{\delta \e}^{\delta m' \e}[\beta]}  }"', very near start]
    &
    S_{(\delta+1)\e}^{(\delta m' + m)\e}
    \ar[u,"\nu",hook]
    \ar[dr]
    \\
    &&&&
    S_{2\delta\e}^{2m'\delta\e}
        \ar[uu,"\nu",hook]
\end{tikzcd}
\end{equation}
where we drop the $(G,f)$ notation on the top half and the $(H,h)$ on the bottom half for ease of reading.
The dashed functions are restrictions of their solid counterparts because $T^\cdot[\cdot]$ is defined as such; i.e.~$S_{\delta \e}^{ m'\delta\e}[\alpha] = T^\tau[S_{\delta \e}^{m \delta \e}[\alpha]]$ for $\tau = (m'-m) \delta \e$. 
Thus, the squares 
\begin{equation*}
\begin{tikzcd}%
    S_{\delta \e}^{m'\delta\e} (G,f)
            \ar[r, blue, "{ { S_{\delta \e}^{\delta m' \e}[\alpha]}}",  dashed]
            \ar[d,hook, "\nu"']
    &     
    S_{(\delta+1)\e}^{(\delta m' + m)\e}(H,h)
            \ar[d,hook, "\nu"]
    &
    S_{\delta \e}^{m'\delta\e} (H,h)
            \ar[r, red, "{ {S_{\delta \e}^{\delta m' \e}[\beta]}}",  dashed]
            \ar[d,hook, "\nu"']
    &     
    S_{(\delta+1)\e}^{(\delta m' + m)\e}(G,f)
            \ar[d,hook, "\nu"]
\\
    S_{\delta \e}^{m\delta\e} (G,f)
            \ar[r, blue, "{S_{\delta \e}^{\delta m \e}[\alpha]}"', ]
    &     
    S_{(\delta+1)\e}^{m(\delta+1)\e}(H,h)
    &
    S_{\delta \e}^{m\delta\e} (H,h)
            \ar[r, red, "{ S_{\delta \e}^{\delta m \e}[\beta]}"', ]
    &     
    S_{(\delta+1)\e}^{m(\delta+1)\e}(G,f)
\end{tikzcd}
\end{equation*}
commute. 
The middle strip commutes because $\alpha$ and $\beta$ constitute an interleaving. 
Thus the diagram of \cref{eq:SecondInequalityDiagram} commutes. 

Define $\color{blue}\alpha': = \rho \alpha $ and $\color{red}\beta':= \rho \beta $ by the compositions
\[
G \xrightarrow{\alpha} S_{\epsilon}^{m\epsilon}H \xrightarrow{\rho} S_{\delta\epsilon}^{m'\delta\epsilon}H 
\qquad
H \xrightarrow{\beta} S_{\epsilon}^{m\epsilon}G \xrightarrow{\rho} S_{\delta\epsilon}^{m'\delta\epsilon}G. 
\]
To ensure that $\alpha'$ and $\beta'$ constitute an interleaving, we need to ensure that 
\begin{equation*}
S_{\delta \e}^{m'\delta\e}[\alpha'] = \rho {S_{\delta \e}^{m'\e\delta }[\alpha]} 
\qquad \text{ and } \qquad 
S_{\delta \e}^{m'\delta\e}[\beta'] = \rho {S_{\delta \e}^{m'\e\delta }[\beta]}. 
\end{equation*}
We only check the first because the second is symmetric. 
Consider the following diagram
\begin{equation*}
\begin{tikzcd}
    (G,f) 
        \ar[r,"\rho"]
        \ar[rr, bend left, "\rho"]
        \ar[d, blue, "\alpha"]
        \ar[ddd, violet, "\alpha'"', bend right = 75]
    & S_{\delta \e}^{m\delta \e}(G,f)
        \ar[d, blue, "{S_{\delta \e}^{\delta m \e}[\alpha]}"]
    & S_{\delta \e}^{m'\delta \e}(G,f)
        \ar[l, hook, "\nu"]
        \ar[d, blue, "{S_{\delta \e}^{\delta m \e}[\alpha]}"]
        \ar[ddd, violet, "{S_{\delta \e}^{m'\delta \e}[\alpha']}", bend left = 75]
    \\
    S_{\e}^{m\e}(H,h) 
        \ar[r, "\rho"]
        \ar[dd, "\rho"]
        \ar[dr, "\rho"]
    & S_{(\delta + 1)\e}^{(\delta + 1)m\e}(H,h) 
    & S_{(\delta + 1)\e}^{(\delta m'  + m)\e}(H,h) 
        \ar[l, hook, "\nu"]
        \ar[dd, "\rho"]
    \\
    & S_{\delta \e}^{m\delta \e}(H,h)
        \ar[u, "\rho"]
    \\
    S_{\delta \e}^{m'\delta \e}(H,h)
        \ar[uurr, "\rho", bend right]
        \ar[ur, hook, "\nu"]
        \ar[rr, "\rho", bend right]
    & 
    & S_{2\delta \e}^{2m'\delta \e}(H,h)
\end{tikzcd}
\end{equation*}
which is the relevant portion of the 3D diagram of \cref{fig:SecondInequality3d}. 
We need to ensure the right most triangle commutes. 
We can check that all faces of the diagram only involving $\rho$ and $\nu$ maps commute by \cref{lem:triangles,lem:squareCommutes}. 
The left-most triangle commutes by the definition of $\alpha'$. 
the square whose upper left corner is $(G,f)$ and the outside face commute by \cref{lem:build_nat_trans}. 
The remaining square commutes by \cref{prop:mapRestrictsTruncation}. 
Thus, the rightmost triangle commutes as desired.

Finally, this combined with  \cref{eq:SecondInequalityDiagram} implies that 
\begin{equation*}
\begin{tikzcd}[column sep = huge, row sep = huge]
(G,f) 
    \ar[r, "\rho"]
    \ar[dr, "\alpha'"', very near start]
&
S_{\delta \e}^{m'\delta \e} (G,f) 
    \ar[r, "\rho"]
    \ar[dr, "{S_{\delta \e}^{m'\delta \e}[\alpha']}"', very near start]
& 
S_{2\delta \e}^{2m'\delta \e} (G,f) 
\\
(H,h) 
    \ar[r, "\rho"]
    \ar[ur, "\beta'", very near start]
&
S_{\delta \e}^{m'\delta \e} (H,h) 
    \ar[r, "\rho"]
    \ar[ur, "{S_{\delta \e}^{m'\delta \e}[\beta']}", very near start]
& 
S_{2\delta \e}^{2m'\delta \e} (H,h) 
\end{tikzcd}
\end{equation*}
commutes.
So, since $\alpha'$ and $\beta'$ constitute an interleaving with $S_{\delta \e}^{m'\delta \e}$, we have that 
$
d_I^{m'}\leq \frac{1-m}{1-m'}  d_I^{m} 
$,
concluding the proof.
\end{proof}

\begin{figure}
    \centering
    \includegraphics{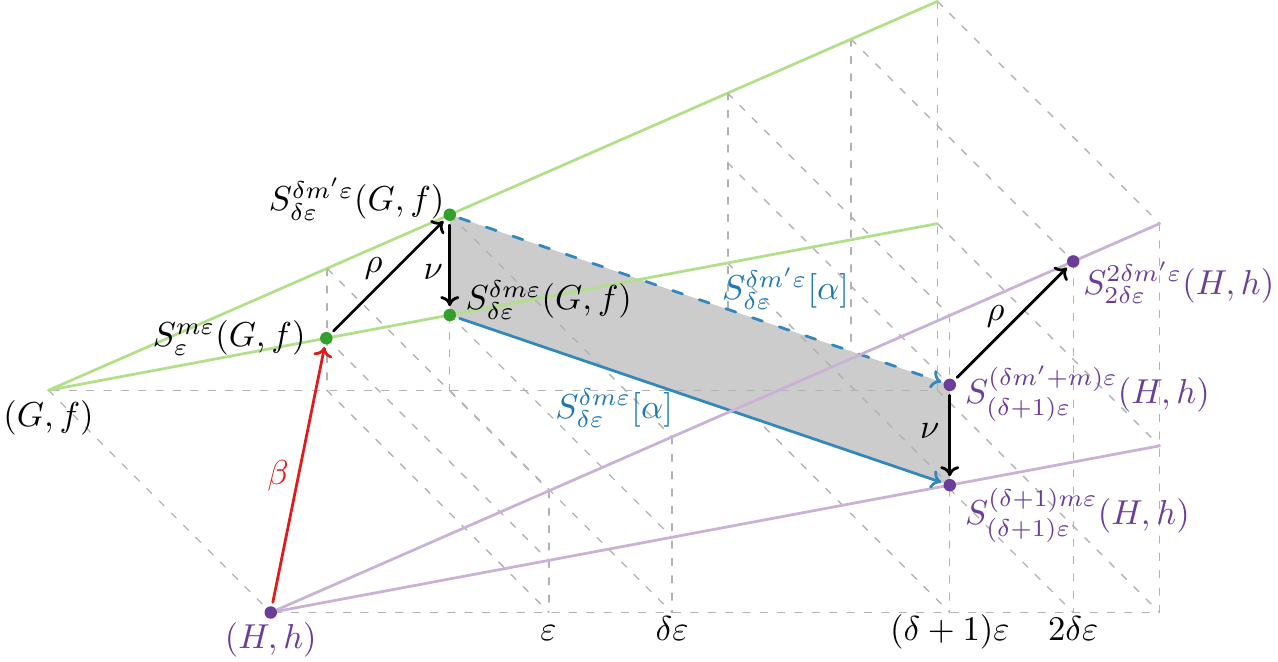}
    \caption{Another commutative diagram used in the proof of \cref{thm:EquivalentMetrics}.
    }
    \label{fig:SecondInequality3d}
\end{figure}
\section{Conclusion and discussion}
\label{sec:Conclusions}

Our primary aim has been to introduce the concept of truncated smoothing and establish properties and connections of this operation.
We have several reasons for considering this as a similarity measure on Reeb graphs.
First, it has potential for providing bounds for the stable interleaving distance via the equivalence of metrics.
Second, we came to this definition while investigating drawings of Reeb graphs and when  planarity is achievable (that, is whether a Reeb graph has a planar drawing which respects the function in the $y$-coordinate).
In a subsequent paper, we will show that while traditional smoothing does not maintain planarity, the truncated smoothing does for $\e \leq \tau \leq 2\e$.

We suspect additional potential applications of truncated smoothing in comparing geometric or planar graphs, since it simplifies the graph's topology (via smoothing) without suffering from extensive expansion of the co-domain or destruction of desirable combinatorial properties like level planarity.
Truncated smoothing also allows for interesting manipulation of the extended persistence diagram of the Reeb graph and computation of morphs between Reeb graphs; again, we defer details to future work, as a full classification of that manipulation is necessary.

We suspect that the loss of stability discussed in \cref{ss:propertiesOfMetric} is not as dire as it seems.
If nothing else, \cref{prop:stabilityWIthConstant} gives a Lipschitz constant in advance dependent only on $m$, so
it is possible to upper bound the difference using these new interleaving distances.

While we are able to connect our collection of metrics to several Reeb metrics (\cref{Prop:FD}), we have not investigated further connections to other metrics as of yet.
One particularly interesting future direction is to determine whether this collection of metrics provides results related to strong equivalence between the interleaving distance and the universal distance of \cite{Bauer2020a}.
Perhaps this broader collection of metrics will help to provide stronger bounds between the various metrics on Reeb graphs, since strong equivalence with one is strong equivalence with all.

\printbibliography

\end{document}